\newtheorem{theorem}{Theorem}
\newtheorem{lemma}[theorem]{Lemma}
\newtheorem{proposition}[theorem]{Proposition}
\newtheorem{definition}[theorem]{Definition}
\newtheorem{problem}{Problem}
\newcommand{\bs}{\boldsymbol}
\Crefname{appsec}{Appendix}{Appendices}
\begin{document}

\title{Quantum random power method for ground state computation}

\author{Taehee Ko}
\email{kthmomo@kias.re.kr}
\affiliation{School of Computational Sciences, Korea Institute for Advanced Study}

\author{Hyowon Park}
\email{hyowon@uic.edu}
\affiliation{Department of Physics, University of Illinois at Chicago}

\author{Sangkook Choi}%
\email{sangkookchoi@kias.re.kr}
\affiliation{School of Computational Sciences, Korea Institute for Advanced Study}

\begin{abstract}
 
  We present a quantum-classical hybrid random power method that approximates a ground state of a Hamiltonian. The quantum part of our method computes a fixed number of elements of a Hamiltonian-matrix polynomial via quantum polynomial filtering techniques with either  Hamiltonian simulation or block encoding. The use of the techniques provides a computational advantage that may not be achieved classically in terms of the degree of the polynomial.  The classical part of our method is a randomized iterative algorithm that takes as input the matrix elements computed from the quantum part and outputs an approximation of ground state of the Hamiltonian. We prove that with probability one, our method converges to an approximation of a ground state of the Hamiltonian, requiring a constant scaling of the per-iteration classical complexity. The required quantum circuit depth is independent of the initial overlap and has no or a square-root dependence on the spectral gap. The iteration complexity scales linearly as the dimension of the Hilbert space when the quantum polynomial filtering corresponds to a sparse matrix. We numerically validate this sparsity condition for well-known model Hamiltonians. We also present a lower bound of the fidelity, which depends on the magnitude of noise  occurring from quantum computation regardless of its charateristics, if it is smaller than a critical value. Several numerical experiments demonstrate that our method provides a good approximation of ground state in the presence of systematic and/or sampling noise. 
\end{abstract}
\maketitle

\section{Introduction}

%\tk{What is the gap? circuit depth and guarantee, coverletter tone} 
Quantum computing algorithms are shown as a promising means for various problems in theoretical sciences \cite{o2016scalable,veis2010quantum,mcardle2020quantum,o2019quantum,lewis2024improved}. Among those, the computation of ground state and the corresponding energy is of significant importance in physics, chemistry, and material science. Despite the importance of the problem, it is not clear whether there exists an efficient algorithm even when using the power of quantum computations (i.e. the QMA-hard result  \cite{kitaev2002classical,aharonov2009power,kempe2006complexity,oliveira2005complexity}). Nevertheless, under additional assumptions, algorithms that exploit quantum resource or are inspired conceptually have been demonstrated for ground-state energy estimation. %\schoi{demonstrated what??}.} 
For instance, quantum algorithms for ground-state energy estimation were proposed based on a Krylov subspace method, given access to Hamiltonian simulation or block encoding \cite{kirby2023exact,seki2021quantum,lee2023sampling,motta2020determining}. For the same task, the recent work \cite{gharibian2022dequantizing} proposed dequantizing algorithms, assuming that the initial overlap with ground state is sufficiently large.

Beyond the estimation of ground state energy, quantum algorithms for preparing ground state have been developed based on two popular frameworks: the quantum signal processing (QSP) or quantum singular value transformation (QSVT) \cite{low2017optimal,GSLW19} and the variational quantum eigensolver (VQE) \cite{peruzzo2014variational,grimsley2019adaptive,mcclean2016theory,larocca2023theory}. In the former case, state of the art includes polynomial-filtering-based algorithms by Lin and his colleagues \cite{lin2020near,dong2022ground}. In the regime of early fault-tolerant quantum computing, those algorithms prepare a ground state with rigorous guarantee and a good initial state asssumed. On the other hand, the VQE is more suitable on near-term quantum devices due to its variational structure with a choice of ansatz. However, no convergence guarantee is yet rigorously found, except results on the trainability of VQE \cite{larocca2023theory,ragone2024lie}.

The ground state problem is viewed as a constrained optimization problem. Due to the exponential growth of the dimension of the Hilbert space, the problem of practical interest is considered as a large-scale optimization task. In classical algorithms, the notion of \emph{randomization} has been employed  for large-scale optimizations as an effective algorithmic technique. Examples are the Karzmacz method for solving a linear system \cite{strohmer2009randomized}, quantum Monte Carlo (QMC) for ground-state problem \cite{becca2017quantum,ceperley1986quantum,lu2020full}, the stochastic gradient descent for training machine learning models \cite{bottou2018optimization} and the principal component analysis (PCA) \cite{shamir2016convergence}, and the random coordinate descent for convex optimization \cite{nesterov2012efficiency}. Recently, the idea of using randomization has also found useful for quantum computing applications, such as  optimization of parameterized quantum circuits \cite{ding2023random,sweke2020stochastic}, Hamiltonian simulation \cite{campbell2019random,childs2019faster} and extracting the properteis of the ground states and the Gibbs states \cite{wang2024qubit}. The common feature of randomized methods is a trade-off between  computational resources and the sampling cost. Interestingly, it is often observed that  the randomized approach is more efficient than the deterministic counterpart theoretically and empirically.

%\tk{Complexity, guarantee at the beginning for clear messege, the first randomized version of power method}
In this paper, we present the first quantum version of randomized power method that yields an approximation of ground state as a classical vector. Our algorithm involves two main subroutines: the estimation of elements of a matrix polynomial of the Hamiltonian using  quantum computing techniques and a  randomized iteration based on the power method on a classical computer. In the former subroutine, a fixed number of elements of a matrix polynomial of the Hamiltonian are sampled using a quantum polynomial filtering technique with either a linear combination of unitaries (LCU) representation based on the Fourier-series approximation or the QSVT techniques \cite{GSLW19,dong2021efficient}.  Then, the sampled matrix elements are used to construct a sparse stochastic gradient and update the iteration, which we call random power method. The workflow of our algorithm is illustrated in \cref{fig:QRPM}. 

%In contrast to the conventional power method, our algorithm requires only a constant scaling of per-iteration classical time and memory cost at each iteration, since a sparse stochastic gradient is used to update the iteration. %\schoi{what is iteration vector?}. 
%With the application of QSVT, the dependence of spectral gap is  quadratically improved in terms of quantum complexity.  Compared to QSP-based methods \cite{dong2021efficient}, the circuit depth or query depth of our algorithm has no dependence on the initial overlap, as the circuit is used for sampling matrix elements not preparing a ground state. In addition, our algorithm based on QSVT requires a circuit depth quadratically improved in terms of spectral gap, which is due to a relaxed condition of the quantum polynomial filtering. Importantly, we prove that the proposed method converges to an approximation of the ground state with probability one, whose accuracy can be systematically improved.}  
\begin{figure}[htbp]
    \centering
    \includegraphics[scale=0.6]{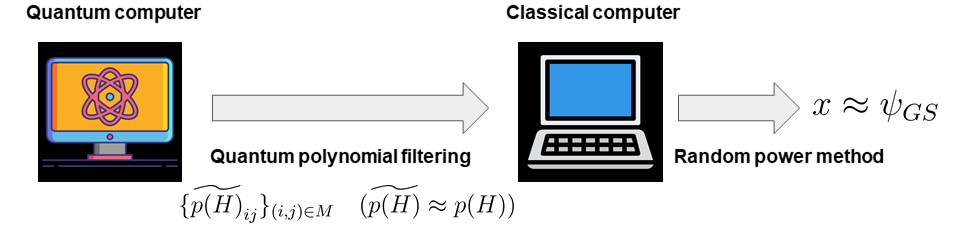}
    \caption{An illustration of quantum random power method (QRPM). At every iteration, a set of indices, $M$, is sampled randomly, and then the corresponding elements of a matrix polynomial filter, $\{\widetilde{p(H)}_{ij}\}_{(i,j)\in M}$, are computed from a quantum computer. We call this quantum polynomial filtering. In practice, matrix elements obtained from the quantum polynomial filtering may be biased due to various types of noise (e.g. systematic error from the Trotter formula, sampling noise from a Hadamard test, and so on). This implies that we essentially estimate elements of a perturbation of the true matrix polynomial, $p(H)$, from a quantum computer. Next, we use the matrix elements as input for a classical randomized algorithm, which we call random power method. After a sufficient number of iterations, the method yields a classical vector, $\bm x$, close to a ground state of $H$, $\psi_{GS}$.}
    \label{fig:QRPM}
\end{figure}

\section*{Contribution}
This paper focuses on developing a hybrid randomized version of the power method for ground-state computation. On the algorithmic and theoretical sides, the main contributions of this work are summarized as follows,
\begin{itemize}

    \item We show that our algorithm converges to a ground state of a given Hamiltonian $H\in\mathbb{C}^{N\times N}$ to a target precision $\epsilon$ with probability one  (\cref{thm: convergence2}).  Specifically, for a desired precision $\epsilon$, if the initial guess $\bm x_1$ has a non-zero overlap $\gamma$ with a ground state of $H$, the algorithm outputs an approximation of a ground state  whose fidelity with the ground state is greater than $1-\epsilon$. To the best of our knowledge, this is the first theoretical result that guarantees an approximation of ground state with probability \emph{one}.  For our analysis, we employ a submartingale approach to quantify the convergence rate of  fidelity. The convergence analysis implies that the number of iterations scales as $\mathcal{O}(\frac{Ns^2}{\epsilon^2}\log\frac{1}{\gamma})$, when  the matrix polynomial from quantum polynomial filtering is $s$-sparse. %and $H$ comprises a polynomial number of $n$-qubit Pauli strings.  This sparsity assumption is satisfied for typcial physical and chemical model Hamiltonians.  }
    
    %\schoi{is this statement true?}\tk{Yes. The recent results in \cite{lu2020full,dong2022ground,xu2018accelerated} show probability $1-\delta$ results.} Importantly, the number of iterations scales as at most $\mathcal{O}(\frac{Ns^2}{\epsilon^2}\log\frac{1}{\gamma})$, when  $H$ comprises a polynomial number of Pauli strings, as typical m  $s$-sparse Hermitian matrix (e.g. a linear combination of the Pauli strings for many systems in quantum chemisty and physics)}. 

    \item We propose a simple and efficient gradient estimator for updating the iteration in our method, which requires a $\mathcal{O}(1)$ classical time and a polylogarithmic quantum circuit complexity with system size.

    \item We present a lower bound of the fidelity of an approximation of ground state with the true one in \cref{lem: overlap1}. The lower bound depends linearly on the magnitude of noise between their corresponding matrices, if it is smaller than some critical value. The value is formulated by the spectral gap of the true matrix.  

    \item We provide several numerical results demonstrating that our method outputs a fairly good approximation of a ground state and supporting our convergence analysis. A part of the results also shows that the noise condition in \cref{lem: overlap1} holds in many cases.  

    \item  The proposed method can be straightforwardly generalized to the task of approximating an excited state and the corresponding eigenvalue by changing the quantum polynomial filtering accordingly.

\end{itemize}

\section*{Comparison to existing algorithms (changed from related work)}

\cref{table: precomparison} summarizes the overall comparison of our quantum random power method ( abbreviated as QRPM) with classical and quantum algorithms. In QRPM, we assume that $p(H)$ is a sparse matrix and we sample non-zero matrix elements using the sparsity. As will be discussed in \cref{subsec: sparsity}, this assumption is reasonable for well-known model Hamiltonians.    

Compared to their deterministic counterparts, classical randomized algorithms have found efficient in finding an eigenvector corresponding to the smallest or the largest eigenvalue. For example, QMC algorithms \cite{becca2017quantum,ceperley1986quantum} involve adaptive rules of iteration update with randomized techniques. These algorithms as analyzed in \cite{lu2020full} have an inverse-polynomial scaling dependence on the initial overlap and at least a square dependence on the spectral gap in terms of the iteration complexity. However, our algorithm with LCU has a logarithmic scaling of the initial overlap dependence with a square dependence on the spectral gap as shown in \cref{table: precomparison}.  In machine learning applications,  a randomized version of the power method for the PCA is proven to converge \cite{shamir2016convergence,xu2018accelerated}. Compared to the stochastic power methods \cite{xu2018accelerated}, our method with LCU achieves a $\mathcal{O}(\Delta^{-2})$ scaling of the iteration complexity in terms of spectral gap, which is comparable to their methods, but with a constant scaling of the per-iteration classical time while that of their methods depend linearly on the dimension. A clearer comparison may be made with the classical power method when it is applied to quantum sparse Hamiltonians as represented by a sum of $\mathcal{O}(\text{poly}(n))$ Pauli strings.  As in \cref{table: precomparison}, the classical power method requires per-iteration complexity scaling linearly as the size of Hamiltonian at each iteration, while our algorithm requires only a constant scaling of per-iteration complexity. Nevertheless, classical power method and our algorithm have a comparable scaling of the iteration complexity in terms of the dimension $N$. In addition, our algorithm with  QSVT \cite{GSLW19} or QETU \cite{dong2021efficient} has a quadratically improved dependence of the spectral gap in terms of quantum complexity, while with LCU, it has a quadratically-worse dependence of the spectral gap than the classical power method.

In quantum algorithms, the task of ground state preparation is to build a quantum circuit that drives an initial state to a ground state approximately, such as VQE \cite{peruzzo2014variational,grimsley2019adaptive} and QSP-based methods  \cite{GSLW19,lin2020near,dong2022ground}. The convergence of VQE is yet clearly understood. In the latter case, those algorithms \cite{lin2020near,dong2022ground} are theoretically analyzed in terms of query complexity and query depth (e.g. the query complexity for Hamiltonian simulation or a block encoding of $H$). In particular, the algorithms in \cite{dong2022ground} involve an inverse polynomial scaling of the initial overlap, $\mathcal{O}(\frac{1}{\gamma})$ or $\mathcal{O}(\frac{1}{\gamma^2})$, in the query complexity, which is similar to QPE. In general, preparing a good initial state (e.g. $\gamma=\mathcal{O}(\frac{1}{\text{poly}(n)})$) may be hard in quantum chemistry theoretically \cite{schuch2009computational} and numerically  \cite{fomichev2024initial,mcclean2014exploiting} as pointed out in \cite{gharibian2022dequantizing}.  In this viewpoint, in the general case where $\gamma=\mathcal{O}(b^{-n})$ for some $b>1$, the query depth of the algorithms \cite{dong2022ground} scales exponentially, while that of our algorithm requires a constant scaling of query depth in terms of $\gamma$, although both algorithms require exponential scalings of the query complexity. Additionally, the algorithms \cite{dong2022ground} have a $\mathcal{O}(\Delta^{-1})$ scaling of query depth, while our algorithm with QETU requires a $\mathcal{O}(\Delta^{-\frac{1}{2}})$ scaling depth. This is because
the filtering polynomial in \cite{dong2022ground} needs to seperate the ground state energy from the other excited-state energies, whereas in our algorithm, a filtering polynomial allows for some contributions of the excited-state energies, which is a more relaxed filtering condition.

%The main subroutine in our algorithm is to sample elements of a matrix polynoial using quantum resource. However, this may be done with some classical randomized approach. If then, however, a pre-processing step for constructing a sampling distribution is required, whose complexity would scale as $\mathcal{O}((\text{poly}(n))^{\frac{1}{\sqrt{\Delta}}})$ for physically-inspired Hamiltonians, an exponential scaling cost dependent on the spectral gap. For example, as discussed in \cite[Page 29]{wang2024qubit}, the pre-processing involves a step of computing weights and the cost of the pre-processing step scales as $\mathcal{O}((\text{poly}(n))^\ell)$ for commonly-used Hamiltonians whose have $\text{poly}(n)$ Pauli strings, where $\ell=\mathcal{O}(\frac{1}{\sqrt{\Delta}})$. In addition to this, a sum of $\mathcal{O}((\text{poly}(n))^\ell)$ terms should be estimated with some specialized classical sampling technique to achieve a reasonable accuracy. However, there is no such concern if sampling matrix elements is done with a quantum computer.  

%There are relevant algorithms that exploit the randomness  (see \cite{lin2022heisenberg,wang2024qubit,gharibian2022dequantizing}) but those algorithms aim to compute properties of a ground state not itself. 

\begin{table}[htbp]

        \begin{tabular}{| c | c| c| c| c| c|}
        \hline 
        Method  & CPM   & \cite[Theorem 6]{dong2022ground}  &  \cite[Theorem 11]{dong2022ground}  & QRPM(LCU) & QRPM(QETU)  \\
  \hline Per-iteration classical time & $\mathcal{\tilde{O}}(N)$ & NA & NA & $\mathcal{O}(\frac{1}{\Delta^2})$ & $\mathcal{O}(1)$  \\
  \hline Per-iteration memory 
  & $\mathcal{O}(N)$ & NA & NA & $\mathcal{O}(1)$ & $\mathcal{O}(1)$  \\
  \hline Iteration number & $\mathcal{O}(\frac{1}{\Delta}\log\frac{1}{\gamma\epsilon})$ &  NA & NA & $\mathcal{\tilde{O}}(\frac{N}{\epsilon^{2}}\log \frac{1}{\gamma})$ & $\mathcal{\tilde{O}}(\frac{N}{\epsilon^{2}}\log \frac{1}{\gamma})$  \\
  \hline Iteration complexity & $\mathcal{\tilde{O}}(\frac{N}{\Delta}\log\frac{1}{\gamma\epsilon})$ & NA & NA & $\mathcal{\tilde{O}}(\frac{N}{\Delta^2\epsilon^{2}}\log \frac{1}{\gamma})$ & $\mathcal{\tilde{O}}(\frac{N}{\epsilon^{2}}\log \frac{1}{\gamma})$ \\
  \hline Query complexity & NA & $\widetilde{\mathcal{O}}(\frac{1}{\gamma^2\Delta}\log\frac{1}{\gamma\epsilon})$ & $\widetilde{\mathcal{O}}(\frac{1}{\gamma\Delta}\log\frac{1}{\gamma\epsilon})$ & $\mathcal{\tilde{O}}(\frac{N}{\epsilon^{2}\Delta^2}\log \frac{1}{\gamma})$ & $\mathcal{\tilde{O}}(\frac{N}{\epsilon^{2}\sqrt{\Delta}}\log \frac{1}{\gamma})$ \\
  \hline Query depth & NA  & $\mathcal{O}(\frac{1}{\Delta}\log\frac{1}{\gamma\epsilon})$ & $\mathcal{O}(\frac{1}{\gamma\Delta}\log\frac{1}{\epsilon})$ & $\mathcal{O}(1)$ & $\mathcal{O}(\frac{1}{\sqrt{\Delta}})$ \\
  \hline Success probability & 1  & $\geq\frac{2}{3}$ &  $\geq\frac{2}{3}$ & 1 & 1\\
  \hline Ancilla qubit number & NA  & $1$ & $2$ & $1$  
   &  $2$  \\
  \hline
   \end{tabular}
    
     \caption{Comparison of our algorithms (QRPM) to the classical power method, abbreviated as CPM, and two algorithms in \cite{dong2022ground} for ground state computation or preparation in terms of classical complexity such as per-iteration classical time, per-iteration memory, iteration number, iteration complexity (per-iteration classical time $\times$ iteration number) and quantum complexity such as query complexity and query depth if possible (if not, indicated as NA). Additionally, the algorithms are evaluated in terms of success probability and the number of qubits required. $\gamma$ is the overlap between the initial guess and the ground state, $\Delta$ is a lower bound of the spectral gap, and $1-\epsilon$ is the target fidelity.  For  comparison of our algorithms with the ones in \cite{dong2022ground}, we assume as input model the Hamiltonian simulation in this table.   }
     \label{table: precomparison}
\end{table}

\section*{Organization}
The rest of the paper is organized as follows. We  present our main result and algorithmic components for solving the ground state problem in \cref{sec: prem}. \cref{sec: algorithm} shows convergence guarantee of our algorithm, with emphasis on convergence in probability and the implication to the iteration complexity. Additionally, we provide a perturbation theorem that elucidates the effect of biases occuring from the quantum filtering technique on the accuracy of approximate ground state.  
In \cref{sec: numercial result}, several numerical experiments validate the performance of our algorithm for approximating ground state for different model Hamiltonians. In the appendixes,
we supplement analytical details and all proofs
thereof.

\section{Preliminaries}\label{sec: prem}
In this paper, $H$ denotes the Hamiltonian matrix. We use $\norm{\cdot}$ for the 2-norm of a vector of a matrix.  The state $\ket{\bm x}$ in the braket notation is always a unit vector, $\norm{\bm x}=1$, and a vector without the notation, $\bm x$, is not necessarily a unit vector. This notation will be used mainly in analytic results in the appendixes. For the inner product of two vectors, if they are unit vectors, we denote it as $\bra{\bm v}\ket{\bm w}$. Otherwise, we use $(\bm v,\bm w)$, the conventional notation in linear algebra.

This paper focuses on the approximation of a ground state of a Hamiltonian to the desired precision $\epsilon$ using quantum and classical resources.  Specifically, the problem is set up as follows
\begin{problem}[Finding an approximation of a ground state with probability one]\label{problem}For a given precision $\epsilon>0$ and a given Hamiltonian $H\in\mathbb{C}^{N\times N}$, design a randomized hybrid algorithm that outputs a classical description of state vector $\ket{\bm x}$ satisfying 
\begin{equation}\label{eq: fidelity}
   \sqrt{\sum_{j=1}^g\abs{\bra{ \psi_{GS,j}}\ket{\bs x}}^2}\geq 1-\epsilon\quad \text{with probability }1.
\end{equation}Here the set, $\{\ket{\psi_{GS,j}}\}_{j=1}^g$, is the collection of degenerate ground states of $H$, all of which are solutions of the constrained optimization problem, 
\begin{equation}\label{GSproblem}
    \min_{\norm{\bs x}=1}\bra{\bs x}H\ket{\bs x}.
\end{equation}  
\end{problem}The quantity \eqref{eq: fidelity} is the fidelity of the iterate $\bm x$ with the subspace of ground states of $H$, which measures how much close the iterate is to some ground state of $H$. In the case of non-degnerate ground state, the quantity \eqref{eq: fidelity} becomes the usual fidelity $\abs{\bra{\psi_{GS}}\ket{\bm x}}$. More generally, \eqref{eq: fidelity} covers the case of degenerate ground states when the eigenvalues of $H$ are ordered as follows
\begin{equation}\label{eigvals}
    \lambda_1=\cdots=\lambda_g<\lambda_{g+1}\leq\cdots\lambda_N
\end{equation}for some $g\geq 2$. For the analysis, we first work with the case of non-degenerate ground state ($g =1$), and extend it to the case of degenerate ones ($g\geq 2$).

\subsection{Matrix-element estimation from  quantum polynomial filtering}\label{sec: matrix-element estimation}

A main component in our algorithm is the estimation of the matrix element of the Hamiltonian matrix polynomial $p(H)$, for example, for any indices $i,j\in[N]$,
\begin{equation}
    \bra{ i}p(H)\ket{ j},
\end{equation}where $\ket{ i}$ denotes the computational basis element corresponding to the integer $i$, and the same for $\ket{ j}$. To estimate this quantity, one can consider two approaches: one based on the Hamiltonian evolution \cite{lin2022heisenberg} and the other based on the quantum singular value transformation (QSVT) \cite{GSLW19}. 

Hamiltonian evolution is built on the ability to implement $e^{iHt}$. We estimate a matrix polynomial of $H$ using a Fourier-series approximation. By the property of the Fourier expansion, we obtain the following approximation for any smooth function $p(x)$ on $[-\pi,\pi]$ and some sufficiently large $M$,
\begin{equation}
    p(x) \approx \sum_{m=-M}^Mc_me^{imx}
\end{equation}The coefficient $c_m$ depends on the choice of a kernel \cite{weisse2006kernel}. In our applciations, we used the Fe\'{j}er kernel as discussed in \cref{sec: fejer}. Using this approximation and Hamiltonian simulation, we obtain access to a matrix polynomial of $H$ approximately,
\begin{equation}
    p(H) \approx \sum_{m=-M}^Mc_me^{iHm}.
\end{equation}Especially, estimating matrix elements of $p(H)$ and the corresponding elements of the exponentials of $H$ are equivalent as follows,
\begin{equation}\label{eq: esimatemat}
    \bra{\bm i}p(H)\ket{\bm j} \approx 
    \sum_{m=-M}^Mc_m\bra{\bm i}e^{iHm}\ket{\bm j}.
\end{equation}
To compute this matrix element, we use the Hadamard test in \cref{fig:circuit}. 

Alternatively, one can use  the QSVT with a block encoding of $H$ \cite{GSLW19} or the quantum eigenvalue transformation of unitaries (QETU) \cite{dong2021efficient} with Hamiltonian simulation to compute matrix elements. Similar to the above LCU approach \eqref{eq: esimatemat}, we can estimate matrix elements using a Hadamard test in \cref{fig:circuit}. For example, 
if we are given a block encoding of $H$, we construct a block encoding of $p(H)$ using the QSVT and estimate matrix elements by setting $U=U_{p(H)}$, a block encoding of $p(H)$, in \cref{fig:circuit} with additional ancilla qubits. This can be similarly done with the QETU. We discuss more technical details in \cref{sec: QSP}. 

Importantly, the use of quantum polynomial filtering allows us to estimate matrix elements of $p(H)$ exponentially efficient than classical matrix element estimation with respect to the degree of polynomial $p(x)$, $\ell$. For instance, if there are $\mathcal{O}(\text{poly}(n))$ Pauli strings in $H$, we see that $p(H)$ is represented by  $\mathcal{O}(\text{poly}(n)^\ell)$ Pauli strings as the worst-case scenario. Then, estimating a matrix element $ \bra{\bm i}p(H)\ket{\bm j}$ requires a $\mathcal{O}(\text{poly}(n)^\ell)$ scaling of computational cost that is spent for estimating the $\mathcal{O}(\text{poly}(n)^\ell)$ coefficients of the Pauli strings in $p(H)$. If we consider to do the same task with matrix-vector multiplications exploiting the sparsity of $H$, we need a similar scaling of computational cost. This task may be done classically in a randomized manner \cite[Appendix B7]{wang2024qubit}, but still a $\mathcal{O}(\text{poly}(n)^\ell)$ scaling of computational cost is required to compute the coefficient weights. Additionally, there is a post-processing that requires a system-dependent classical time for estimating matrix elements at each iteration. However, the exponential dependence of degree $\ell$ is significantly improved as a polynomial dependence and no post-processing is required unlike classical randomized matrix element estimation, if we use quantum computing techniques such as the LCU approach \eqref{eq: esimatemat} or the QSVT and the QETU.  Specifically, in the case of LCU approach, $M=\mathcal{O}(\ell^2)$ if we use the Fe\'{j}er kernel as deduced from \cref{lem: fejerkernel}, and therefore we require $\mathcal{O}(\ell^2)$ query complexity of Hamiltonian simulation for estimating matrix elements of $p(H)$. Moreover, the query complexity is further reduced to a linear scaling $\mathcal{O}(\ell)$ in the case of QSVT \cite{GSLW19} or QETU \cite{dong2022ground} due to their circuit structures that involve a query depth scaling linearly with the degree of polynomial.

\begin{figure}[htbp]
\begin{center}
\text{
\Qcircuit @C=0.9em @R=1.4em {
& \lstick{\ket{0}}  & \gate{\text{H}} &\ctrl{1} & \gate{W} &\gate{\text{H}} & \qw & \meter  & \qw \\
& \lstick{\ket{\bm 0}} & \qw{/}  & \gate{U_i^{\dagger}UU_j} & \qw & \qw &\qw & \qw  & \qw }} 
\end{center}
\caption{ Estimating matrix elements using the Hadamard test given an access to Hamiltonian evolution $e^{iHt}$ or a block encoding of $H$, $U_H$. For \eqref{eq: esimatemat}, $U=e^{iHt}$ and for the QSP techniques \cite{GSLW19,dong2021efficient}, $U$ encodes the matrix polynomial $p(H)$. H is the Hadamard gate and $U_i$ ($U_j$) applies the tensor product of single-qubit X gates corresponding to the computational basis ket $\ket{\bm i}$ ($\ket{\bm j}$). $W$ is either $I$ or $S^{\dagger}$ where $S$ is the phase gate.}
    \label{fig:circuit}
\end{figure} 

For our algorithm, we let $p(x)$ be a Chebyshev polynomial, which was motivated by the Chebyshev filtering method \cite{zhou2006self}. 
\begin{figure}[htbp]
    \centering
    \includegraphics[scale=0.3]{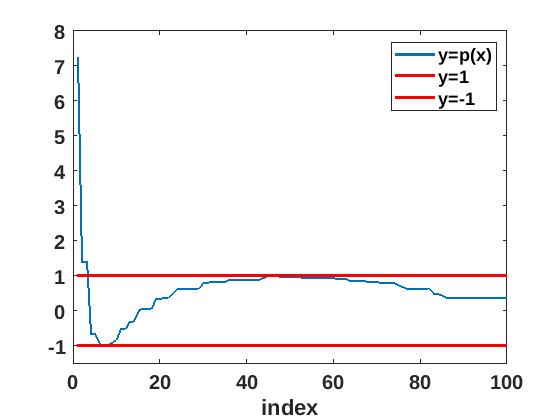}
    \caption{An illustration of the Chebyshev filtering applied to a Hamiltonian matrix $H$. The x-axis labels the index of the eigenvalue of $H$, $\lambda$, in increasing order (e.g. the first index corresponds to the ground-state energy), and the y-axis labels the value of $p(\lambda)$ corresponding to the eigenvalue $\lambda$. }
    \label{fig:cf}
\end{figure}
This method requires a priori bounds to be properly implemented. For example, we need $\lambda_{lb}$ and $\lambda_{ub}$ such that
\begin{equation}\label{bounds}
    \lambda_1<\lambda_{lb}<\lambda_N<\lambda_{ub},
\end{equation}with which the given Hamiltonian is linearly transformed as
\begin{equation}\label{Htilde}
   \frac{2}{\lambda_{ub}-\lambda_{lb}}(H-\lambda_{ub})+I.
\end{equation}Applying this transformation and a Chebyshev polynomial, $p(x)$, to $H$  amplifies the contributions of small eigenvalues but dampens those of large eigenvalues of $H$ as in \cref{fig:cf}, say,
\begin{equation}\label{filtering}
    |p(\lambda_j)|=\begin{cases}
        > 1,\quad \forall j\leq k\\
        < \epsilon,\quad \forall j>k.
    \end{cases}
\end{equation}
Notice that the high-lying eigenvalues in $[\lambda_{lb},\lambda_{ub}]$ are mapped into $[-1,1]$ where the Chebyshev polynomials value in $[-1,1]$, while the low-lying eigenvalues lying in $[\lambda_1,\lambda_{lb})$ are located outside $[-1,1]$. Due to the fact that the Chebyshev polynomial is amplified outside $[-1,1]$,  the contributions of low-lying eigenvalues of $p(H)$ increases rapidly.  

A search of a priori bounds is an important step in the context of filtering algorithms \cite{zhou2006self,lin2020near,lin2022heisenberg}. In classical algorithms, the task can be completed with a combination of Rayleigh quotient computations and Lanczos algorithm as demonstrated in \cite{zhou2006self}. But this may yield an exponential arithmetic operation cost with system size. Alternatively, one may use the binary amplitude estimation with a block encoding of $H$ provided \cite{lin2020near} or with an access to the Hamiltonian evolution $U=e^{iHt}$ \cite{lin2022heisenberg}.

\subsection{Random power method}
Once we have access to the matrix polynomial $p(H)$ as discussed above, we build an iterative algorithm in which a gradient estimator is constructed from the elements of $p(H)$. To do so, we propose an estimator for the matrix-vector product $H\bs x$ by borrowing the ideas of randomized coordinate approach \cite{nesterov2012efficiency,ko2023implementation} and the stochastic gradient descent \cite{bottou2018optimization}. A more precise definition is found in \eqref{unbiasedest}. For simple discussion, we look at the property that the estimator in \eqref{unbiasedest} is proportional to $p(H)\bs x$ in the expectation. Essentially, we define an iterative algorithm as follows, 
\begin{equation}\label{sgd}
    \bs x_{k+1} = \bs x_k-a\bs g_k.
\end{equation}At first glance, this iteration may be irrelevant to the power method. But we here design  $\bs g_k$ such that
\begin{equation}
    \mathbb{E}[\bs g_k]\propto p(H)\bs x_k.
\end{equation}By taking expectation on the equation, we obtain a recursive relation similar to that of power method, 
\begin{equation}\label{eq: gd}
    \bs y_{k+1} \propto (I-ap(H))\bs y_k,\quad y_k:=\mathbb{E}[\bs x_k].  
\end{equation}This can be viewed as a random power method without normalization \cite{lu2020full,shamir2016convergence}. The gradient estimator $\bm g_k$ can be constructed by sampling matrix and vector elements. For example, we sample two indices $i_k,j_k\in [N]$ first, estimate the corresponding matrix element, and define 
\begin{equation}\label{simpleunb}
    \bs g_k = \left((\bm x_k)_{i_k}\bra{\bs i_k}p(H)\ket{\bs j_k}+\epsilon\right)\bs e_{j_k},
\end{equation}where $(\bm x_k)_{i_k}$ is the $i_k$-th component of $x_k$, $\bs e_{j_k}$ is the classical computational basis vector corresponding to index $j_k$. For simplicity, we assume that there is no systematic error and let $\epsilon$ denote the sampling noise. Noting that the two independent random samplings occur for indices $i_k,j_k$, we then observe that averaging $\bm g_k$ over the indices yields the desired matrix-vector product up to a scaling factor,
\begin{equation}\label{eq: mean grad}
    \mathbb{E}[\bs g_k] = \frac{1}{N^2}p(H)\bs x_k.
\end{equation}With this unbiased estimation proportional to the matrix-vector product $p(H)\bm x_k$, we prove  convergence of our algorithm as shown in \cref{thm: convergence2}. 

In the example \eqref{eq: mean grad}, we consider a simple sampling such that the indices are randomly sampled from the set $[N]$, and we obtain the averaging factor $\frac{1}{N^2}$. However, in the proposed algorithm, we assume that we sample the indices corresponding to non-zero elements of $p(H)$ to significantly reduce the required number of iterations of the algorithm. To this end, we require that $p(H)$ is a sparse matrix, and exploit the sparsity of $p(H)$. For example, if $p(H)$ is $s$-sparse and we sample its indices $i_k,j_k$ corresponding to non-zero elements of $p(H)$ by tracking from the indices of non-zero elements of the original Hamiltonian $H$, then we obtain a much better scaling factor than \eqref{eq: mean grad}, 
\begin{equation}
    \mathbb{E}[\bs g_k] = \frac{1}{Ns}p(H)\bs x_k.
\end{equation}

\section{Algorithm description}\label{sec: algorithm}

\subsection*{Gradient estimator}
We extend the estimator in \eqref{simpleunb} to the one defined with more than one matrix element samples.  Let us denote by $\xi(H,i_r,i_c)$, the quantity being an estimate of the element of $H$, which may be obtained from a quantum computer. 
\begin{definition}Given an iterate vector $\bs x$ and a Hamiltonian $H$, we define an estimator $\bs g$ as 
    \begin{equation}\label{unbiasedest}
    \bs g = \sum_{i_c=1}^{m_c}\sum_{i_r=1}^{m_r} x_{i_r}\xi(H,i_r,i_c)\bs e_{i_c}.    
\end{equation} 
Here, $i_r\in [N]$ and $i_c\in[N]$ are distinct indices randomly sampled. Here, $m_r$ and $m_c$ correspond to the numbers of sampled indices $i_r$ and $i_c$, respectively. The coefficient $x_{i_r}$ denotes the $i_r$-th component of $\bs x$. The $\xi(H,i_r,i_c)$ approximates the element of $H[i_r,i_c]$.
\end{definition}
For our applications, the estimate $\xi$ will approximate the element of $p(H)$, such as \eqref{eq: esimatemat}.  For the sake of simplicity, we first consider the case where $\xi(H,i_r,i_c)=\bra{\bm i_r}H\ket{\bm i_c}$.  By the definition \eqref{unbiasedest} and the definition of matrix 2-norm, we have
\begin{equation}\label{eq: ub of g}
    \norm{\bs g}\leq \norm{H}\sqrt{\sum_{i_r=1}^{m_r}x_{i_r}^2}\leq \norm{H}\norm{\bs x}.
\end{equation}

Note that the example \eqref{simpleunb} is a special case of this estimator when $m_r=m_c=1$ without any quantum noise. The following proposition shows that the estimator \eqref{unbiasedest} is proportional to $H\bs x$ in the expectation, and has the variance scaling with $\norm{\bs x}^2$.
\begin{proposition}\label{proposition}
    The estimator defined in \eqref{unbiasedest} satisfies the following properties
\begin{eqnarray}
    & \mathbb{E}[\bs g]=\frac{m_rm_c}{N^2}H\bs x\\
    & \mathbb{E}[\bs g\bs g^*]=\frac{m_c(N-m_c)}{N(N-1)}\mathrm{diag}\left(H\bs x \bs x^*H\right)+\frac{m_c(m_c-1)}{N(N-1)}H\bs x\bs x^*H+\Sigma_{r,c}, 
\end{eqnarray}where the matrix $\Sigma_{r,c}$ defined in \eqref{eq: Sigmarc}
 satisfies that
\begin{equation}
    \mathrm{tr}(\Sigma_{r,c}) \leq \frac{m_c(N-m_r)}{N^2}\norm{H}^2\norm{\bm x}^2.
\end{equation}These properties are improved if $H$ is $s$-sparse and the indices in \eqref{unbiasedest} correspond to non-zero elements of $H$, 
\begin{equation}
     \mathbb{E}[\bs g] =\frac{m_rm_c}{Ns}H\bs x,
\end{equation}and
\begin{equation}
        \mathrm{tr}(\Sigma_{r,c})\leq \frac{m_c(s-m_r)}{Ns}\norm{H}^2\norm{\bm x}^2.
\end{equation}

\end{proposition}
The proof of \cref{proposition} is shown in \cref{sec: proof of proposition} and around \eqref{eq: modified variance} in \cref{sec: proof of convergence thm2}.

We remark that when $\xi$ in \eqref{unbiasedest} estimates the element of $p(H)$, \cref{proposition} can be restated with $p(H)$ instead of $H$.

\subsection*{Algorithm}As discussed in \cref{sec: prem}, we incorporate the estimator \eqref{unbiasedest} to the randomized power method \eqref{sgd}. This outlines the algorithm as shown in \cref{alg: algorithm1} that produces an approximation of a ground state and the corresponding energy estimate.  As a stopping criterion, the Rayleigh quotient,
\begin{equation}\label{RQ}
    \frac{(\bs x_{t},H\bs x_{t})}{\norm{\bs x_{t}}^2},
\end{equation}is computed every iteration. To do so efficiently, we exploit the sparsity of $H$  and $\bs g$ in \eqref{unbiasedest}, otherwise the cost of a brute-force calculation of \eqref{RQ} scales with the dimension $N$. The sparsity of $H$ may inherit from that $H$ is a linear combination of $\mathcal{O}(poly(n))$ many Pauli strings. As a result, the estimation of \eqref{RQ} relies on the following recursive relations 
\begin{subequations}\label{eq: rqrecursive}
    \begin{eqnarray}
    &H\bs x_t = H\bs x_{t-1}-aH\bs g_{t-1}\\
    &\norm{\bs x_{t+1}}^2=\norm{\bs x_t}^2-a(\bs g_t,\bs x_t)-a(\bs x_t,\bs g_t)+a^2\norm{\bs g_t}^2\\
    &(\bm x_{t+1},H\bs x_{t+1})=(\bs x_t,H\bs x_t)-a(\bm g_t,H\bs x_t)-a(\bs x_t,H\bs g_t)+a^2(\bm g_t,H\bs g_t).
\end{eqnarray}
\end{subequations}
Notice that the first relation requires $\mathcal{O}(m_rm_c)$ cost, since $\bs g_{t-1}$ is $m_c$-sparse and the $m_r$-sparse columns of $H$ corresponding to $\bm g_{t-1}$ are added to the vector $H\bs x_{t-1}$. Similarly, the second relation needs $\mathcal{O}(m_c)$ cost. The last relation takes $\mathcal{O}(m_c\min\{m_r,m_c\})$ cost as the last term on the right side dominates overall cost. Therefore, the per-iteration cost for the estimation of \eqref{RQ} scales only as $\mathcal{O}(m_rm_c)$.

\begin{algorithm}
\SetAlgoLined
	\KwData{initial guess $\bm  x_0$, $H\bs x_0$,  step size  $a\in(0,1)$, sparsity parameter $m_r$, index parameter $m_c$ }
	\KwResult{approximate ground-state and energy }

	\For{$t=0:T$}{
             Pick $m_r, m_c\in[N]$ different indices randomly\;

             Construct $\bs g_t$ in \eqref{unbiasedest}
             based on the circuit in  \cref{fig:circuit} with repetition\; 

            $H\bs x_t = H\bs x_{t-1}-aH\bs g_{t-1}$\;

            Compute the quantities in \eqref{eq: rqrecursive}\;
              }
	\caption{Quantum random power method}
    \label{alg: algorithm1}
\end{algorithm}

\section*{Rigorous guarantee}

In this section, we provide theoretical results for \cref{alg: algorithm1}. Our main result can be summarized as follows,
\begin{theorem}[Implication from \cref{thm: convergence2} and \cref{lem: overlap1}]\label{thm: main result}
    For any precision $\epsilon>0$, there exists a quantum algorithm that solves \eqref{problem} by outputing an approximation of a ground state of $H\in\mathbb{C}^{N\times N}$ up to fidelity $1-\epsilon$ with probability one. Especially when $p(H)$ is $s$-sparse, the number of iterations scales as $\mathcal{O}\left(\frac{Ns^2}{\epsilon^2}\log\frac{1}{\abs{\bra{\psi_{GS}}\ket{\bs x_1}}}\right)$. The query depth of the circuit \cref{fig:circuit} for the algorithm is 
$\mathcal{O}(1)$ if the LCU approach is used, and $\mathcal{O}(\frac{1}{\sqrt{\Delta}})$ if the QSVT or the QETU  is used. %The algorithm involves a gate complexity depending on input models, namely, 
    %\begin{itemize}
        %\item (Hamiltonian simulation) With $\log_2(N)+1$ qubits, the gate complexity is $\mathcal{O}(\lambda^2 M^2)$. 
        %\item (Block encoding) With $\log_2(N)+a$ qubits, the gate complexity is the sum of  $\ell$ applications of $U_H$ and $U_H^{\dagger}$ gates, one application of controlled-$U_H$ gate, and $\mathcal{O}(\ell)$ applications of other one- and two-qubit gates.
    %\end{itemize}

\end{theorem}
This theorem combines \cref{thm: convergence2} and \cref{lem: overlap1}. \cref{thm: convergence2} tells us that with probability one, \cref{alg: algorithm1} converges to a ground state of a matrix that is constructed from the matrix-element estimation $\xi$ in \eqref{unbiasedest}. Since $\xi$ may involve noise occuring in quantum computation, the ground state, to which the iteration converges, may be a perturbation of the true one. Thus, one needs to quantify the resulting fidelity, probably not one, influenced by the noise. Related to this, \cref{lem: overlap1} makes a connection between the fidelity and the magnitude of noise under a small-noise assumption.

\subsection*{Stochastic gradient sampling}
We note that in \cref{alg: algorithm1}, the gradient estimate \eqref{unbiasedest} may involve bias due to the use of the quantum filtering technique \eqref{filtering} and the sampling with repetition. This means that the matrix-element estimation is performed not for $p(H)$, but for a perturbed one $\widetilde{p(H)}$. As a consequence, \cref{alg: algorithm1} converges to an approximation of  ground state of $H$ as in the following theorem. In the next section, we provide a lower bound of the fidelity between perturbed and true ground states to ensure that the more accurate the estimator \eqref{unbiasedest} is computed, the higher the achievable fidelity of \cref{alg: algorithm1} is, as being arbitrarily close to one. 

The following theorem guarantees convergence with probability one to a given precision if a sufficiently small step size is chosen. As usually observed in variants of the power method \cite{lu2020full,shamir2016convergence}, the iteration complexity has a logarithmic dependence on the overlap of the ground state. By combining \cref{thm:complexity}, we present the overall query complexity for \cref{alg: algorithm1}. 
\begin{theorem}\label{thm: convergence2}
    Assume that $p(H)$ is $s$-sparse. For any precision $\epsilon>0$, there exists a step size $a>0$ and $T=\mathcal{O}\left(\frac{Ns^2}{\epsilon^2}\log\frac{1}{\abs{\bra{\psi_{GS}}\ket{\bs x_1}}}\right)$ such that 
     \cref{alg: algorithm1} outputs an iterate whose fidelity is greater than $1-\epsilon$ with probability one, namely,  
    \begin{equation}
     \mathbb{P}\left(\left[r_T\geq 1-\epsilon, r_t<1-\epsilon\text{ for all }t<T \right]\text{ or }\left[r_t\geq 1-\epsilon\text{ for some }t<T\right]\right)=1,
    \end{equation}where $r_t$ denotes the squared fidelity with the subspace of degenerate ground states, namely,
    \begin{equation}
        r_t:=\frac{\sum_{j=1}^g\abs{(\psi_{GS,j},\bs x_{t})}^2}{\|\bs x_{t}\|^2}.
    \end{equation}Here, the $\psi_{GS,j}$'s are the ground states of $\widetilde{P(H)}$ whose 
$(i_r,i_c)$-element is determined by the estimation  $\xi(H,i_r,i_c)$  \eqref{unbiasedest} for $p(H)$.
\end{theorem}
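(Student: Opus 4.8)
The plan is to track the squared fidelity $r_t$ as a stochastic process and to show that it behaves as a bounded submartingale driven toward $1$ by the spectral gap that the filter $p$ creates, until it crosses the threshold $1-\epsilon$. First I would fix the matrix that actually governs the dynamics: since the estimator only gives access to the perturbed entries $\xi(H,i_r,i_c)$, let $\widetilde{p(H)}$ denote the Hermitian matrix with these entries and set $B=I-a\tilde c\,\widetilde{p(H)}$ with $\tilde c=m_rm_c/N^2$, the matrix appearing in the averaged recursion $\mathbb E[\bs x_{t+1}\mid\mathcal F_t]=B\bs x_t$ (\cref{proposition}, restated for $\widetilde{p(H)}$). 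I would diagonalize $\widetilde{p(H)}$, let $P$ be the orthogonal projector onto its ground subspace (spanned by the $\psi_{GS,j}$) and $Q=I-P$, and reduce first to the non-degenerate case $g=1$. The Chebyshev filter is what buys the gap: it makes the ground subspace the dominant eigenspace of $B$, so that writing $\mu_P$ for the dominant eigenvalue of $B$ and $\mu_Q$ for the largest magnitude of $B$ restricted to $Q$ we have $|\mu_P|>|\mu_Q|$, the ratio being enlarged by the filter.

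Second, the analytic heart, I would derive a one-step recursion for the conditional expectation of the fidelity. Writing $r_{t+1}=\|P\bs x_{t+1}\|^2/\|\bs x_{t+1}\|^2$ with $\bs x_{t+1}=\bs x_t-a\bs g_t$, both numerator and denominator are quadratic forms in the random vector $\bs g_t$, whose first and second moments are supplied by \cref{proposition}. Because $Q$ commutes with $\widetilde{p(H)}$, the $\mathcal O(a)$ drift term moves weight from the $Q$-component into the $P$-component at a rate governed by $|\mu_P|-|\mu_Q|$, while the $\mathcal O(a^2)$ term collects the sampling variance, which \cref{proposition} bounds by a multiple of $\|\bs x_t\|^2$. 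I would then show that, for $r_t$ bounded away from $1-\epsilon$, one has $\mathbb E[r_{t+1}\mid\mathcal F_t]\ge r_t+a\,\delta(r_t)$ with $\delta>0$, equivalently $\mathbb E[1-r_{t+1}\mid\mathcal F_t]\le c_{\epsilon,a}^{-1}(1-r_t)$ for a factor $c_{\epsilon,a}>1$; choosing $a$ small forces the quadratic variance term to be dominated by the linear drift, which is exactly where the step-size restriction enters.

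Third, I would convert this recursion into the almost-sure statement. Let $\tau=\inf\{t:r_t\ge 1-\epsilon\}$ and consider the stopped process $r_{t\wedge\tau}$: by the previous step it is a bounded submartingale, hence converges almost surely, and the strict, uniformly positive drift below the threshold rules out convergence to any value $<1-\epsilon$, so $\tau<\infty$ almost surely, giving probability one. To extract the bound $T=\mathcal O\!\left(\log(1/|\langle\psi_{GS}|\bs x_1\rangle|)/\log c_{\epsilon,a}\right)$ I would iterate the contraction to $\mathbb E[1-r_t]\le c_{\epsilon,a}^{-(t-1)}(1-r_1)$, note that $1-r_1$ is controlled by the initial overlap $\gamma=|\langle\psi_{GS}|\bs x_1\rangle|$, and turn the expected decay into the first-passage form via a Markov/optional-stopping argument. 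Finally I would lift to the degenerate case $g\ge2$: since $P$ is still the spectral projector of $\widetilde{p(H)}$ onto its extremal eigenvalue and $Q$ still commutes with $\widetilde{p(H)}$, every estimate goes through verbatim with $\|P\bs x_t\|^2=\sum_{j\le g}|(\psi_{GS,j},\bs x_t)|^2$.

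The main obstacle I anticipate is the nonlinearity in the second step: $r_{t+1}$ is a ratio of quadratic forms in $\bs g_t$, so one cannot simply divide the conditional expectation of the numerator by that of the denominator. The honest route is to bound $r_{t+1}$ from below (or $1-r_{t+1}$ from above) pointwise in terms of $r_t$ and the realized noise, and only then take expectations using \cref{proposition}; keeping the denominator fluctuation and the sampling variance from overwhelming the $\mathcal O(a)$ drift is what pins down both the admissible step size $a$ and the precise form of $c_{\epsilon,a}$, and it is also where the small-noise behaviour feeding into \cref{lem: overlap1} must be tracked.
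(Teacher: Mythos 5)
Your overall architecture agrees with the paper's: condition on the perturbed Hermitian matrix $\widetilde{p(H)}$ determined by $\xi$, track $r_t$ via its conditional one-step moments from \cref{proposition}, establish a positive drift below the threshold $1-\epsilon$, and lift to the degenerate case by summing over the ground subspace. But the quantitative core of your plan is wrong in a way that would not deliver the stated $T$. You assert that the drift $\mathbb{E}[r_{t+1}\mid\mathcal F_t]\ge r_t+a\,\delta(r_t)$ is ``equivalently'' a uniform geometric contraction $\mathbb{E}[1-r_{t+1}\mid\mathcal F_t]\le c_{\epsilon,a}^{-1}(1-r_t)$, and you propose to iterate the latter from $1-r_1$. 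These are not equivalent, and the contraction is false near $r_t=0$: the deterministic part of the drift is proportional to $r_t(1-r_t)$ and vanishes when the iterate is orthogonal to the ground subspace, so $1-r_t$ cannot shrink by a uniform factor when $r_t$ is tiny. Moreover, even if such a contraction held, iterating it would give $T=\mathcal{O}\big(\log\frac{1-r_1}{\epsilon}/\log c\big)$, which does not reproduce the $\log\frac{1}{\abs{\bra{\psi_{GS}}\ket{\bs x_1}}}$ dependence in the statement. The paper's recursion is multiplicative in $r_t$ itself: while $r_k\le 1-\epsilon$ one has $\mathbb{E}_k[r_{k+1}]\ge c_{\epsilon,a}\,r_k$ with $c_{\epsilon,a}>1$ as in \eqref{eq: original reeq} and \eqref{cepsil}, and it is the requirement that $c_{\epsilon,a}^{T-1}\abs{\bra{\psi_{GS}}\ket{\bs x_1}}^2$ reach $1-\epsilon$ that produces the logarithm of the inverse initial overlap.

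Second, you flag the ratio-of-quadratic-forms issue as the main obstacle and propose a pointwise lower bound on $r_{t+1}$ before taking expectations. The paper resolves this in one line with Cauchy--Schwarz: $\mathbb{E}[\abs{X}^2/Y]\ge\abs{\mathbb{E}[X]}^2/\mathbb{E}[Y]$ for $Y>0$, applied with $X=(\psi_{GS},\bs x_{k+1})$ and $Y=\norm{\bs x_{k+1}}^2$, so that the first and second moments from \cref{proposition} can be inserted directly into numerator and denominator; the only pointwise control needed is the deterministic bound $\norm{\bs x_k}\ge(1-a\norm{H})^{k-1}>0$, which guarantees the denominator never vanishes for $a<1/\norm{H}$. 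Your pointwise route is not obviously unworkable, but it is the harder path and your write-up does not carry it out. A minor further point: the theorem does not rely on the filter making the ground subspace dominant, as the analysis only assumes Hermiticity and a nonzero gap of the estimated matrix; the filter merely enlarges the gap and hence improves the constant $c_{\epsilon,a}$.
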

The proof of \cref{thm: convergence2} is shown in \cref{sec: proof of convergence thm2}.

\subsection*{A lower bound of the fidelity between a perturbed ground state and the subspace of true ones}\label{sec: proof of projector thm}

In \cref{alg: algorithm1}, the elements of a matrix polynomial could be estimated without repetition. The resulting matrix, $\widetilde{p(H)}$, will be one perturbed from the desired matrix $p(H)$ by at least the sampling noise, denoted as $E$, namely,
\begin{equation}\label{error}
    \norm{E}=\norm{\widetilde{p(H)}-p(H)}>0.
\end{equation}Here, we denote an error-free matrix and noisy one by $A$ and $\widetilde{A}$. Suppose that these matrices are Hermitian as in the situation \eqref{error}. It may be challenging to derive a perturbative relation between the ground states of $A$ and $\widetilde{A}$ when the error $E$ is considered to be arbitrary. But when $E$ is smaller than some value, we can show that a lower bound for the fidelity between a perturbed ground state and the true ones depends only on the magnitude of noise, $\norm{E}$. 

Let $\lambda_1(A)$ and $\lambda_2(A)$ be the first and second smallest eigenvalues of $A$, and similar for $\widetilde{A}$.  Let $E=\widetilde{A}-A$ denote the error. The following lemma shows that when $\norm{E}$ is smaller than a certain value, the overlap of any ground state of $\widetilde{A}$ with the subspace of ground states of $A$ increases to $1$ linearly with $\norm{E}$. 
\begin{lemma}\label{lem: overlap1}
 Assume that $\norm{E}\leq e$ for some $e\in(0,\frac{1}{2}\abs{\lambda_1(A)-\lambda_2(A)})$. Then, for any precision $\epsilon>0$, there exists a constant $C(\norm{A},e,\epsilon)>0$ such that \begin{equation}
        \bra{\widetilde{ \psi}}P_{\lambda_1(A)}\ket{\widetilde{ \psi}}\geq 1-C(\norm{A},e,\epsilon)\norm{E}-\epsilon,
    \end{equation}where $\ket{\widetilde{ \psi}}$ is any ground state of $\widetilde{A}$, $P_{\lambda_1(A)}$ denotes the projection of $A$ onto its subspace of ground states, and the constant $C(\norm{A},e,\epsilon)$ depends on $\norm{A}$, $e$ and $\epsilon$.
\end{lemma}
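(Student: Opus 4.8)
The plan is to prove the statement directly through a variational (Weyl-type) argument, which in fact yields an explicit and sharp constant; the form stated in \cref{lem: overlap1} then follows a fortiori. Write $P := P_{\lambda_1(A)}$ for the orthogonal projector onto the $\lambda_1(A)$-eigenspace of $A$, set $Q := I - P$, and let $\lambda_2(A)$ be the smallest eigenvalue of $A$ strictly above $\lambda_1(A)$, so that the gap $\Delta := \lambda_2(A) - \lambda_1(A)$ is positive (positivity is guaranteed by the hypothesis $e \in (0, \tfrac12\abs{\lambda_1(A)-\lambda_2(A)})$). Let $\ket{\widetilde\psi}$ be any unit ground state of $\widetilde A$ and set $p := \bra{\widetilde\psi}P\ket{\widetilde\psi}$, so the target quantity is exactly $p$, and since $\ket{\widetilde\psi}$ is a unit vector one has $\bra{\widetilde\psi}Q\ket{\widetilde\psi} = 1-p$.

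First I would bound $\bra{\widetilde\psi}A\ket{\widetilde\psi}$ from below using the spectral structure of $A$. Because $P$ and $Q$ are spectral projectors of the Hermitian matrix $A$, the cross terms vanish and $A = PAP + QAQ$; using $PAP = \lambda_1(A)P$ together with the fact that the restriction of $A$ to the range of $Q$ has smallest eigenvalue $\lambda_2(A)$ gives $\bra{\widetilde\psi}A\ket{\widetilde\psi} \geq \lambda_1(A)p + \lambda_2(A)(1-p)$. Next I would bound the same quantity from above: writing $A = \widetilde A - E$ and using $\widetilde A\ket{\widetilde\psi} = \lambda_1(\widetilde A)\ket{\widetilde\psi}$ gives $\bra{\widetilde\psi}A\ket{\widetilde\psi} = \lambda_1(\widetilde A) - \bra{\widetilde\psi}E\ket{\widetilde\psi} \leq \lambda_1(\widetilde A) + \norm{E}$, and Weyl's inequality $\lambda_1(\widetilde A) \leq \lambda_1(A) + \norm{E}$ then yields $\bra{\widetilde\psi}A\ket{\widetilde\psi} \leq \lambda_1(A) + 2\norm{E}$. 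Combining the two bounds and cancelling $\lambda_1(A)$ leaves $\Delta(1-p) \leq 2\norm{E}$, i.e.
\begin{equation}
  \bra{\widetilde\psi}P_{\lambda_1(A)}\ket{\widetilde\psi} = p \;\geq\; 1 - \frac{2\norm{E}}{\Delta},
\end{equation}
which is exactly the claimed inequality with $C = 2/\Delta$ and the harmless $-\epsilon$ dropped; the assumption $\norm{E}\leq e < \tfrac12\Delta$ is precisely what renders this bound non-vacuous, forcing $p>0$.

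I expect the main obstacle to be matching the precise form of the stated constant $C(\norm{A}, e, \epsilon)$, whose dependence on $\norm{A}$ and $\epsilon$ the sharp argument above does not require. This suggests the packaging of \cref{lem: overlap1} is obtained through an alternative route better aligned with the paper's filtering machinery: approximate $P_{\lambda_1(A)}$ by a polynomial filter $f(A)$ with $\norm{f(A) - P_{\lambda_1(A)}} \leq \epsilon$ over the spectral window $[-\norm{A}, \norm{A}]$, then transfer from $f(A)$ to $f(\widetilde A)$ via an operator-Lipschitz estimate $\norm{f(A) - f(\widetilde A)} \leq L_f\norm{E}$ and use $\bra{\widetilde\psi}f(\widetilde A)\ket{\widetilde\psi} \approx 1$. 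The delicate part of that route is controlling $L_f$, which degrades as the filter is sharpened (small $\epsilon$) and as the window widens with $\norm{A}$, and it is here that the gap condition $e < \tfrac12\Delta$ enters to keep the filter well-separated from the excited spectrum. In either formulation, the variational argument already certifies the claim with an explicit constant, so the remaining work is purely one of bookkeeping the constants in the form desired.
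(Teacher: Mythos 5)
Your variational argument is correct and complete, and it proves something strictly stronger than \cref{lem: overlap1}: the explicit bound $\bra{\widetilde\psi}P_{\lambda_1(A)}\ket{\widetilde\psi}\geq 1-\tfrac{2\norm{E}}{\Delta}$ with $\Delta=\lambda_2(A)-\lambda_1(A)$, and since the hypothesis gives $e<\tfrac12\Delta$ one may even take $C=1/e$, so no $\epsilon$-slack is needed and the constant has no genuine dependence on $\norm{A}$ or $\epsilon$. Each step checks out: the block decomposition $A=PAP+QAQ$ with $QAQ\geq\lambda_2(A)Q$ gives the lower bound $\lambda_1(A)p+\lambda_2(A)(1-p)$ on the Rayleigh quotient, the identity $\bra{\widetilde\psi}A\ket{\widetilde\psi}=\lambda_1(\widetilde A)-\bra{\widetilde\psi}E\ket{\widetilde\psi}$ plus Weyl gives the upper bound $\lambda_1(A)+2\norm{E}$, and subtracting yields $\Delta(1-p)\leq 2\norm{E}$. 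This also handles a degenerate ground space of $A$ without modification, provided $\lambda_2(A)$ is read as the smallest eigenvalue strictly above $\lambda_1(A)$, which is the only reading under which the hypothesis is non-vacuous.

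The paper takes the second route you sketched at the end: it identifies $P_{\lambda_1(\widetilde A)}$ with the spectral projector $\mathbb{I}_{[\lambda_1(A)-e,\lambda_1(A)+e]}(\widetilde A)$ (using Weyl's inequality and the gap condition to show $\lambda_1(\widetilde A)$ lands in the window while $\lambda_2(\widetilde A)$ does not), approximates the indicator by a polynomial $p_n$ to accuracy $\epsilon/2$ on an interval containing both spectra via Stone--Weierstrass, and controls $p_n(\widetilde A)-p_n(A)$ by a term of the form $C\norm{E}$ with $C$ depending on $\norm{A}$, $e$, $\epsilon$ through the polynomial's degree and coefficients. That is why the stated constant carries the $(\norm{A},e,\epsilon)$ dependence and the extra $-\epsilon$. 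The trade-off is real: the paper's filter argument extends verbatim to projectors onto any isolated spectral cluster (not just the extremal one) and fits the filtering machinery used elsewhere, but its constant is implicit and weaker; your Rayleigh-quotient argument exploits extremality of $\lambda_1$ and in exchange delivers a sharp, fully explicit constant. Both are valid proofs of the lemma as stated.
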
The proof of this lemma is shown in \cref{sec: proof of overlap1}.

\section{Complexity of constructing gradient estimator}\label{sec: sec of circuit complexity}

We discuss about the overall quantum complexity required to construct the gradient estimator $\bm g_t$ in \cref{alg: algorithm1}. As described in \cref{sec: matrix-element estimation}, the gradient estimator \eqref{unbiasedest} is constructed with either of the LCU approach using a Fourier-series approximation or the QSP techniques such as QSVT and QETU.

The estimation of $ \bra{\bm i}p(H)\ket{\bm j}$ in \eqref{eq: esimatemat} at each iteration of our algorithm requires $\mathcal{O}(M)$ values of $\bra{\bm i}e^{iHm}\ket{\bm j}$ for integer $m\in[-M,M]$, each of which requires Trotter steps as many as $\mathcal{O}(\text{poly}(n)m^2)$ if the first-order trotter formula is considered \cite{childs2019faster}. Here, a Chebyshev polynomial filtering in \cref{alg: algorithm1} is used whose degree scales as $\mathcal{O}(\frac{1}{\sqrt{\Delta}})$ and $M=\mathcal{O}(\frac{1}{\Delta^2})$ as discussed in \cref{sec: fejer}. That is, sampling matrix elements in \cref{alg: algorithm1} using Hamiltonian simulation requires at most $\mathcal{O}(\text{poly}(n)\frac{1}{\Delta^4})$ gate complexity with $n +1$ qubits and is performed  via the Hadamard test in \cref{fig:circuit}. In addition, $\mathcal{O}(\frac{1}{\Delta^2})$ circuit sampling is required as the $m$ iterates over $[-M,M]$, given that a fixed number of circuit sampling is set for each $m$. We remark that the dependence of gate complexity on the spectral gap could be improved if a higher-order trotter formula is used, as it depends on the maximal time $M$.  

\cref{alg: algorithm1} using the QSVT \cite{GSLW19} (or the QETU \cite{dong2021efficient}) requires   $\mathcal{O}(\frac{1}{\sqrt{\Delta}})$ applications of the block encoding of $H$ and its inverse (or the Hamiltonian simulation), since the degree of Chebyshev filtering polynomial scales as $\mathcal{O}(\frac{1}{\sqrt{\Delta}})$.  In our numerical simulations in \cref{sec: numercial result}, we considered typical physical quantum models whose Hamiltonians are expressed as a linear combination of Pauli strings. In this case, the required circuit complexity for constructing a block encoding of $H$ scales polylogarithmically with system size \cite[Theorem 7]{zhang2024circuit}. Therefore, the overall circuit complexity for computing $ \bra{\bm i}p(H)\ket{\bm j}$ using QSVT is $\mathcal{O}(\frac{1}{\sqrt{\Delta}} \text{poly}(n))$. It is the same for the case of QETU since the trotter formula for each query of Hamiltonian simulation in the $\mathcal{O}(\frac{1}{\sqrt{\Delta}})$ query depth requires $\mathcal{O}(\text{poly}(n))$ gate complexity.

From either of the two approaches, we estimate a fixed number of matrix elements of the form  $ \bra{\bm i}p(H)\ket{\bm j}$, and construct the $m_c$-sparse gradient estimator, $\bm g_t$, at each iteration in \cref{alg: algorithm1}. Thus, the per-iteration classical time and memory cost for updating the iteration are independent of system size as scaling as $\mathcal{O}(1)$.

%The recent work \cite[Page 29]{wang2024qubit} introduced a classical randomized approach for sampling elements of a matrix polynomial. 

%\cite{wang2024qubit} proposed a classical computational scheme for computing $\bra{\bm i}p_\ell(H)\ket{\bm j}$ by using a randomization. In their approach, the pre-processing that constructs a sampling distribution needs $\mathcal{O}(L^\ell)$ classical time, where $L$ is the number of Paulis constituting the Hamiltonian. For typical model Hamiltonians, $L=\mathcal{O}(\text{poly}(\log N))$, and therefore applying the scheme in \cite{wang2024qubit} demands at least $\mathcal{O}(\text{poly}(\log N)^\ell)$ classical time. This amounts to implementing the exact computation of $\bra{\bm i}p_\ell(H)\ket{\bm j}$ by  performing the multiplcations of Pauli strings and single-qubit product states. 

\section{Numerical result}\label{sec: numercial result}
In this section, we present several numerical experiments to demonstrate the performance of our algorithm. For the Hamiltonian simulation approach, we work with the transverse-field Ising model (TFIM) and the XXZ model, and examine the performance of the algorithm with and without polynomial filtering methods. For the block-encoding approach, we provide numerical tests that include the TFIM and one-dimensional Hubbard model (HM). We compare convergence results that are obtained using \cref{alg: algorithm1} with and without polynomial filters. Additionally, we discuss how the filtering methods alleviate the effect of noise in the two approaches.   

We set up the TFIM as
\begin{equation}\label{eq:isingmodel}
    H = J\sum_{j=1}^{n-1}Z_jZ_{j+1}+D\sum_{j=1}^nX_j.
\end{equation}Here, $n$ denotes the number of qubits, and $X_j,Z_j$ are Pauli operators acting on the $j$-th qubit. The  XXZ model is defined as \begin{equation}\label{eq:xxzmodel}
    H = J\sum_{j=1}^{n-1}(X_jX_{j+1}+Y_jY_{j+1})-D\sum_{j=1}^{n-1}Z_jZ_{j+1}.
\end{equation} The HM, which is defined on a one-dimensional lattice with open boundary condition, is expressed as  
\begin{equation}
    H = -t\sum_{\sigma\in\{\uparrow,\downarrow\}}\sum_{j=1}^{n-1}\left(a_{j,\sigma}^{\dagger}a_{j+1,\sigma}+a_{j+1,\sigma}^{\dagger}a_{j,\sigma}\right)+U\sum_{j=1}^n\hat{n}_{j,\uparrow}\hat{n}_{j,\downarrow},
\end{equation}where the parameters $t$ and $U$ account for the hopping energy and the Coulomb energy.
After applying a Jordan-Wigner transformation \cite{mcardle2020quantum}, we obtain 
\begin{equation}\label{eq:hmodel}
    H = -\frac{t}{2}\sum_{\sigma\in\{\uparrow,\downarrow\}}\sum_{j=1}^{n-1}\left(X_{j+1,\sigma}X_{j,\sigma}+Y_{j+1,\sigma}Y_{j,\sigma}\right)+\frac{U}{4}\sum_{j=1}^n\left(-Z_{j,\uparrow}-Z_{j,\downarrow}+Z_{j,\uparrow}Z_{j,\downarrow}\right).
\end{equation}For convenience, a multiple of the identity matrix in \eqref{eq:hmodel} is omitted, since it does not affect the optimization.

\subsection{Experiment detail}
For the Hamiltonian simulation approach (i.e. \eqref{eq:isingmodel} and 
\eqref{eq:xxzmodel}), we simulate \cref{alg: algorithm1} using Qiskit \cite{Qiskit}. To estimate the gradient, we incorporate the Chebyshev filtering method using the first-order Trotter formula, the Fourier-series approximation with the Fe\'{j}er kernel, and Hamiltonian simulation. Thus three types of errors are involved: numerical errors due to the Fourier expansion and the Trotter formula as well as the sampling noise from the Hadamard test.  For the block-encoding approach (i.e. \eqref{eq:isingmodel} and \eqref{eq:hmodel}), we simulate the quantum singular value transformation (QSVT) on MATLAB by computing the eigen-decomposition of a given Hamiltonian and constructing a corresponding matrix polynomial to mimic a block encoding of a matrix polynomial of $H$. In this approach, only sampling noise is considered. In all numerical experiments, we save the estimated matrix elements once and use the estimated matrix to carry out 100 independent simulations for \cref{alg: algorithm1} with the same initial guess.

\subsection{LCU approach (title changed)}\label{sec: Ham approach}
We consider the models \eqref{eq:isingmodel} and \eqref{eq:xxzmodel} for different $D/J$ with $J=1$.  For every $D/J$, we set the same parameter values, $n=2$, and the shot number to be $4\times 10^5$, the degree of the Fe\'{j}er kernel to be $30$, the degree of the Chebyshev filtering to be $3$, and $m_r=m_c=1$ in \cref{alg: algorithm1}. We use the first-order Trotter formula and set $\max\{k^2+20,50\}$ to be the number of Trotter steps when estimating matrix elements of $e^{iHk}$ where $k$ is a positive integer. In \cref{fig:TFIMXXZ} that compares the performance of \cref{alg: algorithm1} with and without polynomial filters, we measure the growing rate of the average fidelity for each $D/J$, namely,  
\begin{equation}\label{eq: growing rate}
    \frac{A}{A(1-\frac{\epsilon}{2})+B\frac{\epsilon}{2}},
\end{equation}where the constants $A$ and $B$ are
\begin{equation}
    A = \left(1-\frac{a\lambda_1}{N^2}\right)^2,\quad B = 1-\frac{2a\lambda_2}{N^2}+\frac{a^2}{N}\max_{j\neq 1}\lambda_j^2.
\end{equation}Note that this quantity measures how fast the average fidelity increases, according to the technical detail \eqref{eq: coeffs3}. In the following numerical results, $m_r=m_c=1$. Here we set $\epsilon=0.9$. Under this setting, the quantity \eqref{eq: growing rate} tells us the worst-case growing rate of the average fidelity unitl it reaches the precision $\epsilon$.  For fair comparison, we compare the cases without scheduling decreasing step sizes in the bottom panels of \cref{fig:TFIMXXZ}, as the step size $a$ is a constant. For the polynomial filtering, we set the upper and lower bounds in \eqref{bounds} as $\lambda_{ub}=\lambda_{\max}+0.2(\lambda_{\max}-\lambda_{\min})$ and $\lambda_{lb}=\lambda_{\min}+0.2(\lambda_{\max}-\lambda_{\min})$ for the TFIM and the XXZ model.

The top panels in \cref{fig:TFIMXXZ} show the convergence of our algorithm with and without polynomial filters, and with and without scheduling decreasing step sizes. The graphs show the average of fidelities obtained from the 100th iterates of 100 independent simulations with a fixed initial guess for each $D/J$. For the two models, we can see that in all cases, the optimization with the filtering  method converges much faster than the ones without it. One reason may be that the convergence rate of the fidelity is related to the growing rate of the average fidelity \eqref{eq: growing rate}. We observe that the overall growing rate is relatively higher when the filtering is applied as shown in the bottom panels of \cref{fig:TFIMXXZ}. In addition, the convergence result in \cref{fig:TFIMXXZ} shows that \cref{alg: algorithm1} with the filtering method approximates the true ground state well, despite the additional errors involved (e.g. the Trotter error, the Fourier approximation error, and the sampling error). Together with the convergence guarantee in \cref{thm: convergence2}, this can be verified by checking the high fidelities in the last column of \cref{tab: filtering}, which are the fidelity of the ground state from the errorneous matrix contructed from \cref{alg: algorithm1} with the true ground state. 

Regarding \cref{lem: overlap1}, we observe that the perturbation error (LHS), $\norm{E}$, is smaller than half of the enlarged spectral gap (RHS), $\frac{1}{2}\abs{\lambda_1(p(H))-\lambda_2(p(H))}$, in all cases. This means that the condition in \cref{lem: overlap1} is satisfied in the cases, and so is the lower bound in the lemma. Accordingly, the ordering of the first two smallest eigenvalues is preserved even in the presence of sampling noise, and moreover the interval $[\lambda_1-e,\lambda_1+e]$ filters out the smallest eigenvalue of the estimated matrix from the second smallest one.

% we decrease step sizes by a rate of 0.8 and normalize the iteration every 500 iteration steps. For the TFIM and the HM, we set the initial step size to $a=15$ and $a=7.5$, respectively. Without filtering, TFIM 0.1, HBM weak 0.5, HBM strong 0.3

\begin{figure}[htbp]

    \centering
    \begin{subfigure}[b]{0.4\textwidth}
    \includegraphics[width=\textwidth]{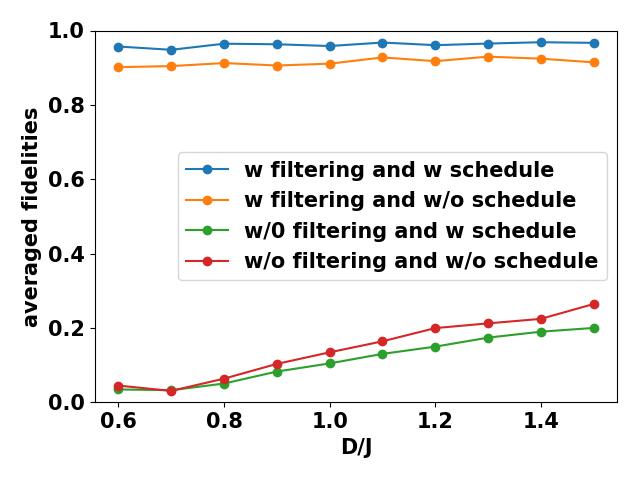}
    \end{subfigure}
    \begin{subfigure}[b]{0.4\textwidth}
    \includegraphics[width=\textwidth]{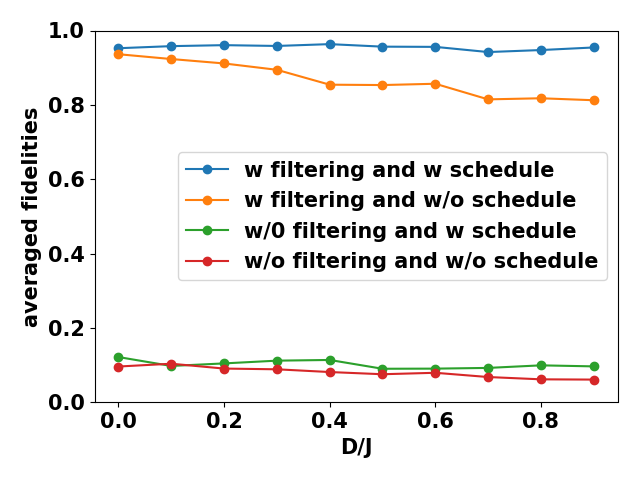}
    \end{subfigure}

    \begin{subfigure}[b]{0.4\textwidth}
    \includegraphics[width=\textwidth]{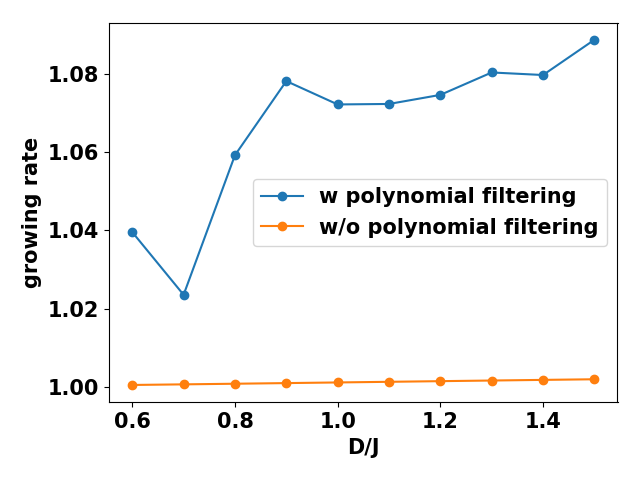}
    \end{subfigure}
    \begin{subfigure}[b]{0.4\textwidth}
    \includegraphics[width=\textwidth]{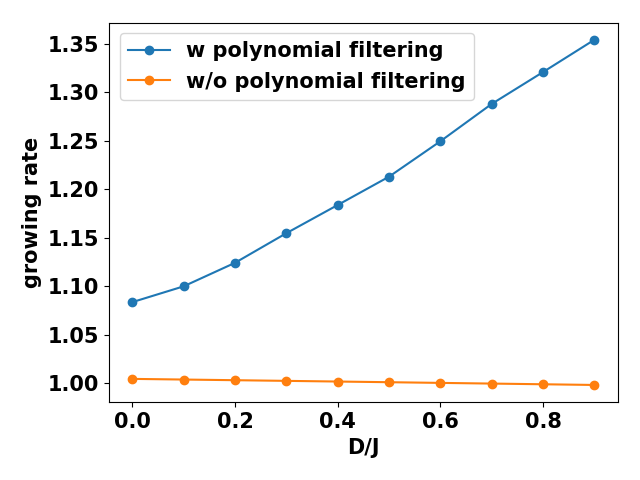}
    \end{subfigure}
    \caption{\textbf{Top}: Comparison of the averaged fidelities obtained after 100 iterations of \cref{alg: algorithm1} with and without the polynomial filter, and with and without scheduling decaying step sizes for the TFIM (left) and the XXZ model (right) for different $D/J$. For each model, we used the same initial guess. For the optimization with step-size scheduling, we decrease the step size by a rate of 0.5 every 10 iteration steps. \textbf{Bottom}: Comparison of the growing rate of average fidelity for the cases with and without polynomial filtering.}
    \label{fig:TFIMXXZ}
\end{figure}

%stepsizes
%TFIM, 2, 0.2, 0.05, 0.02 (w s, w ns, w/o s, w/o ns)
%XXZ, 1.6, 0.2, 0.2, 0.05 (w s, w ns, w/o s, w/o ns)

\begin{table}
    \centering
    \begin{subtable}[h]{0.8\textwidth}
        \centering
         \begin{tabular}{c|c|c|c|c|c|c|c}
         $D/J$ & LHS & RHS & $\widetilde{\lambda_1}$  & $\widetilde{\lambda_2}$  & $\lambda_1-\text{LHS}$ & $\lambda_1+\text{LHS}$ & $\abs{\bra{\widetilde{\psi}}\ket{\psi_{true}}}$\\
        \hline
         0.5 & 1.36 & 2.37 & -9.58  & -4.45 & -10.94 & -8.21 & 0.99\\
         0.6 & 1.17 & 2.86 & -9.75 & -2.84 & -10.93 & -8.57 & 0.99\\
         0.7 & 1.02 & 3.25 & -8.73 & -3.02 & -9.75 & -7.71 & 0.99\\
         0.8 & 0.49 & 3.54 & -9.16 & -1.96 & -9.65 & -8.67 & 0.99\\
         0.9 & 0.85 & 3.74 & -9.03 & -1.09 & -9.88 & -8.18 & 0.99\\
         1.0 & 0.87 & 3.88 & -8.88 & -1.30 & -9.76 & -8.01 & 0.99\\
         1.1 & 0.77 & 3.97 & -8.31 & -1.02 & -9.09 & -7.53 & 0.99 \\
         1.2 & 1.02 & 4.03 & -8.61 & -1.05 & -9.63 & -7.59 & 0.99\\
         1.3 & 0.5 & 4.07 & -8.76 & -0.83 & -9.26 & -8.26 & 0.99\\
         1.4 & 0.61 & 4.09 & -8.04 & -0.42 & -8.65 & -7.42 & 0.99\\
         1.5 & 0.78 & 4.11 & -9.21 & -0.59 & -10.00 & -8.42 & 0.99 \\
    \end{tabular}
    \caption{The TFIM}
    \end{subtable}
    
   \begin{subtable}[h]{0.8\textwidth}
        \centering
         \begin{tabular}{c|c|c|c|c|c|c|c}
          $D/J$ & LHS & RHS & $\widetilde{\lambda_1}$  & $\widetilde{\lambda_2}$  & $\lambda_1-\text{LHS}$ & $\lambda_1+\text{LHS}$ & $\abs{\bra{\widetilde{\psi}}\ket{\psi_{true}}}$\\
        \hline
         -0.1 & 0.49 & 3.28 & -6.37 & 0.09 & -6.86 & -5.88 & 0.99  \\
         0.0 & 0.5 & 4.07 & -8.03 & 0.08 & -8.53 & -7.53 & 0.99 \\
         0.1 & 0.54 & 5.03 & -10.16 & -0.28 & -10.70 & -9.61 & 0.99  \\
         0.2 & 0.29 & 6.39 & -13.18 & -0.72 & -13.48 & -12.88 & 0.99 \\
         0.3 & 0.61 & 8.03 & -16.45 & -0.37 & -17.07 & -15.83 & 0.99  \\
         0.4 & 0.76 & 9.72 & -19.84 & -0.22 & -20.61 & -19.08 & 0.99  \\
         0.5 & 0.51 & 11.62 & -23.57 & -0.61 & -24.09 & -23.05 & 0.99  \\
         0.6 & 0.75 & 13.96 & -27.77 & -0.17 & -28.52 & -27.01 & 0.99  \\
         0.7 & 0.77 & 16.6 & -33.46 & -0.27 & -34.24 & -32.69 & 0.99  \\
         0.8 & 0.7 & 19.35 & -38.72 & -0.39 & -39.43 & -38.01 & 0.99  \\
         0.9 & 0.74 & 22.33 & -44.27 & -0.10 & -45.02 & -43.52 & 0.99  \\
    \end{tabular}
    \caption{The XXZ model}
    \end{subtable}
    \caption{For each $D/J$, the second and third columns show the spectral norm of the noise $\norm{E}$ and a half of the spectral gap of the noise-free matrix $\frac{\abs{\lambda_1-\lambda_2}}{2}$ in \cref{lem: overlap1}. The fourth and fifth columns show the two smallest eigenvalues, which are denoted by $\widetilde{\lambda}_1$ and $\widetilde{\lambda}_2$, of the perturbed matrix polynomial $\widetilde{p(H)}$. The values in the sixth and seventh columns form an interval $[\lambda_1-\norm{E},\lambda_1+\norm{E}]$ that may contain $\widetilde{\lambda_1}$ but $\widetilde{\lambda_2}.$ The last column displays the absolute values of fidelity between the perturbed ground state of $\widetilde{p(H)}$ and the true one of $H$.}
    \label{tab: filtering}
\end{table}

\subsection{QSVT approach (title changed)}
In this section, we consider the TFIM \eqref{eq:isingmodel} with $J=1, D=1.5$ and the Hubbard model \eqref{eq:hmodel} in weak ($t=1,U=1$) and strong coupling regimes ($t=1,U=6$).  we set $n=10$, the parameters $m_r=20,m_c=20$ in \cref{alg: algorithm1}, and the degree of the Chebyshev polynomial to be $7$ for the TFIM and $8$ for the Hubbard model. We note that there is a non-degenerate ground state for the cases of \eqref{eq:isingmodel} and \eqref{eq:hmodel} in the weak-coupling regime, while there are two degenerate ground states in the strong-coupling regime for \eqref{eq:hmodel}. For this case, we measure the fidelity as follows,
\begin{equation}\label{totalovp}
    \frac{\sqrt{\abs{(\psi_{GS,1},\bs x_t)}^2+\abs{(\psi_{GS,2},\bs x_t)}^2}}{\norm{\bm x_t}},
\end{equation}where the $\psi_{GS,i}$'s for $i=1,2$ denote two orthogonal ground states. This is a special case of more general result in \cref{sec: proof of projector thm}, which considers the number of ground states to be $2$.  For the polynomial filtering, we set the upper and lower bounds in \eqref{bounds} as $\lambda_{ub}=\lambda_{\max}+0.1(\lambda_{\max}-\lambda_{\min})$ and $\lambda_{lb}=\lambda_{\min}+0.03(\lambda_{\max}-\lambda_{\min})$ for the TFIM and the Hubbard model in the weak-coupling regime, and the same upperbound and $\lambda_{lb}=\lambda_{\min}+0.02(\lambda_{\max}-\lambda_{\min})$ for the HM in the strong-coupling regime. These lower bounds, $\lambda_{lb}$'s, seperate $\lambda_1$ and $\lambda_2$ for the TFIM model, $\lambda_3$ and $\lambda_4$ for the HM in the weak-coupling regime, and $\lambda_{15}$ and $\lambda_{16}$ for the HM in the strong-coupling regime, respectively.

\cref{fig:ex1} shows the effect of the polynomial filter and the shot number on the convergence of \cref{alg: algorithm1}. One can observe that  the number of shots influences the performance of \cref{alg: algorithm1}. For example, the result with the largest shot number is much better than the others. Additionally, from the bottom panels of \cref{fig:ex1}, it is evident that the use of the polynomial filter improves overall performance of \cref{alg: algorithm1}, exhibiting better convergence  than the result without it.  Importantly, the result in \cref{fig:ex1} shows high fidelities, which means that \cref{alg: algorithm1} with the filtering method provides a fairly good approximation even with the sampling noise. By the convergence theorem \cref{thm: convergence2}, this can be understood by looking at the fidelity of approximate ground states with the true ones as in the last columns in \cref{tab: filtering1}. 

As discussed in the previous section, we compare the perturbation error (LHS) to the enlarged spectral gap (RHS). For each of the three models, the condition in \cref{lem: overlap1} is not satisfied in some cases where the LHS is larger than the RHS, but the interval $[\lambda_1-e,\lambda_1+e]$ filters the smallest perturbed eigenvalue $\widetilde{\lambda_1}$ from the second one, which means that the ordering of the first two smallest eigenvalues is preserved even in the presence of sampling noise. However, this observation implies that the condition in \cref{lem: overlap1} can be restrictive in practice. It would be interesting to find a more relaxed condition, and we leave this to future work.

For the Hubbard model in a strong coupling regime, among 100 simulation results, \cref{fig:ex2} in \cref{sec: additional results} plots 12 that display the respective fidelities with the two degenerate ground states and the mean-square sum with normalization as defined in \eqref{eq: fidelity}. The graphs mean that our method converges to a ground state as shown in \cref{thm: convergence2}.

\begin{figure}[htbp]
    \centering
    \begin{subfigure}[b]{0.3\textwidth}
    \includegraphics[width=\textwidth]{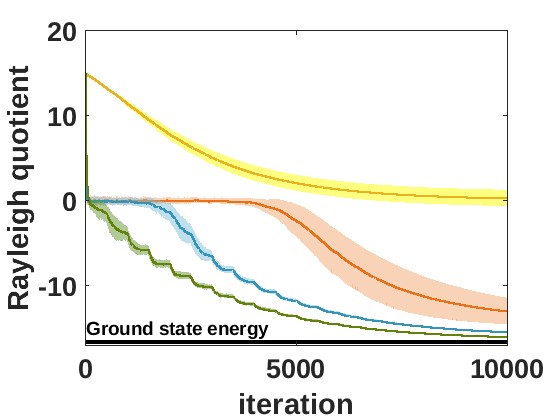}
    \end{subfigure}
    \begin{subfigure}[b]{0.3\textwidth}
    \includegraphics[width=\textwidth]{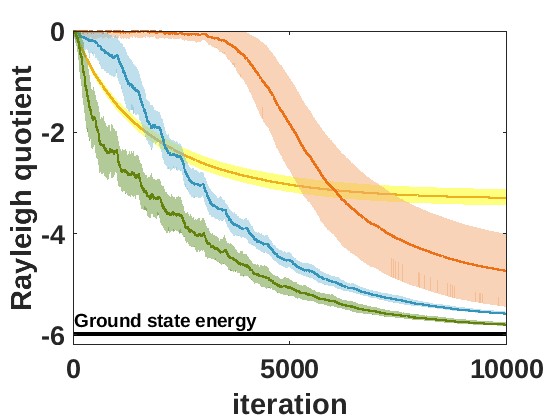}
    \end{subfigure}
    \begin{subfigure}[b]{0.3\textwidth}
    \includegraphics[width=\textwidth]{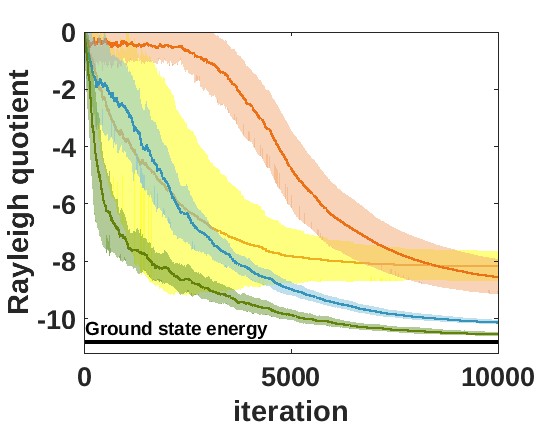}
    \end{subfigure}

    \begin{subfigure}[b]{0.3\textwidth}
    \includegraphics[width=\textwidth]{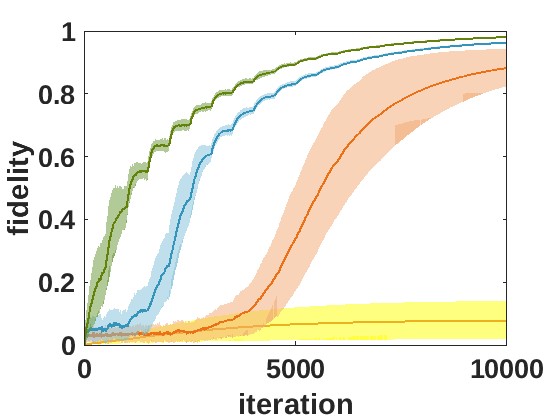}
    \end{subfigure}
    \begin{subfigure}[b]{0.3\textwidth}
    \includegraphics[width=\textwidth]{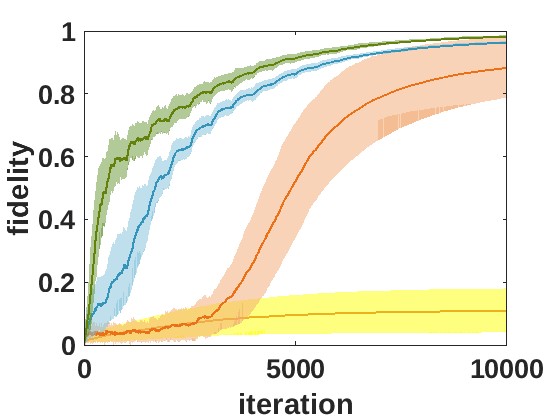}
    \end{subfigure}
    \begin{subfigure}[b]{0.3\textwidth}
    \includegraphics[width=\textwidth]{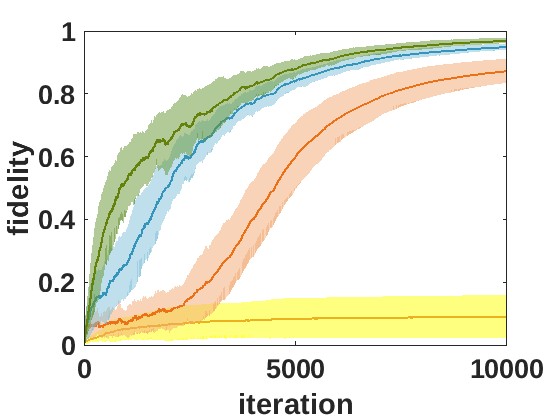}
    \end{subfigure}
    \caption{\textbf{Top}: comparison of the Rayleigh quotients obtained from \cref{alg: algorithm1} when using the polynomial filter with different shot numbers ($n_{shot}=10^5(\text{orange}),\; 4\times 10^5 (\text{blue}),\; 4\times 10^6 (\text{green})$) and without the filter (yellow) for the TFIM (left), the HM in a weak coupling regime (middle), and the HM in a strong regime (right). In all optimization results, we decrease the step size by a rate of 0.8 every 500 iteration steps. \textbf{Bottom}: comparison of the fidelity of the iteration with the true ground state or the subspace of true ground states. For each of the three models, we use a fixed initial guess.} %The initial fidelities for the TFIM, the HM in the weak-coupling regime, and strong-coupling regime are 0.00014, 0.0158, and 0.0089, respectively. These lower bounds, $\lambda_{lb}$'s, seperate $\lambda_1$ and $\lambda_2$ for the TFIM model, $\lambda_3$ and $\lambda_4$ for the HM in the weak-coupling regime, and $\lambda_{15}$ and $\lambda_{16}$ for the HM in the strong-coupling regime, respectively.}
    \label{fig:ex1}
\end{figure}

\begin{table}
    \centering
    \begin{subtable}[h]{0.8\textwidth}
        \centering
         \begin{tabular}{c|c|c|c|c|c|c|c}
         $n_{shot}$ & LHS & RHS &  $\widetilde{\lambda_1}$  & $\widetilde{\lambda_2}$  & $\lambda_1-\text{LHS}$ & $\lambda_1+\text{LHS}$ & $\abs{\bra{\widetilde{\psi}}\ket{\psi_{true}}}$\\
        \hline
         $10^5$ & 2.11 & 2.10 & -5.41 & -2.46 &-7.31 & -3.09 &0.95\\
         $4\times 10^5$ & 1.05 & 2.10 & -5.25 & -1.59 & -6.25 & -4.15 & 0.98\\
         $4\times 10^6$ & 0.33 & 2.10 & -5.21 & -1.12 & -5.53 & -4.87 & 0.99\\
    \end{tabular}
    \caption{The TFIM}
    \end{subtable}

    \begin{subtable}[h]{0.8\textwidth}
        \centering
         \begin{tabular}{c|c|c|c|c|c|c|c}
         $n_{shot}$ & LHS & RHS &  $\widetilde{\lambda_1}$  & $\widetilde{\lambda_2}$  & $\lambda_1-\text{LHS}$ & $\lambda_1+\text{LHS}$ & $\abs{\bra{\widetilde{\psi}}\ket{\psi_{true}}}$\\
        \hline
         $10^5$ & 3.09 & 2.91 & -7.62 & -3.37 & -10.32 & -4.13 &  0.95\\
         $4\times 10^5$ & 1.54 & 2.91 & -7.29 & -2.02 & -8.77 & -5.68 & 0.98\\
         $4\times 10^6$ & 0.48 & 2.91 & -7.24 & -1.44 & -7.72 & -6.74 & 0.99\\
    \end{tabular}
    \caption{The HBM weak}
    \end{subtable}
    
    \begin{subtable}[h]{0.8\textwidth}
        \centering
         \begin{tabular}{c|c|c|c|c|c|c|c|c}
         $n_{shot}$ & LHS & RHS & $\widetilde{\lambda_1}$  & $\widetilde{\lambda_2}$  & $\lambda_1-\text{LHS}$ & $\lambda_1+\text{LHS}$ & $\bra{\psi_{B,1}}P_1\ket{\psi_{B,1}}$ &  $\bra{\psi_{B,2}}P_1\ket{\psi_{B,2}}$ \\
        \hline
         $10^5$ & 2.04 & 0.93 & -4.72 & -3.01 & -6.48 & -2.39 & 0.95 & 0.94\\
         $4\times 10^5$ & 1.01 & 0.93 & -4.53 & -2.71 & -5.45 & -3.42 & 0.98 & 0.98\\
         $4\times 10^6$ & 0.32 & 0.93 & -4.45 & -2.58 & -4.76 & -4.11 & 0.99 & 0.99\\
    \end{tabular}
    \caption{The HBM strong}
    \end{subtable}
    \caption{For each $n_{shot}$, the second and third columns show the spectral norm of the noise $\norm{E}$ and a half of the spectral gap of the noise-free matrix $\frac{\abs{\lambda_1-\lambda_2}}{2}$ in \cref{lem: overlap1}. The fourth and fifth columns show the two smallest eigenvalues, which are denoted by $\widetilde{\lambda}_1$ and $\widetilde{\lambda}_2$, of the perturbed matrix polynomial $\widetilde{p(H)}$. The values in the sixth and seventh columns form an interval $[\lambda_1-\norm{E},\lambda_1+\norm{E}$ that may contain $\widetilde{\lambda_1}$ but $\widetilde{\lambda_2}.$ The last column displays the absolute values of fidelity between the perturbed ground state of $\widetilde{p(H)}$ and the true one of $H$ for the TFIM and the Hubbard model in a weak-coupling regime. For the Hubbard in a strong-coupling regime, we measure the fidelity of the two perturbed ground states of $\widetilde{p(H)}$ with the subspace of true ones of $H$, on which the orthogonal projection $P_1$ is defined.} %\textbf{(d)}: comparison of relative spectral gaps without and with filtering method for different $n_{shot}$ for the three models.}
    \label{tab: filtering1}
\end{table}

%\begin{subtable}[h]{0.8\textwidth}
        %\centering
         %\begin{tabular}{c|c|c|c|c}
         %Model & w/o filtering & $n_{shot}=10^5$ & $n_{shot}=4\times 10^5$ &  $n_{shot}=4\times 10^6$ \\
        %\hline
        %TFIM & 1.011 & 1.030 & 1.037 & 1.042 \\
        %Hubbard (weak) & 1.001 & 1.021 & 1.027 & 1.029 \\
        %Hubbard (strong) & 1.001 & 1.008 & 1.009 & 1.010change}\\
    %\end{tabular}
    %\caption{The growing rate of average fidelity}
    %\end{subtable}

\subsection{The sparsity of $p(H)$}\label{subsec: sparsity}

The sparsity value of $p(H)$, $s$, depends on the Hamiltonian matrix $H$ and the polynomial $p(x)$. As the degree of polynomial $p(x)$, $\ell$, gets higher, the matrix $p(H)$ may become denser. But at the same time, by the Cayley-Hamilton theorem, there exists a threshold $\ell_0\leq N$ such that the sparsity value is saturated for any polynomial of degree $\ell\geq\ell_0$, which depends upon the Hamiltonian matrix $H$. Therefore, in the worst-case scenario, for any polynomial $p(x)$ and a $\mathcal{O}(\text{poly}(n))$-sparse quantum Hamiltonian, the sparsity value $s$ of $p(H)$ scales as $\mathcal{O}((\text{poly}(n))^{\ell_0})$ for some fixed $\ell_0$. For the Hamiltonian models \eqref{eq:xxzmodel}, \eqref{eq:isingmodel}, and \eqref{eq:hmodel}, we numerically find that sparsity values are saturated at small degrees independent of $n$. In other words, by \cref{thm: convergence2}, we are able to obtain an almost linear scaling of the iteration complexity with the dimension $N$ for these models.

For example, in \cref{fig:Sparsity},
we observe quick saturations for the Hubbard model \eqref{eq:hmodel} and the XXZ model \eqref{eq:xxzmodel} within a degree of $10$ and a degree of $15$, respectively. That is, the sparsity values for these models may scale as $s=\mathcal{O}(n^{10})$ and $s=\mathcal{O}(n^{15})$, respectively. On the other hand, the usual representation of the TFIM \eqref{eq:isingmodel} is the case where we see no saturation with a sparsity value of $1$ reached within a degree of $10$ in all cases as in the bottom left panel of \cref{fig:Sparsity}. However, for this model, we can achieve saturation by simply applying a unitary transformation. For instance, we apply the Walsh Hadamard transform to the TFIM Hamiltonian \eqref{eq:isingmodel}, which yields
\begin{equation}\label{eq: Zisingmodel}
    H = J\sum_{j=1}^{n-1}X_jX_{j+1}+D\sum_{j=1}^nZ_j,
\end{equation}where $X$ and $Z$ components are interchanged, and see saturations within a degree of $10$ as in the bottom right panel of \cref{fig:Sparsity}. This may be due to the fact that the Pauli strings constituting only $Z$ components do not affect the sparsity in the computation of $p(H)$ and that the different representations \eqref{eq:isingmodel} and \eqref{eq: Zisingmodel} yield matrices with different sparsity patterns.

We remark that in general, the presence of threshold $\ell_0$ does not help reduce the per-iteration cost of estimating matrix elements quantumly or classically as discussed earlier in \cref{sec: matrix-element estimation}. Otherwise, we should know the reduced form of a matrix polynomial from the Cayley-Hamilton theorem, but it would demand an exponential computational cost with respect to $n$ in general.

\begin{figure}[htbp]

    \centering
    \begin{subfigure}[b]{0.4\textwidth}
    \includegraphics[width=\textwidth]{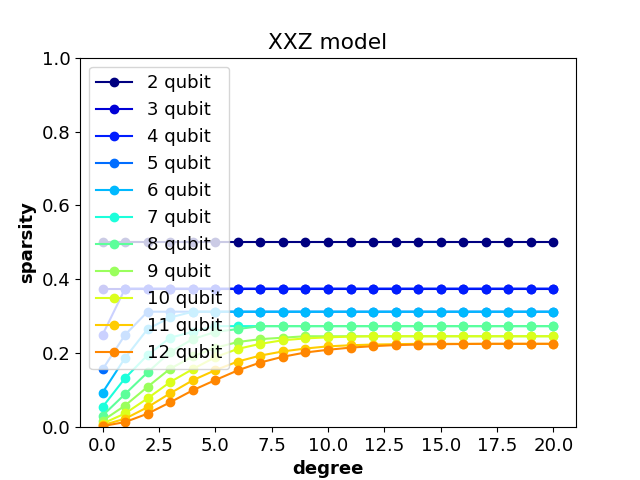}
    \end{subfigure}
    \begin{subfigure}[b]{0.4\textwidth}
    \includegraphics[width=\textwidth]{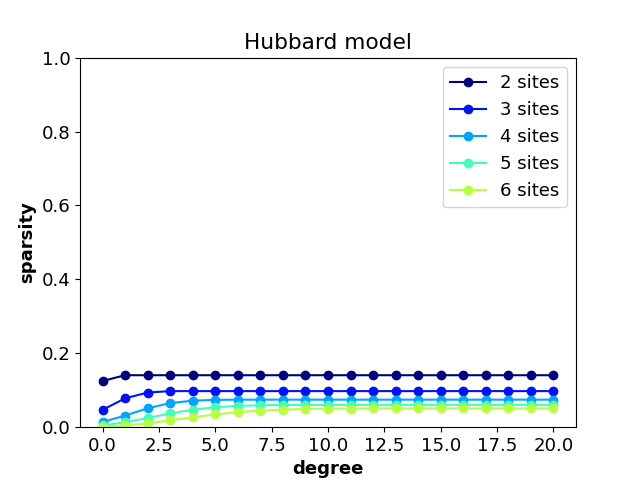}
    \end{subfigure}
    
    \begin{subfigure}[b]{0.4\textwidth}
    \includegraphics[width=\textwidth]{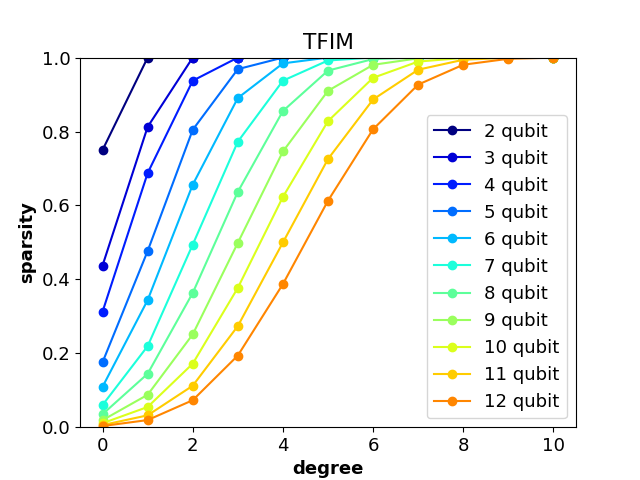}
    \end{subfigure}
    \begin{subfigure}[b]{0.4\textwidth}
    \includegraphics[width=\textwidth]{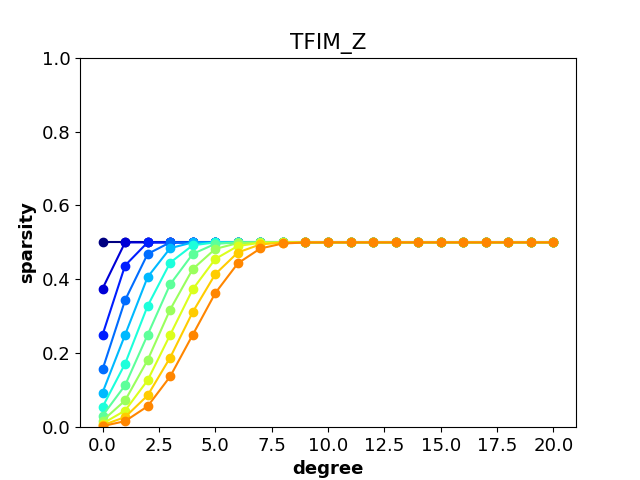}
    \end{subfigure}

    \caption{Plots of the sparsity with respect to the degree of the polynomial $p(x)$ for the model Hamiltonians \eqref{eq:xxzmodel}, \eqref{eq:hmodel}, and \eqref{eq:isingmodel}. The sparsity is defined to be the ratio $\frac{s}{N}$ where $p(H)$ is $s$-sparse for $H\in\mathbb{C}^{N\times N}$. The value $s$ is measured in such a way that the matrix $H$ is changed to one whose non-zero elements are set to $1$, the recursive relation that $H_{n+1}=HH_n+H_{n-1}$ with $H_2=H, H_1=I$ are used, which yields the same sparsity as $p(H)$ for a Chebyshev polynomial $p(x)$, and then $s$ is calculated as the maximum of the numbers of non-zero elements along the row of the resulting $H_\ell$ for  degree $\ell$. }
    \label{fig:Sparsity}
\end{figure}

\section{Conclusion}
In this work, we propose a quantum-classical hybrid random power method for ground-state computation. Unlike the previous inexact power methods \cite{chen2024quantum,lu2020full,shamir2016convergence, xu2018accelerated}, our method updates the iteration by estimating $p(H)\bm x$ in the expectation, where $p(x)$ is a filtering polynomial. In addition, the query complexity in our algorithm has a logarithmic dependence on the initial overlap, as it is based on the power method.  Here we used the Chebyshev polynomial  \cite{zhou2006self} for the filtering polynomial $p(x)$. The estimation of $p(H)\bm x$ is based on a randomized technique that requires only $\mathcal{O}(1)$ classical time and a polylogarithmic dependence of quantum circuit depth on the system size. To perform the estimation, our method implements a Hadamard test, given an access to either Hamiltonian simulation or block encoding, which is equivalent to sampling the elements of the matrix $p(H)$ with repetition. We prove that our algorithm outputs an approximation of ground state to any precision with probability one, as stated in \cref{thm: main result}. Especially when $p(H)$ is reasonably sparse matrix, which holds for the XXZ model, the Hubbard model, and the TFIM, our algorithm with the QSVT or the QETU converges with an almost linear scaling of the number of iterations with a quadratically improved quantum complexity in terms of query depth. Additionally, we prove the existence of a lower bound of the fidelity of the ground state scaling linearly with the magnitude of noise regardless of its characteristics, if it is less than half of the enlarged spectral gap of $p(H)$.

We highlight that the use of quantum polynomial filtering for estimating matrix elements provides us with a computational advantage that may not be obtained classically. Specifically, estimating the elements of $p(H)$ with only classical computations requires a cost exponentially scaling with the degree of the polynomial, while
doing so with the quantum filtering technique results in only a polynomial scaling cost.

There are a few questions to be further investigated. For
instance, can we prove a tighter lower bound than one in \cref{lem: overlap1} for the high fidelity against the perturbation error? Could we reduce the dependence of the spectral gap in the case of LCU approach \eqref{eq: esimatemat} as the quadratic improvement in the case of QSVT and QETU?  Another question is whether we can incorporate the phase estimation algorithms \cite{ni2023low,ding2023even} into our method to achieve a Heisenberg-scaling limit in samping matrix elements.

The code and data for our numerical results are available upon request.

\section{Acknowledgment}

We would like to thank Dr. Mancheon Han for helpful advice on numerical simulations. We thank Prof. Lin Lin for helpful comment. This research was supported by Quantum Simulator Development Project for Materials Innovation through the National Research Foundation of Korea (NRF) funded by the Korean government (Ministry of Science and ICT(MSIT))(No. NRF- 2023M3K5A1094813). We used resources of the Center for Advanced Computation at Korea Institute for Advanced Study and the National Energy Research Scientific Computing Center (NERSC), a U.S. Department of Energy Office of Science User Facility operated under Contract No. DE-AC02-05CH11231. S.C. was supported by a KIAS Individual Grant (CG090601) at Korea Institute for Advanced Study. T.K. is supported by a KIAS Individual Grant (No. CG096001) at Korea Institute for Advanced Study.

\crefalias{section}{appsec}
\section{Appendices}
\label[appsec]{sec: Appendix}

\subsection*{The proof of \cref{lem: overlap1}}
\label[appsec]{sec: proof of overlap1}
We use an outcome of the Weyl's inequality, which states that for Hermitians $A,B\in\mathbb{C}^{N\times N}$ and any $k\in\{1,2,..,N\}$, 
\begin{equation}\label{eq: outcomeweyl}
    \abs{\lambda_k(A+B)-\lambda_k(B)}\leq \norm{B},
\end{equation}where the $\lambda_k(A)$ denotes the $k$-th smallest eigenvalue of $A$ and similarly for $A+B$ and $B$.

To prove \cref{lem: overlap1}, we claim that 
    \begin{equation}\label{eq: indicator}
        P_{\lambda_1(\widetilde{A})}=\mathbb{I}_{[\lambda_1(A)-e,\lambda_1(A)+e]}\left(\widetilde{A}\right),
    \end{equation}where $\mathbb{I}$ is the indicator function that values one on the given interval and zero otherwise. To show this, we check that $\lambda_1(\widetilde{A})\in [\lambda_1(A)-e,\lambda_1(A)+e]$, and $\lambda_2(\widetilde{A})\not\in [\lambda_1(A)-e,\lambda_1(A)+e]$. By \eqref{eq: outcomeweyl}, $\lambda_1(\widetilde{A})\in [\lambda_1(A)-e,\lambda_1(A)+e]$. Again, by \eqref{eq: outcomeweyl}, $\lambda_2(\widetilde{A})\geq\lambda_2(A)-e$. By the assumption, $\lambda_2(A)-e>\lambda_1(A)+e$. Thus, the claim is proven. 
    
    Notice that the interval $[\lambda_1(A)-e,\lambda_N(A)+e]$ includes the spectra of $A$ and $\widetilde{A}$ by the Weyl's inequality, where $\lambda_N(A)$ is the largest eigenvalue of $A$. Restricting to this interval, by the Stone–Weierstrass theorem, we can take a sequence of polynomials $\{p_n(x)\}_{n=1}^{\infty}$ that uniformly converges to the indicator function in \eqref{eq: indicator}. Then for any $\epsilon$, there exists a $n>0$ such that $p_n$ is $\frac{\epsilon}{2}$-close to the indicator function on the interval in the uniform norm. Thus, it immediately follows that
    \begin{equation}
        \norm{\mathbb{I}(\widetilde{A})-p_n(\widetilde{A})}\leq\frac{\epsilon}{2},\quad \norm{\mathbb{I}(A)-p_n(A)}\leq\frac{\epsilon}{2}, 
    \end{equation}where $\mathbb{I}:=\mathbb{I}_{[\lambda_1(A)-e,\lambda_1(A)+e]}$.
    Together with the above claim, we observe
    \begin{equation}\label{eq: eqs}
        \begin{split}
            P_{\lambda_1(\widetilde{A})}&=\mathbb{I}(\widetilde{A})\\
            &=p_n(\widetilde{A})+\mathbb{I}(\widetilde{A})-p_n(\widetilde{A})\\
            &=p_n(A)+R_n(A,E)+\mathbb{I}(\widetilde{A})-p_n(\widetilde{A})\\
            &=\mathbb{I}(A)+R_n(A,E)+p_n(A)-\mathbb{I}(A)+\mathbb{I}(\widetilde{A})-p_n(\widetilde{A})\\
        &=P_{\lambda_1(A)}+R_n(A,E)+p_n(A)-\mathbb{I}(A)+\mathbb{I}(\widetilde{A})-p_n(\widetilde{A})
        \end{split}
    \end{equation}where $R_n(A,E):=p_n(\widetilde{A})-p_n(A)$. We note that $\mathbb{I}(A)=P_{\lambda_1(A)}$ by the selection of $e$. In addition, $\norm{R_n(A,E)}\leq C\norm{E}$, since all terms in $R_n(A,E)$ have at least one $E$ as a factor, and $C$ is determined by $\norm{A}$, $e$, and $\epsilon$, based on the triangle inequality argument. Let $\ket{\widetilde{\bs q}}=\sum_{j=1}^g\beta_j\ket{\widetilde{\bs q}_j}$ be a ground state of $\widetilde{A}$. By applying the braket of the state $\ket{\widetilde{\bs q}}$ to \eqref{eq: eqs}, we therefore obtain
    \begin{equation}
        1\leq \bra{\widetilde{\bm q}}P_{\lambda_1(A)}\ket{\widetilde{\bm q}}+C\norm{E} +\epsilon,
    \end{equation}which completes the proof.

\subsection*{An exponential growth of the spectral gap with the degree of Chebyshev polynomial(section order change)}\label[appsec]{sec: proof of enlarged spectral gap}

In this section, we show that the Chebyshev polynomial filtering, $p_\ell(H)$, enlarges the spectral gap of $\overline{H}$ exponentially with respect to degree $\ell$. Recall that $\overline{H}$ denotes the transformed Hamiltonian in \eqref{Htilde}. In the following, we denote the eigenvalue of $\overline{H}$ by $\overline{\lambda}$ according to \eqref{Htilde}.

\begin{proposition}\label{prop: spectral gap}
Assume that $\lambda_{lb}\in(\lambda_1,\lambda_2]$ and $\lambda_{ub}>\lambda_N$. Then,
\begin{equation}
    \abs{p_\ell(\overline{\lambda}_1)}=\Theta\left(\left(\abs{\overline{\lambda}_1}+\sqrt{\abs{\overline{\lambda}_1}^2-1}\right)^\ell\right).
\end{equation}
Especially, 
\begin{equation}
    \abs{p_\ell(\overline{\lambda}_1)-p_\ell(\overline{\lambda}_2)}=\Theta\left(\left(\abs{\overline{\lambda}_1}+\sqrt{\abs{\overline{\lambda}_1}^2-1}\right)^\ell\right),
\end{equation}where $\overline{\lambda}=\frac{2}{\lambda_{ub}-\lambda_{lb}}(\lambda-\lambda_{ub})+1$ and $\overline{\lambda}_1>1$.
With the relaxed condition that $\lambda_{lb}\in(\lambda_2,\lambda_N]$, the following hold 
\begin{eqnarray}
    &\abs{p_\ell(\overline{\lambda}_1)}=\Theta\left(\left(\abs{\overline{\lambda}_1}+\sqrt{\abs{\overline{\lambda}_1}^2-1}\right)^\ell\right)\\
    &\abs{p_\ell(\overline{\lambda}_2)}=\Theta\left(\left(\abs{\overline{\lambda}_2}+\sqrt{\abs{\overline{\lambda}_2}^2-1}\right)^\ell\right),
\end{eqnarray}and
\begin{equation}
    \abs{p_\ell(\overline{\lambda}_1)-p_\ell(\overline{\lambda}_2)}=\Theta\left(\left(\abs{\overline{\lambda}_1}+\sqrt{\abs{\overline{\lambda}_1}^2-1}\right)^\ell\right).
\end{equation}

\end{proposition}

\begin{proof}
    We use the identity for the Chebyshev polynomial $p_{\ell}(x)$,
\begin{equation}\label{identity}
    p_{\ell}(x) = \frac{(x-\sqrt{x^2-1})^\ell+(x+\sqrt{x^2-1})^\ell}{2}.
\end{equation}We consider the first case that $\lambda_{lb}$ is given in $(\lambda_1,\lambda_2]$. Notice that in this case, $\overline{\lambda}_2\in(-1,1)$ and $\overline{\lambda}_1<-1$ where $\overline{\lambda}:=\frac{2}{\lambda_{ub}-\lambda_{lb}}(\lambda-\lambda_{ub})+1$. Then we have 
\begin{equation}
    \abs{p_\ell(\overline{\lambda}_2)}\leq 1,
\end{equation}and
\begin{equation}
    \abs{p_\ell(\overline{\lambda}_1)}=\abs{p_\ell(\abs{\overline{\lambda}_1})}\geq \frac{1}{2}\abs{\abs{\overline{\lambda}_1}+\sqrt{\abs{\overline{\lambda}_1}^2-1}}^\ell.
\end{equation}
Thus, the triangle inequality implies
\begin{equation}
\begin{split}
    \abs{p_\ell(\overline{\lambda}_1)-p_\ell(\overline{\lambda}_2)}&\geq \frac{1}{2}\abs{\abs{\overline{\lambda}_1}+\sqrt{\abs{\overline{\lambda}_1}^2-1}}^\ell-1.
\end{split}
\end{equation}Similarly, one can check the second case that $\lambda_{lb}$ is given in $(\lambda_2,\lambda_N]$, and thus $\overline{\lambda}_1<\overline{\lambda}_2<-1$, by using the identity \eqref{identity}.
\end{proof}

\subsection*{Approximate Chebyshev filtering via a Fourier-series approximation (title changed)}\label[appsec]{sec: fejer}  %the Fe\'{j}er kernel }\label[appsec]{sec: fejer}

To implement the Chebyshev filtering \cite{zhou2006self} using Hamiltonian simulation in \cref{alg: algorithm1}, we first transform the Hamiltonian as \eqref{Htilde}, and normalize the transformed matrix such that its spectrum belongs to $[-\pi,\pi]$. Denote the transformed matrix by $\overline{H}$. Let $\alpha$ satisfy $\norm{\overline{H}/\alpha}\leq \pi$. Given access to Hamiltonian simulation $e^{iHt}$, the LCU representation \eqref{eq: esimatemat} enables us to estimate elements of the matrix polynomial $p_\ell(\overline{H})$ approximately. In practical calculations, the Fe\'{j}er kernel often shows more robust and stable performance than the Dirichlet kernel \cite{weisse2006kernel}. Our numerical experience also showed that the Fe\'{j}er kernel was better than the Dirichlet kernel, requiring smaller values of $M$ in \eqref{eq: esimatemat} in all cases in \cref{fig:TFIMXXZ} and thereby yielding smaller numbers of circuit sampling.

In the following, we provide a formula of Fourier coefficients when the Fe\'{j}er kernel is considered to approximate $p_\ell(\alpha x)$. The convolution form of Fourier-series approximation using Fe\'{j}er kernel is denoted as
\begin{equation}\label{eq: fourierway}
    p_\ell(\alpha x)\approx \sum_{m=-M}^Mc_me^{imx} = (p_\ell(\alpha (\cdot))*S_M)(x),
\end{equation}where $c_m=\frac{M-m+1}{2\pi M}\int_{-\pi}^\pi p_\ell(\alpha x)e^{-imx}dx$, and $S_M(x) = \frac{\sin^2\left(Mx/2\right)}{M\sin^2(x/2)}$, the Fe\'{j}er kernel. By the Fe\'{j}er's theorem, the RHS converges uniformly to the LHS as $M\rightarrow \infty$. The $c_m$ can be computed via a recursive relation for the following auxiliary variable,
\begin{equation}
    c(m,n) = \frac{1}{2\pi}\int_{-\pi}^\pi  x^n e^{-imx}dx.
\end{equation}By the integral of parts, we have
\begin{equation}
\begin{split}
    c(m,n+1) &= \frac{i}{2\pi m}\left[\pi^{n+1}e^{-im\pi}-(-\pi)^{n+1}e^{im\pi}\right]+\frac{(n+1)}{im}c(m,n)\\
    c(0,n) &=\begin{cases}
        0\\
        \frac{\pi^n}{n+1}
    \end{cases}.
\end{split}
\end{equation}Then, we compute $c_m$ as follows,
\begin{equation}
    c_m = \frac{M-m+1}{M}\sum_{j=0}^{\ell}\alpha^j d_jc(m,j),
\end{equation}where $d_j$ is the $j$-th coefficient of the Chebyshev polynomial $p_\ell(x)$. 
Once the $c_m$ is computed, we can estimate the matrix elements of $p_\ell(\overline{H})$ by replacing $x$ with $\overline{H}/\alpha$ in \eqref{eq: fourierway} and performing the Hadamard test in \cref{fig:circuit}.

When the Fe\'{j}er kernel is used in the LCU approach \eqref{eq: esimatemat}, the following lemma shows that $M=\mathcal{O}(\Delta^{-2})$ where  $\Delta=\lambda_2-\lambda_1$,  together with the existence of a Chebyshev filtering polynomial of degree $\mathcal{O}(\Delta^{-\frac{1}{2}})$.

\begin{lemma}\label{lem: fejerkernel}
    Assume the knowledge of filtering bounds in \eqref{Htilde} such that $\lambda_{ub}\in[\lambda_N, \lambda_N+\epsilon(\lambda_N-\lambda_2)]$ for some $\epsilon\in(0,\frac{1}{\Delta})$ and $\lambda_{lb}=\frac{c_\lambda \lambda_{ub}+\lambda_2}{c_\lambda+1}$ for some $c_\lambda\in(0,\frac{1}{\epsilon})$. If $\frac{\Delta}{\lambda_{ub}-\lambda_2}\ll 1$, for any $c_\lambda\leq \Delta$, there exist a Chebyshev filtering polynomial for \cref{alg: algorithm1}, $p_\ell(x)$, and a Fourier-series approximation with Fe\'{j}er kernel, $(p_\ell*S_M)(x)$, such that 
    \begin{itemize}
        \item[(i)] $\ell=\mathcal{O}(\Delta^{-\frac{1}{2}}), \quad\abs{p_\ell(\overline{\lambda}_1)-p_\ell(\overline{\lambda}_2)}=\Omega(1),\quad \abs{p_\ell(\overline{\lambda}_1)}=\mathcal{O}(1)$,
        \item[(ii)] $M=\mathcal{O}(\Delta^{-2}), \quad\abs{(p_\ell*S_M)(\overline{\lambda}_1)-(p_\ell*S_M)(\overline{\lambda}_2)}=\Omega(1),\quad \abs{(p_\ell*S_M)(\overline{\lambda}_1)}=\mathcal{O}(1)$.
    \end{itemize}
\end{lemma}

\begin{proof}
    
Without loss of generality, we assume that $\ell$ is even, which ensures that $p_\ell(\overline{\lambda}_1)>p_\ell(\overline{\lambda}_2)>1$ by the assumption and the linear transformation in \eqref{Htilde}. Using \eqref{identity}, we can estimte a lower bound of the transformed spectral gap as
\begin{equation}
    \begin{split}
    \abs{p_\ell(\overline{\lambda}_1)-p_\ell(\overline{\lambda}_2)}
    &\geq \left(\overline{\lambda}_2-\overline{\lambda}_1\right) \abs{p_\ell'(\overline{\lambda}_2)}\\
    &\geq \left(\overline{\lambda}_2-\overline{\lambda}_1\right)\ell^2\abs{\overline{\lambda}_2}^{\ell-1}\\
    &\geq \frac{\Delta}{\lambda_{ub}-\lambda_2}\ell^2.
\end{split} 
\end{equation}The first inequality holds by the mean value theorem and the monotonicity of the Chebyshev polynomial outside $[-1,1]$ and $\overline{\lambda}_1<\overline{\lambda}_2<-1$.  The second inequality can be verified by differentiating the identity \eqref{identity} and taking the term involving $x^{\ell-1}$. The last line holds by the definition of linear transformation \eqref{Htilde}. The resulting inequality means that $\Omega(1)$ spectral gap is obtained when $\ell =\mathcal{O}(\Delta^{-\frac{1}{2}})$. Moreover, using \eqref{identity} again,  we can deduce that for small $\epsilon>0$,
\begin{equation}
\begin{split}
    \abs{p_\ell(1+\epsilon)}\leq (1+\epsilon+\sqrt{(1+\epsilon)^2-1})^\ell\leq (1+3\sqrt{\epsilon})^\ell,
\end{split}
\end{equation}which implies that ,
\begin{equation}\label{eq: ub of plambda1}
\begin{split}
    \abs{p_\ell(\overline{\lambda}_1)}&\leq  \left(1+3\sqrt{c_\lambda+\frac{\Delta}{\lambda_{ub}-\lambda_{lb}}}\right)^\ell\leq \left(1+3\sqrt{\left(1+\frac{1}{\lambda_{ub}-\lambda_{lb}}\right)\Delta}\right)^\ell\leq \mathcal{O}(1) %\exp(3\sqrt{1+ \lambda_{ub}-\lambda_{2}}),
\end{split}
\end{equation}by noticing that $\overline{\lambda}_1=-1-\left(c_\lambda+\frac{\Delta}{\lambda_{ub}-\lambda_{lb}}\right)$, $c_\lambda\leq\Delta$, and $\ell=\mathcal{O}(\Delta^{-\frac{1}{2}})$. The last inequality can be easily checked by the numeric inequality $(1+x)^{\frac{1}{x}}\leq e$ for any $x>0$. We have proven the first statement.

To prove the second statement, we use a more precise asymptotic behavior of $p_\ell'(x)$ than the above inequality when $\Delta\ll 1$, that is, if $x=1+\Delta$ and $\ell=C\Delta^{-\frac{1}{2}}$ with $C>0$,
\begin{equation}\label{eq: p'ineq}
    p_\ell'(x) = C\frac{(1+\Delta+\sqrt{2\Delta+\Delta^2})^{\frac{C}{\sqrt{\Delta}}}-(1+\Delta-\sqrt{2\Delta+\Delta^2})^{\frac{C}{\sqrt{\Delta}}}}{\Delta\sqrt{\Delta+2}}\approx \frac{C(e^{C\sqrt{2}}-e^{-C\sqrt{2}})}{\sqrt{2}\Delta}.
\end{equation}To proceed, we construct a smooth periodic extension of the polynomial $p_\ell(x)$ to be defined on $[-\pi,\pi]$, which is the requirement for using the properties of Fe\'{j}er kernel in analysis. The extension of $p_\ell(x)$ can be constructed as
\begin{subequations}
    \begin{eqnarray}
        &p_\ell(-\overline{\lambda}_1+x):= 2p_\ell(-\overline{\lambda}_1)- p_\ell(-\overline{\lambda}_1-x) \\
        &p_\ell(y)=p_\ell(-2\overline{\lambda}_1)\\
        &p_\ell(x) = p_\ell(-x)  
    \end{eqnarray}
\end{subequations}for any $x\in[0,-\overline{\lambda}_1]$ and any $y\in[-2\overline{\lambda}_1,\pi]$. That is, the function $\overline{p}_\ell(x)$ is symmetrically defined from $p_\ell(x)$ about the point $(-\overline{\lambda}_1,p_\ell(\overline{-\lambda}_1))$. Now recalling that $\overline{\lambda}_1=-1-\mathcal{O}(\Delta)$ by the assumption, we obtain that for any $x\in[\overline{\lambda}_1,-1]$
\begin{equation}\label{eq: ineq error bound}
\begin{split}
    \abs{p_\ell*S_M(x)-p_\ell(x)}&\leq \frac{1}{2\pi}\int_{-\pi}^\pi \abs{S_M(y)}\abs{p_\ell(x-y)-p_\ell(x)}dy\\
    &=\int_{\abs{y}<\delta}\abs{S_M(y)}\abs{p_\ell(x-y)-p_\ell(x)}dy+\int_{\delta\leq \abs{y}\leq \pi}\abs{S_M(y)}\abs{p_\ell(x-y)-p_\ell(x)}dy\\
    &\leq \delta \sup_{\abs{y}<\delta}\abs{p_\ell'(x-y)} +\frac{2\sup_{\abs{x}\leq\pi} \abs{p_\ell(x)}}{M\delta^2}\\
    &\leq \delta \frac{C'}{\Delta} +\frac{4\sup_{\abs{x}\leq-\overline{\lambda}_1} \abs{p_\ell(x)}}{M\delta^2}.
\end{split}
\end{equation}$C'$ is an absolute constant, which can be deduced from \eqref{eq: p'ineq}. For the second inequality, we used the decaying property of Fe\'{j}er kernel with respect to $M$.  As a result, to some tolerance $\epsilon_{tol}$, it follows that for $\delta=\frac{\Delta}{C\epsilon_{tol}}$ and $M=\frac{4\sup_{\abs{x}\leq-\overline{\lambda}_1}\abs{p_\ell(x)}}{\delta^2\epsilon_{tol}}$, $\abs{p_\ell*S_M(x)-p_\ell(x)}\leq 2\epsilon_{tol}$. This means that when $M=\mathcal{O}(\Delta^{-2})$, the Fourier-series approximation with Fe\'{j}er kernel is reasonable. In addition, using the property of Chebyshev polynomial that $\abs{p_\ell'(x)}\leq \ell^2$
for any $x\in[-1,1]$, we can also derive a similar error bound in $x\in[-1,1]$ for the Fourier-series approximation by setting $\ell=\mathcal{O}(\Delta^{-\frac{1}{2}})$ and $M=\mathcal{O}(\Delta^{-2})$, as we did in \eqref{eq: ineq error bound}. Therefore, the second statement is proven by setting  $M=\mathcal{O}(\Delta^{-2})$ for some small tolerance $\epsilon_{tol}$ such that the small tolerance is relatively negligble compared to the quantatitve properties in the first statement.

\end{proof}

\subsection*{Complexity of evaluating matrix elements using QSP techniques (title changed)}\label[appsec]{sec: QSP}

First, We discuss how to estimate elements of matrix $p_\ell(H)$ using QSVT \cite{GSLW19} with a block encoding of $H$ . The result can be straightforwardly applied to the case of QETU with Hamiltonian simulation \cite{dong2021efficient}, as we will point out.

The definition of a block-encoding of $H$ is as follows: for a matrix $H$ with $\norm{\frac{H}{\alpha}}\leq 1$, we say that $U_H$ is an $(\alpha,a,\epsilon)$-block-encoding of $H$ if $U_H$ is a $(m + a)$-qubit unitary, and
\begin{align}\label{be-H}
  \norm{H - \alpha  (\bra*{0^{\otimes a}}\otimes I)U_H(\ket*{0^{\otimes a}}\otimes I)} \leq \epsilon.
\end{align}
For a given block-encoding of a Hermitian matrix, the quantum singular value transformation (QSVT) \cite{GSLW19}  prepares a corresponding matrix polynomial to be encoded into the left corner of a unitary, namely, 
\begin{equation}\label{be-p}
    U_{p_\ell(H)/C} = \begin{pmatrix}
        p_\ell(H)/C & \vdot \\
        \vdot & \vdot\\
    \end{pmatrix},
\end{equation}where $C$ normalizes the matrix polynomial $p_\ell(H)$. Together with \cite[Theorem 31]{GSLW19}, we obtain the following overall complexity for the estimation of matrix element,
\begin{theorem}[Overall complexity]\label{thm:complexity}
For any tolerance $\epsilon>0$ and failure probability $\delta>0$,  if the sample size is 
$S=\frac{4C^2}{\epsilon^2}\log\frac{4}{\delta}$,
then the Hadamard test in \cref{fig:circuit} outputs an $\epsilon$-close unbiased estimate $\xi$ of the matrix element $\bra{\bs i}p_\ell(H)\ket{\bs j}$ with confidence $1-\delta$, namely,
\begin{eqnarray}
&\mathbb{E}[\xi]=\bra{\bs i}p_\ell(H)\ket{\bs j},\\
&\mathbb{P}\left(\abs{\xi-\bra{\bs i}p_\ell(H)\ket{\bs j}}\geq\epsilon\right)\leq\delta.
\end{eqnarray}Each sample is obtained from $\ell$ applications of $U_H$ and $U_H^{\dagger}$, one application of controlled-$U_H$ gate, and $\mathcal{O}(\ell)$ applications of other one- and two-qubit gates.
    
\end{theorem}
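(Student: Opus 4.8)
The plan is to treat the Hadamard test in \cref{fig:circuit} as a bounded, unbiased estimator of a single matrix element, fix the shot count by a concentration inequality, and then read the gate count off the QSVT construction of the block encoding in \eqref{be-p}. The argument is essentially an assembly of three standard pieces, so the work is in bookkeeping rather than in any single hard estimate.

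First I would identify what one run of the circuit measures. With $U=U_{p_\ell(H)/C}$ acting on the system register together with its $a$ block-encoding ancillas, the controlled operation is controlled-$U_i^\dagger U U_j$ applied to $\ket{0^a}\ket{\bs 0}$. Since $U_j\ket{\bs 0}=\ket{\bs j}$ and $U_i\ket{\bs 0}=\ket{\bs i}$ act only on the system register, the Hadamard-test identity shows that the $\pm1$ outcome $Z$ on the control qubit satisfies
\begin{equation}
\mathbb{E}[Z]=\mathrm{Re}\,\bra{0^a}\bra{\bs i}U\ket{0^a}\ket{\bs j}=\mathrm{Re}\,\frac{\bra{\bs i}p_\ell(H)\ket{\bs j}}{C}
\end{equation}
when $W=I$, and the imaginary part when $W=S^{\dagger}$, using that the top-left block of $U$ equals $p_\ell(H)/C$. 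Averaging the outcomes over shots and rescaling by $C$ then produces an estimate $\xi$ with $\mathbb{E}[\xi]=\bra{\bs i}p_\ell(H)\ket{\bs j}$ by linearity, which is the claimed unbiasedness.

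Next, for the tail bound I would split the shots between the real and imaginary parts and exploit that each single-shot outcome lies in $[-1,1]$. For $S$ shots, Hoeffding's inequality bounds the probability that the empirical mean of either part deviates from its expectation by more than $\epsilon/(\sqrt{2}\,C)$ by $2\exp\!\left(-S\epsilon^2/(4C^2)\right)$. Since the two real errors combine in quadrature, controlling each to accuracy $\epsilon/(\sqrt{2}\,C)$ forces the complex estimate within $\epsilon$ of the true element. Requiring each failure probability to be at most $\delta/2$ and taking a union bound over the two parts yields
\begin{equation}
S=\frac{4C^2}{\epsilon^2}\log\frac{4}{\delta},
\end{equation}
from which $\mathbb{P}\!\left(\abs{\xi-\bra{\bs i}p_\ell(H)\ket{\bs j}}\geq\epsilon\right)\leq\delta$ follows. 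Finally, the per-sample gate count is inherited from QSVT: by \cite[Theorem 31]{GSLW19}, $U_{p_\ell(H)/C}$ uses $\ell$ applications of $U_H$ and $U_H^\dagger$ plus $\mathcal{O}(\ell)$ single-qubit phase rotations, and the Hadamard test wraps this in one controlled call, prepends/appends the $X$ gates of $U_i,U_j$, and adds the two Hadamards and the optional $W$, all contributing only $\mathcal{O}(\ell)$ further one- and two-qubit gates.

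The genuinely routine parts are the unbiasedness (pure linearity) and the Hoeffding estimate; the only step requiring care is the constant. The naive single-part analysis gives $2C^2/\epsilon^2$, and it is precisely the resolution of the complex element into two real Hadamard tests, together with the split of the error budget ($\epsilon/(\sqrt2\,C)$ each) and the failure budget ($\delta/2$ each), that upgrades this to the stated $4C^2/\epsilon^2$. I would also verify explicitly that the normalization entering the variance is the QSVT constant $C$ of the encoded block $p_\ell(H)/C$, rather than the block-encoding scale $\alpha$ of $H$, since it is $p_\ell(H)/C$ and not $H/\alpha$ that occupies the measured corner.
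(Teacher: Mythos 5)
Your proposal is correct and follows essentially the same route as the paper: unbiasedness from the Hadamard-test identity applied to the block-encoded corner $p_\ell(H)/C$, separate Hoeffding bounds on the real and imaginary parts with the $\epsilon/(\sqrt{2}\,C)$ error split and a union bound (which is exactly how the paper arrives at $4\exp(-S\epsilon^2/(4C^2))$ and hence $S=\frac{4C^2}{\epsilon^2}\log\frac{4}{\delta}$), and the gate count read off from \cite[Theorem 31]{GSLW19}. Your closing remark that the relevant normalization is the QSVT constant $C$ rather than the block-encoding scale $\alpha$ of $H$ matches the paper's final step, where the single-shot outcomes are rescaled to $\{-C,C\}$.
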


\begin{proof}
Define a random variable
\begin{equation}\label{unbiasedxi}
    \xi = \frac{1}{S}\left(\sum_{j=1}^Sz_j+iz_{j+S}\right),
\end{equation}where the $z_j$'s and $z_{j+S}$ are independent random variables in $\{-1,1\}$ obtained from $2S$ applications of the Hadamard test \cref{fig:circuit} by letting $W=I$ and $S^{\dagger}$, respectively. The values $1$ and $-1$ correspond to the cases that the state in the first qubit results in $0$ and $1$, respectively.  
Notice that when $W=I$, the probability of state in the ancilla qubit in \cref{fig:circuit} satisfies
\begin{equation}\label{eq: mateleest}
    \begin{split}
        1-2\mathbb{P}\left(\text{0}\right)&=\mathrm{Re}\left(
        \bra{0^{n+m+1}}(I\otimes X_i^{\dagger})U_{p(H)}(I\otimes X_j)\ket{0^{n+m+1}}\right)\\
        &=\mathrm{Re}\left(\bra{0^{n+m+1}}(I\otimes X_i^{\dagger})\ket{0}\left(\ket{0^m}p(H)\ket{\bs j}+\ket{\perp}\right)\right)\\
        & = \mathrm{Re}\left(\bra{0^{m}}\bra{\bs i}\left(\ket{0^m}p(H)\ket{\bs j}+\ket{\perp}\right)\right)\\
        & = \mathrm{Re}\left(\bra{\bs i}p(H)\ket{\bs j}\right),
    \end{split}
\end{equation}where $\ket{\perp}$ is perpendicular to any state of the form $\ket{0^m}\ket{\bs x}$. Similarly, the imaginary part of $\bra{\bs i}p(H)\ket{\bs j}$ can be estimated by setting $W=S^\dagger$ in \cref{fig:circuit}. We observe that $\xi$ is a sample mean for the matrix element, and 
therefore we arrive at
\begin{equation}
    \mathbb{E}[\xi]=\bra{\bs i}p_\ell(H)\ket{\bs j}.
\end{equation}By the Hoeffding's inequality, we have
\begin{equation}
    \mathbb{P}\left(\abs{\mathrm{Re}(\xi)-\mathrm{Re}\left(\bra{\bs i}p_\ell(H)\ket{\bs j}\right)}\geq \epsilon\right)\leq 2\exp\left(-\frac{S\epsilon^2}{2}\right).
\end{equation}The same inequality applies to the case of the imaginary part of $\bra{\bs i}p_\ell(H)\ket{\bs j}$. Since the inequality that $\abs{a+bi-x-yi}\geq\epsilon$ for real $a,b,x,y$ implies that $\abs{a-x}\geq\frac{\epsilon}{\sqrt{2}}$ or $\abs{b-y}\geq\frac{\epsilon}{\sqrt{2}}$, we have 
$\mathbb{P}\left(\abs{\xi-\bra{\bs i}p_\ell(H)\ket{\bs j}}\geq \epsilon\right)\leq 4\exp\left(-\frac{S\epsilon^2}{4}\right)$.
In the case that $p(H)=\frac{p_\ell(\overline{H})}{C}$, it immediately follows
\begin{equation}
    \mathbb{P}\left(\abs{\widetilde{\xi}-\bra{\bs i}p_\ell(H)\ket{\bs j}}\geq \epsilon\right)\leq 4\exp\left(-\frac{S\epsilon^2}{4C^2}\right),
\end{equation}where $\widetilde{\xi}$ is defined similar to \eqref{unbiasedxi} with random variables chosen from $\{-C,C\}$ and $C$ defined in \eqref{be-p}. 
\end{proof}
Simiarly, we can do the task with QETU circuit \cite{dong2021efficient}. We note that
\cite[Theorem 1]{dong2021efficient} can be directly applied to \cref{thm:complexity} by considering $p_\ell(x)=(p_\ell\circ g)(\cos \frac{x}{2})$ where $g(x)=2\arccos(x)$. Then together with \eqref{eq: mateleest}, we can estimate elements of matrix $p_\ell(H)$ with a circuit in \cite{dong2021efficient} via the Hadamard test \cref{fig:circuit} with two ancilla qubits where one is used due to the Hadamard test and the other is defined in the QETU circuit \cite{dong2021efficient}.

\subsection{The proof of \cref{proposition}}\label[appsec]{sec: proof of proposition}

We show the first property as
\begin{equation}
    \mathbb{E}[\bs g] = \mathbb{E}_c\left[\mathbb{E}_r[\mathbb{E}_\xi[\bs g]]\right] = \mathbb{E}_c[\sum_{i_c}\frac{m_r}{N}(\bs x,H\bs e_{i_c})\bs e_{i_c}]=\frac{m_rm_c}{N^2}H\bs x.
\end{equation}The first equality holds since there are three types of randomness: sampling non-zero components of $\bs x$, computational basis element $\bs e_{i_c}$, and estimating the corresponding matrix element. In the second equality, we notice that the noise from quantum computation is averaged out and that 
\begin{equation}
    \mathbb{E}[\sum_{i_r}x_{i_r}\bs e_{i_r}]=\frac{{N-1 \choose m_r-1}}{{N \choose m_r}}\bs x = \frac{m_r}{N}\bs x.
\end{equation}Similarly, the third equality holds by considering the factor $\frac{{N-1\choose m_c-1}}{{N\choose m_c}}$. 

Next we prove the second property. 
We first consider the case $m_r=N$. In this case, we can rewrite the estimator as
\begin{equation}
    \bs g = \sum_{i_c}(\bs x,H\bs e_{i_c})\bs e_{i_c}.
\end{equation}We observe that
\begin{equation}
    \begin{split}
        \mathbb{E}[\bs g\bs g^*] & = \mathbb{E}\left[\left(\sum_{i_c}(\bs x,H\bs e_{i_c})\bs e_{i_c}\right)\left(\sum_{i_c}(\bs x,H\bs e_{i_c})\bs e_{i_c}  \right)^*\right]\\
        & = \mathbb{E}\left[\sum_{i_c,j_c}(\bs x,H\bs e_{i_c})(H\bs e_{j_c},\bs x)\bs e_{i_c}\bs e_{j_c}^*\right]\\
        & = \mathbb{E}\left[\sum_{i_c}\abs{(\bs x,H\bs e_{i_c})}^2\bs e_{i_c}\bs e_{i_c}^*+\sum_{i_c\neq j_c}(\bs x,H\bs e_{i_c})(H\bs e_{j_c},\bs x)\bs e_{i_c}\bs e_{j_c}^*\right] \\
        & = \frac{{N-1\choose m_c-1}}{{N\choose m_c}}\mathrm{diag}(H\bs x\bs x^*H)+\frac{{N-2\choose m_c-2}}{{N\choose m_c}}(H\bs x\bs x^*H-\mathrm{diag}(H\bs x\bs x^*H))\\
        & = \frac{m_c(N-m_c)}{N(N-1)}\mathrm{diag}\left(H\bs x \bs x^*H\right)+\frac{m_c(m_c-1)}{N(N-1)}H\bs x\bs x^*H. 
    \end{split}
\end{equation}By this result, we show the second property for the case that $m_r\leq N$, obtaining that  
\begin{equation}
\begin{split}
    \mathbb{E}[\bs g\bs g^*] &= \mathbb{E}_{c}\left[\mathbb{E}_{r}[\bs g \bs g^*]\right]\\
    &=\mathbb{E}_{c}\left[\mathbb{E}_{r}[(\bs g - \mathbb{E}_{r}[\bs g])(\bs g- \mathbb{E}_{r}[\bs g])^*]+\mathbb{E}_{r}[\bs g]\mathbb{E}_{r}[\bs g]^*\right]\\
    & = \mathbb{E}_c\left[\mathbb{E}_{r}[(\bs g - \mathbb{E}_{r}[\bs g])(\bs g- \mathbb{E}_{r}[\bs g])^*]\right] \\
    &+ \frac{m_c(N-m_c)}{N(N-1)}\mathrm{diag}\left(H\bs x \bs x^*H\right)+\frac{m_c(m_c-1)}{N(N-1)}H\bs x\bs x^*H,
\end{split}
\end{equation}where the conditional expectation w.r.t $r$, $\mathbb{E}_{r}[\bs g]$, depends on the sampled column index $c$. In the second and third equalities, we used the identities $\mathbb{E}[\bs v\bs v^*]=\mathbb{E}[(\bs v-\mathbb{E}[\bs v])(\bs v-\mathbb{E}[\bs v])^*]+\mathbb{E}[\bs v]\mathbb{E}[\bs v]^*$ and $\mathbb{E}_{r}[\bm g]=\sum_{i_c} (\bm x,H\bm e_{i_c})\bm e_{i_c}$. We define
\begin{equation}\label{eq: Sigmarc}
    \Sigma_{r,c}=\mathbb{E}_c[\mathbb{E}_{r}\left[(\bs g - \mathbb{E}_{r}[\bs g])(\bs g- \mathbb{E}_{r}[\bs g])^*\right]].
\end{equation}Denote $\bm x_r=\sum_{i_r}(\bm x,e_{i_r})e_{i_r}$ to be the vector which is sparsified from $\bm x$ and has the non-zero elements associated to the $m_r$ sampled indices $\{i_r\}$. We notice that
\begin{equation}\label{eq: Sigmarc prop}
    \begin{split}
        \mathrm{tr}(\Sigma_{r,c})&= \mathbb{E}_c\left[\mathbb{E}_r\left[\sum_{i_c}\abs{(\bs x_r-\bs x,H\bs e_{i_c})}^2\right]\right]\\
        &=\mathbb{E}_c\left[\sum_{i_c}\mathbb{E}_r\left[\abs{(\bs x_r-\bs x,H\bs e_{i_c})}^2\right]\right]\\
    &=\frac{m_c}{N}\sum_{i=1}^N\mathbb{E}_r\left[\abs{(\bs x_r-\bs x,H\bs e_{i})}^2\right]\\
    &= \frac{m_c}{N}\mathbb{E}_r\left[\norm{H(\bs x_r-\bs x)}^2\right]\\
    &\leq \frac{m_c}{N}\norm{H}^2\mathbb{E}_r\left[\norm{\bs x_r-\bs x}^2\right]\\
    &\leq \frac{m_c(N-m_r)}{N^2}\norm{H}^2\norm{\bm x}^2
    \end{split}
\end{equation}
The third equality holds since there are $\frac{{N-1\choose m_c-1}}{{N\choose m_c}}$ events when the index $i_c$ is sampled. The fourth equality holds since $\sum_{i=1}^Ne_ie_i^*=I$.  The last inequality holds by the definition of $\bm x_r$ whose $m_r$ non-zero elements are sampled randomly from $\bm x$.

\subsection{Convergence for the case of exact computation with $m_r=N, m_c=1$ in \eqref{unbiasedest}}\label[appsec]{sec: proof of convergence thm}
We consider the simple case that \eqref{unbiasedest} is estimated exactly with $m_r=N, m_c=1$. The analysis of this case will be extended to the case of noisy computation \eqref{unbiasedest} in \cref{sec: proof of convergence thm2}. Let us define an estimate as follows,
\begin{equation}\label{eq: semi-unbiased}
    \bs g = (H\bs x,\bs e_{i_s})\bs e_{i_s},
\end{equation}where $i_s$ denotes the index randomly selected from $\{1,..,N\}$. By definition, notice that
\begin{equation}
    \mathbb{E}[\bs g] = \frac{1}{N}H\bs x.
\end{equation}

\begin{theorem}\label{thm: convergence}
    Assume the estimator \eqref{eq: semi-unbiased}. For any precision $\epsilon>0$ and $T=\mathcal{O}\left(\frac{1}{\log c_{\epsilon,a}}\log\frac{1}{\abs{(\psi_{GS},\bs x_1)}}\right)$, the overlap of the last iterate from \cref{alg: algorithm1} with the ground state is greater than $1-\epsilon$ with probability one, namely,  
    \begin{equation}
     \mathbb{P}\left(\frac{\abs{(\psi_{GS},\bs x_{T})}^2}{\|\bs x_{T}\|^2}>1-\epsilon\right)=1.
    \end{equation}
\end{theorem}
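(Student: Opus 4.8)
The plan is to track the squared fidelity $r_t=|(\psi_{GS},\bs x_t)|^2/\|\bs x_t\|^2$ itself and show that it is a bounded submartingale driven to $1$ at a geometric rate. First I would fix the eigendecomposition $H=\sum_j\lambda_j\psi_j\psi_j^*$ with $\lambda_1<\lambda_2\le\cdots\le\lambda_N$ and $\psi_1=\psi_{GS}$, and rewrite one step of \eqref{sgd} with the estimator \eqref{eq: semi-unbiased} as the random linear map $\bs x_{t+1}=(I-a\,\bs e_{i_s}\bs e_{i_s}^*H)\bs x_t=:M_t\bs x_t$. Since only the $i_s$-th coordinate is touched, this is a randomized single-coordinate power step with $\mathbb{E}[M_t\mid\mathcal{F}_t]=I-\tfrac{a}{N}H$, whose dominant eigendirection for small $a>0$ is precisely $\psi_{GS}$ because $\lambda_1$ is the smallest eigenvalue.

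Second, I would split $\bs x_t$ into its ground component $N_t=\|P_{\lambda_1}\bs x_t\|^2$ and its complement $\bar N_t=\|(I-P_{\lambda_1})\bs x_t\|^2$, so that $r_t=N_t/(N_t+\bar N_t)$, and compute the conditional second moments of $N_{t+1}$ and $\bar N_{t+1}$. The relevant moment bookkeeping is exactly that of \cref{proposition}: with $m_r=N$ the row-sampling term $\Sigma_{r,c}$ vanishes, leaving only the diagonal column-sampling contribution $\mathbb{E}[\bs g_t\bs g_t^*\mid\mathcal{F}_t]=\tfrac1N\mathrm{diag}(H\bs x_t\bs x_t^*H)$. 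Using $HP_{\lambda_1}=P_{\lambda_1}H=\lambda_1P_{\lambda_1}$, the ground component then grows by the factor $A=(1-a\lambda_1/N)^2$ while the complement grows by at most $B=1-2a\lambda_2/N+\tfrac{a^2}{N}\max_{j\ne1}\lambda_j^2$ (these are the constants of \eqref{eq: growing rate}, in the $N$-scaling appropriate to $m_r=N$). The spectral gap $\lambda_2>\lambda_1$ forces $A>B$ once $a$ is small enough, and I would combine the two moment relations into the fractional-linear bound $\mathbb{E}[r_{t+1}\mid\mathcal{F}_t]\ge Ar_t/\big(Ar_t+B(1-r_t)\big)$.

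Third, I would read off both assertions from this Möbius recursion. Because $A\ge B$ the right-hand side is $\ge r_t$, so $(r_t)$ is a submartingale bounded in $[0,1]$; equivalently $s_t:=(1-r_t)/r_t$ obeys $\mathbb{E}[s_{t+1}\mid\mathcal{F}_t]\le (B/A)\,s_t$, making $s_t$ a nonnegative supermartingale with $\mathbb{E}[s_t]\le (B/A)^{t}s_1$. Summability of $\sum_t\mathbb{E}[s_t]$ gives $s_t\to0$, i.e. $r_t\to1$, almost surely, which is the probability-one convergence; and the geometric decay yields $r_t>1-\epsilon$ as soon as $(B/A)^{t}s_1<\epsilon/(1-\epsilon)$, i.e. after $T=\mathcal{O}\big(\tfrac{1}{\log c_{\epsilon,a}}\log\tfrac{1}{|(\psi_{GS},\bs x_1)|}\big)$ steps, with $c_{\epsilon,a}=A/B$ recovered as the worst-case growth rate \eqref{eq: growing rate} evaluated at $r=1-\epsilon/2$.

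The main obstacle is the nonlinearity of the fidelity as a ratio. Because the numerator $N_{t+1}$ and denominator $N_{t+1}+\bar N_{t+1}$ are correlated through the single random index $i_s$, one cannot replace $\mathbb{E}[N_{t+1}/(N_{t+1}+\bar N_{t+1})]$ by the ratio of conditional expectations, so the clean Möbius recursion above is only an inequality and must be proven as a conditional, not merely unconditional, statement. I would establish it by expanding $r_{t+1}$ to second order in $a$, absorbing the $O(a^2)$ diagonal variance into the constant $B$, and choosing $a$ small enough that the quadratic remainder keeps the inequality one-sided for every $r_t<1-\epsilon$; controlling this remainder uniformly over the admissible range is the delicate step and is exactly what forces the dependence of $c_{\epsilon,a}$ on both $a$ and $\epsilon$.
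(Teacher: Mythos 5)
Your overall architecture (track $r_t$, derive a fractional-linear one-step bound with constants $A$ and $B$, exploit $A>B$ from the spectral gap) matches the paper's, but two of your key steps do not go through as stated. First, the passage from the submartingale-type bound $\mathbb{E}[r_{t+1}\mid\mathcal{F}_t]\ge Ar_t/\bigl(Ar_t+B(1-r_t)\bigr)$ to the supermartingale claim $\mathbb{E}[s_{t+1}\mid\mathcal{F}_t]\le (B/A)\,s_t$ for $s_t=(1-r_t)/r_t$ is not valid: $r\mapsto(1-r)/r$ is convex and decreasing, so Jensen gives $\mathbb{E}[s_{t+1}\mid\mathcal{F}_t]\ge\bigl(1-\mathbb{E}[r_{t+1}\mid\mathcal{F}_t]\bigr)/\mathbb{E}[r_{t+1}\mid\mathcal{F}_t]$ --- the wrong direction. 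An upper bound on $\mathbb{E}[1/r_{t+1}]$ would require showing that the overlap cannot come close to zero after one random coordinate update, a separate estimate (depending on $a$ and on how small $r_t$ already is) that you have not supplied. Moreover, even granting geometric decay of $\mathbb{E}[s_t]$, Markov's inequality at a fixed $T$ yields only convergence in probability, not the probability-one statement at time $T$ that the theorem asserts. The paper avoids both problems by (i) bounding the expectation of the ratio directly via Cauchy--Schwarz, $\mathbb{E}\bigl[\abs{Z}^2/Y\bigr]\ge\abs{\mathbb{E}[Z]}^2/\mathbb{E}[Y]$ with $Z=(\psi_{GS},\bs x_{k+1})$ and $Y=\norm{\bs x_{k+1}}^2$ --- this is exactly the device that resolves the numerator/denominator correlation you flag as ``the main obstacle,'' making your proposed second-order expansion in $a$ (which you leave unexecuted) unnecessary --- and (ii) working with the capped process $R_k=r_k\mathbb{I}_k$, where $\mathbb{I}_k$ indicates that $r_j\le 1-\epsilon$ for all $j\le k$: its conditional expectation grows by the factor $c_\epsilon>1$ while it is deterministically bounded by $1-\epsilon$, and this tension forces the threshold to be crossed with probability one by time $T$.

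Second, your moment bookkeeping drops a term in a way that hides where the $\epsilon$-dependence of $a$ and of $c_{\epsilon,a}$ actually comes from. The column-sampling variance contributes $\frac{a^2}{N}\psi_{GS}^*\mathrm{diag}(H\bs x_k\bs x_k^*H)\psi_{GS}$ to the second moment of the ground component; this is controlled by $\norm{H\bs x_k}^2$, not by the residual $(1-r_k)\norm{\bs x_k}^2$, so it cannot be absorbed into $B(1-r_k)$ and survives even as $r_k\to 1$. The paper keeps it as an additive $\frac{a^2}{N}\lambda_1^2$ term in the denominator of \eqref{eq: original reeq} and defeats it on the event $r_k\le 1-\epsilon$ by choosing $a<\sqrt{(\lambda_2-\lambda_1)\epsilon/(2\lambda_1^2)}$; this is precisely why the rate constant is $c_\epsilon=A/\bigl(A(1-\tfrac{\epsilon}{2})+B\tfrac{\epsilon}{2}\bigr)$ rather than $A/B$. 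If you replace the supermartingale step by the capped-process argument and reinstate this variance term, your proof becomes essentially the paper's.
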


\begin{proof}
In this proof and thereafter, we denote $\mathbb{E}_k$ to be expectation conditioned on the $k$-th iteration, $\bm x_k$. We observe that the Cauchy-Schwarz inequality yields
\begin{equation}\label{eq: recursive ineq}
\begin{split}
    \mathbb{E}_k\left[\frac{\abs{(\psi_{GS},\bs x_{k+1})}^2}{\|\bs x_{k+1}\|^2}\right]&\geq \frac{\abs{\mathbb{E}_k[(\psi_{GS},\bs x_{k+1})]}^2}{\mathbb{E}_k[\|\bs x_{k+1}\|^2]}\\
    & = \frac{\abs{\mathbb{E}_k[(\psi_{GS},\bs x_{k+1})]}^2}{\sum_{j=1}^N\mathbb{E}_k[\abs{(\psi_{j},\bs x_{k+1})}^2]},
\end{split}
\end{equation}assuming deterministically non-zero denominators.  For sufficiently small step size $a>0$, the norm of $\bs x_k$ is bounded below by some positive value deterministically. Using \eqref{eq: ub of g}, we can easily check this since
\begin{equation}\label{eq: gradbound}
\begin{split}
    &\norm{\bs g_k}\leq \norm{H}\norm{\bs x_k}\\
    &\Longrightarrow\norm{\bs x_k}\geq (1-\norm{H}a)^{k-1}>0,
\end{split}
\end{equation}if $a<\frac{1}{\norm{H}}$. In the last line, we used the recursive relation \eqref{sgd}, the initial condition $\norm{\bm x_1
}=1$, and the triangle inequality. The following analysis is performed assuming $a<\frac{1}{\norm{H}}$.

First, we estimate the nominator and the denominator of the right side in \eqref{eq: recursive ineq}. From \eqref{sgd} and \eqref{eq: semi-unbiased}, notice that
\begin{equation}\label{eq: nominator}
\begin{split}
    &\mathbb{E}_k[(\psi_{GS},\bs x_{k+1})]\\
    &=(\psi_{GS},\mathbb{E}_k[\bs x_{k+1}])\\
    &=(\psi_{GS}, \bs x_k) - a(\psi_{GS},\mathbb{E}_k[\bs g_k])\\
    &=(\psi_{GS},\bs x_k) - a(\psi_{GS},\frac{1}{N}H\bs x_k)\\
    &=\left(1-\frac{\lambda_1}{N}a\right)(\psi_{GS},\bs x_k).
\end{split} 
\end{equation}
In addition, we have 
\begin{equation}
    \mathbb{E}_k[\abs{(\psi_{j},\bs x_{k+1})}^2]=\abs{(\psi_{j},\bs x_k)}^2-2a\mathbf{Re}[(\psi_{j},\bs x_k)\mathbb{E}_k[(\bs g_k,\psi_{j})]]+a^2\mathbb{E}_k[\abs{(\psi_{j},\bs g_k)}^2].
\end{equation}Let's estimate the middle and last terms. First, by definition \eqref{eq: semi-unbiased}, the middle term reduces to
\begin{equation}
    (\psi_{j},\bs x_k)\mathbb{E}_k[(\bs g_k,\psi_{j})]=\frac{\lambda_j}{N}\abs{(\psi_{j},\bs x_k)}^2.
\end{equation}To estimate the last term, we take two steps. First we simplify the following outer product
\begin{equation}
    \begin{split}
        \mathbb{E}_k[\bs g_k\bs g_k^*] &= \mathbb{E}_k\left[(H\bs x_k,\bs e_{i_k})(\bs e_{i_k},H\bs x_k)\bs e_{i_k}\bs e_{i_k}^*\right] = \frac{1}{N}\mathrm{diag}(H\bs x_k\bs x_k^*H)
    \end{split} 
\end{equation}Then, the last term can be simplifed to
\begin{equation}
    \mathbb{E}_k[\abs{(\psi_{j},\bs g_k)}^2]=\frac{1}{N}\psi_{j}^*\mathrm{diag}\left(H\bs x_k\bs x_k^*H\right)\psi_{j}.
\end{equation}To put these all together, we have
\begin{equation}
    \begin{split}\label{eq: eachinnerpd}
         \mathbb{E}_k[\abs{(\psi_{j},\bs x_{k+1})}^2]=b_j\abs{(\psi_{j},\bs x_k)}^2 + \frac{a^2}{N}\psi_{j}^*\mathrm{diag}\left(H\bs x_k\bs x_k^*H\right)\psi_{j},
    \end{split}
\end{equation}where 
\begin{equation}\label{eq: tau}
    b_j=1-\frac{2}{N}\lambda_ja.
\end{equation}Notice that 
\begin{equation}\label{eq: monotone b}
    b_j\geq b_{j+1}> 0
\end{equation}whenever $a>0$ with $a$ being sufficiently small.
From \eqref{eq: eachinnerpd}, we observe that
\begin{equation}\label{eq: ubnorm}
\begin{split}
    &\mathbb{E}_k[\norm{\bs x_{k+1}}^2]\\
    &=\sum_{j=1}^N\mathbb{E}_k[\abs{(\psi_{j},\bs x_{k+1})}^2]\\
    &=\sum_{j=1}^Nb_j\abs{(\psi_{j},\bs x_k)}^2+\frac{a^2}{N}\sum_{j=1}^N\mathrm{tr}\left(\psi_{j}^*\mathrm{diag}\left(H\bs x_k\bs x_k^*H\right)\psi_{j}\right)\\
    &=\sum_{j}b_j\abs{(\psi_{j},\bs x_k)}^2+\frac{a^2}{N}\mathrm{tr}\left(\mathrm{diag}\left(H\bs x_k\bs x_k^*H\right)\sum_j\psi_{j}\psi_{j}^*\right)\\
    &=\sum_{j}b_j\abs{(\psi_{j},\bs x_k)}^2+\frac{a^2}{N}\mathrm{tr}\left(H\bs x_k\bs x_k^*H\right)\\
    &= \sum_{j}b_j\abs{(\psi_{j},\bs x_k)}^2 + \frac{a^2}{N}\norm{H\bs x_k}^2.
\end{split}
\end{equation}Observe that
\begin{equation}\label{eq: residual}
    \norm{H\bs x_k}^2 = \sum_{j=1}^N\lambda_j^2\abs{(\psi_{j},\bs x_k)}^2\leq \lambda_1^2\abs{(\psi_{GS},\bs x_k)}^2+\max_{j\neq 1}\lambda_j^2\left(\norm{\bs x_k}^2-\abs{(\psi_{GS},\bs x_k)}^2\right).
\end{equation}
With these observations,  we derive a recursive relation from the inequality \eqref{eq: recursive ineq} as,
\begin{equation}
    \begin{split}\label{eq: recursive ineq 2}
        \mathbb{E}_k\left[\frac{\abs{(\psi_{GS},\bs x_{k+1})}^2}{\|\bs x_{k+1}\|^2}\right]&\geq \frac{\abs{\mathbb{E}_k[(\psi_{GS},\bs x_{k+1})]}^2}{\mathbb{E}_k[\norm{\bs x_{k+1}}^2]}\\
        &= \frac{\left(1-\frac{\lambda_1}{N}a\right)^2\abs{(\psi_{GS},\bs x_k)}^2}{\sum_{j}b_j\abs{(\psi_{j},\bs x_k)}^2 + \frac{a^2}{N}\norm{H\bs x_k}^2}\quad\eqref{eq: nominator}, \eqref{eq: ubnorm}\\
        &\geq \frac{\left(1-\frac{\lambda_1}{N}a\right)^2\abs{(\psi_{GS},\bs x_k)}^2}{b_1\abs{(\psi_{GS},\bs x_k)}^2 + b_2\sum_{j\neq 1}\abs{(\psi_{j},\bs x_k)}^2 + \frac{a^2}{N}\norm{H\bs x_k}^2}\quad \eqref{eq: monotone b} \\
        &= \frac{\left(1-\frac{\lambda_1}{N}a\right)^2\abs{(\psi_{GS},\bs x_k)}^2}{b_1\abs{(\psi_{GS},\bs x_k)}^2 + b_2(\norm{\bm x_k}^2-\abs{(\psi_{GS},\bs x_k)}^2) + \frac{a^2}{N}\norm{H\bs x_k}^2}\\
        &\geq \frac{\left(1-\frac{\lambda_1}{N}a\right)^2\abs{(\psi_{GS},\bs x_k)}^2}{(b_1+\frac{a^2}{N}\lambda_1^2)\abs{(\psi_{GS},\bs x_k)}^2 + (b_2+\frac{a^2}{N}\max_{j\neq 1}\lambda_j^2)(\norm{\bm x_k}^2-\abs{(\psi_{GS},\bs x_k)}^2)}\quad \eqref{eq: residual}\\
        &\geq \frac{A\frac{\abs{(\psi_{GS},\bs x_k)}^2}{\norm{x_k}^2}}{A\frac{\abs{(\psi_{GS},\bs x_k)}^2}{\norm{x_k}^2}+B\left(1-\frac{\abs{(\psi_{GS},\bs x_k)}^2}{\norm{x_k}^2}\right)+\frac{a^2(N-1)}{N^2}\lambda_1^2}\quad \eqref{eq: tau},\\
        &\frac{A\frac{\abs{(\psi_{GS},\bs x_k)}^2}{\norm{x_k}^2}}{A\frac{\abs{(\psi_{GS},\bs x_k)}^2}{\norm{x_k}^2}+B\left(1-\frac{\abs{(\psi_{GS},\bs x_k)}^2}{\norm{x_k}^2}\right)+\frac{a^2}{N}\lambda_1^2},
    \end{split}
\end{equation}where 
\begin{subequations}\label{eq: coeffs}
    \begin{eqnarray}  &A = \left(1-\frac{\lambda_1}{N}a\right)^2\\
    &B = b_2+\frac{a^2}{N}\max_{j\neq 1}\lambda_j^2.
    \end{eqnarray}
\end{subequations}Let us denote $r_k:=\frac{\abs{(\psi_{GS},\bs x_{k})}^2}{\|\bs x_{k}\|^2}$. Note that $r_k\in[0,1]$ for any $k$ by definition. With this notation, the above inequality is rewritten as
\begin{equation}\label{eq: original reeq}
    \mathbb{E}_k[r_{k+1}]\geq \frac{A}{Ar_k+B(1-r_k)+\frac{a^2}{N}\lambda_1^2}r_k.
\end{equation}Under the spectral gap assumption that $\lambda_1<\lambda_2$, we observe that
\begin{equation}\label{eq: A-B}
\begin{split}
    A-B&=\frac{2a}{N}(\lambda_2-\lambda_1)-\frac{a^2}{N}(\max_{j\neq 1}\lambda_j^2-\frac{\lambda_1^2}{N})>\frac{a}{N}(\lambda_2-\lambda_1)>0
\end{split}
\end{equation}for any $a\in\left(0,\frac{\lambda_2-\lambda_1}{\abs{\max_{j\neq 1}\lambda_j^2-\frac{\lambda_1^2}{N}}}\right)$. 

For $\epsilon>0$, we consider the event that $r_k\leq 1-\epsilon$. In this event, if we set $a\leq\frac{(\lambda_2-\lambda_1)\epsilon}{2\lambda_1^2}$, then it follows from \eqref{eq: A-B},
\begin{equation}
    a^2<\frac{(\lambda_2-\lambda_1)a\epsilon}{2\lambda_1^2}<\frac{N(A-B)\epsilon}{2\lambda_1^2}.
\end{equation}From this, the ratio in \eqref{eq: original reeq} satisfies
\begin{equation}\label{eq: A,B}
\begin{split}
    \frac{A}{Ar_k+B(1-r_k)+\frac{a^2}{N}\lambda_1^2}&\geq \frac{A}{A(1-\epsilon)+B\epsilon+\frac{a^2}{N}\lambda_1^2}\\
    &>\frac{A}{A(1-\epsilon)+B\epsilon+\frac{(A-B)\epsilon}{2}}\\
    &=\frac{A}{A(1-\frac{\epsilon}{2})+B\frac{\epsilon}{2}}>1,
\end{split}
\end{equation}since $B<A$. Denote the ratio as
\begin{equation}\label{cepsil}
    c_{\epsilon}:=\frac{A}{A(1-\frac{\epsilon}{2})+B\frac{\epsilon}{2}}.
\end{equation}Let $R_k:=r_{k}\mathbb{I}_{k}$ where $\mathbb{I}_k$ values one if $r_j\leq 1-\epsilon$ for all $j\leq k$ and zero otherwise. By this definition, it follows that conditioned on $\mathcal{F}_k$, where the $\sigma$-algebra $\mathcal{F}_k$ is associated with the stochastic process up to $k$, 
\begin{equation}
    \mathbb{E}_k[R_{k+1}]\geq c_\epsilon R_k.
\end{equation}This implies 
\begin{equation}
    \mathbb{E}[R_T]\geq c_\epsilon^{T-1}\abs{\bra{\psi_1}\ket{\bm x_1}}^2.
\end{equation}For $T=\mathcal{O}\left(\log_{c_\epsilon}\frac{1-\epsilon}{\abs{\bra{\psi_1}\ket{\bm x_1}}^2}\right)$, it should be that $R_T=1-\epsilon$ with probability one, since $R_k\leq 1-\epsilon$ for all $k$ by definition. Therefore, this implies that $r_T\geq 1-\epsilon$ with probability one conditioned on that $r_k<1-\epsilon$ for all $k<T$, or it could happen that $r_t\geq 1-\epsilon$ for some $t<T$. As a result, the worst-case iteration number scales as
\begin{equation}\label{eq: worst-ineq}
    \frac{1}{\log c_\epsilon}=\frac{1}{\log \left(1+\frac{(A-B)\epsilon}{A-\epsilon(A-B)}\right)}\leq c\frac{A-\epsilon(A-B)}{(A-B)\epsilon}\leq \frac{c}{(A-B)\epsilon} =\mathcal{O}\left( \frac{N}{(\lambda_2-\lambda_1)a\epsilon}\right)=\mathcal{O}\left( \frac{N\lambda_1^2}{(\lambda_2-\lambda_1)\epsilon^{2}}\right),
\end{equation}for some absolute constant $c$, since $\frac{1}{A-B}=\mathcal{O}\left(\frac{N}{a(\lambda_2-\lambda_1)}\right)$ by \eqref{eq: A-B} and $a\leq \frac{(\lambda_2-\lambda_1)\epsilon}{2\lambda_1^2}$.

For the case of degenerate ground states, let $\{\psi_{GS,j}\}_{j=1}^g$ be the ground states and $\lambda_1=\cdots=\lambda_g<\lambda_{g+1}\leq\cdots\lambda_N$. By summing \eqref{eq: recursive ineq 2} over the ground states, we achieve that
\begin{equation}
    \begin{split}
        \mathbb{E}_k\left[\frac{\sum_{j=1}^g\abs{(\psi_{GS,j},\bs x_{k+1})}^2}{\|\bs x_{k+1}\|^2}\right]&\geq \frac{\sum_{j=1}^g\abs{\mathbb{E}_k[(\psi_{GS,j},\bs x_{k+1})]}^2}{\mathbb{E}_k[\norm{\bs x_{k+1}}^2]}\\
        &=\frac{\left(1-\frac{\lambda_1}{N}a\right)^2\sum_{j=1}^g\abs{(\psi_{GS,j},\bs x_k)}^2}{\sum_{j=1}^Nb_j\abs{(\psi_{j},\bs x_k)}^2 + \frac{a^2}{N}\norm{H\bs x_k}^2}\\
        &\geq \frac{A\frac{\sum_{j=1}^g\abs{(\psi_{GS,j},\bs x_k)}^2}{\norm{\bs x_k}^2}}{A\frac{\sum_{j=1}^g\abs{(\psi_{GS,j},\bs x_k)}^2}{\norm{\bs x_k}^2}+B\left(1-\frac{\sum_{j=1}^g\abs{(\psi_{GS,j},\bs x_k)}^2}{\norm{\bs x_k}^2}\right)+\frac{a^2(N-1)}{N^2}\lambda_1^2},
    \end{split}
\end{equation}where \begin{subequations}
    \begin{eqnarray}  &A = \left(1-\frac{\lambda_1}{N}a\right)^2\\
    &B = b_{g+1}+\frac{a^2}{N}\max_{j\geq g+1}\lambda_j^2,
    \end{eqnarray}
\end{subequations}and $b_{j}$ is defined in \eqref{eq: tau}. By denoting $r_k=\frac{\sum_{j=1}^g\abs{(\psi_{GS,j},\bs x_k)}^2}{\norm{\bs x_k}^2}$, we can obtain a similar recursive inequality as \eqref{eq: original reeq}. Additionally, from the assumption that $\lambda_1=\cdots=\lambda_g<\lambda_{g+1}$, we obtain
\begin{equation}
    A-B = \frac{2a}{N}(\lambda_{g+1}-\lambda_1)-\frac{a^2}{N}(\max_{j\geq g+1}\lambda_j^2-\frac{\lambda_1^2}{N})>\frac{a}{N}(\lambda_{g+1}-\lambda_1)>0
\end{equation}for any $a\in\left(0,\frac{\lambda_{g+1}-\lambda_1}{\abs{\max_{j\geq g+1}\lambda_j^2-\frac{\lambda_1^2}{N}}}\right)$. Therefore, the proof is complete similar to the above proof.

\end{proof}

\subsection{The proof of \cref{thm: convergence2}}\label[appsec]{sec: proof of convergence thm2}We prove \cref{thm: convergence2} which generalizes \cref{thm: convergence} to the case $m_r<N$.  We consider the estimator \eqref{unbiasedest} and show convergence of \cref{alg: algorithm1}.
By \cref{proposition}, we notice that
\begin{equation}\label{eq: nominator2}
\begin{split}
    &\mathbb{E}[(\psi_{GS},\bs x_{k+1})]\\
    &=(\psi_{GS},\mathbb{E}[\bs x_{k+1}])\\
    &=(\psi_{GS}, \bs x_k) - a(\psi_{GS},\mathbb{E}[\bs g_k])\\
    &=(\psi_{GS},\bs x_k) - a(\psi_{GS},\frac{m_rm_c}{N^2}H\bs x_k)\\
    &=\left(1-\frac{m_rm_c\lambda_1}{N^2}a\right)(\psi_{GS},\bs x_k),
\end{split} 
\end{equation}and
\begin{equation}
    \mathbb{E}_k[(\psi_{j},\bs x_k)(\bs g_k,\psi_{j})]=(\psi_{j}, \bs x_k)\mathbb{E}_k[(\bs g_k,\psi_{j})]=(\psi_{j}, \bs x_k)(\frac{m_rm_c}{N^2}H\bs x_k,\psi_{j})=\frac{m_rm_c\lambda_j}{N^2}\abs{(\psi_{j},\bs x_k)}^2.
\end{equation}Additionally, it follows that
\begin{equation}
\begin{split}
    &\mathbb{E}_k[\abs{(\psi_{j},\bs g_k)}^2] = \frac{m_c(N-m_c)}{N(N-1)}\psi_{j}^*\mathrm{diag}\left(H\bs x_k\bs x_k^*H\right)\psi_{j} +\frac{\lambda_j^2m_c(m_c-1)}{N(N-1)}\abs{(\psi_{j},\bs x_k)}^2+\psi_{j}^*\Sigma_{r,c}\psi_{j},
\end{split}
\end{equation}and as we did in \eqref{eq: ubnorm}, we achieve that
\begin{equation}
    \mathbb{E}_k[\norm{\bs x_{k+1}}^2]=\sum_{j}b_j\abs{(\psi_{j},\bs x_k)}^2 +\frac{m_c(N-m_c)}{N(N-1)}a^2\norm{H\bs x_k}^2+a^2\mathrm{tr}(\Sigma_{r,c}),    
\end{equation}where $b_j=1-\frac{2a\lambda_jm_rm_c}{N^2}+a^2\frac{\lambda_j^2m_c(m_c-1)}{N(N-1)}$ and $\Sigma_{r,c}$ is defined in \cref{proposition}. Then, we have
\begin{equation}\label{eq: recursive ineq 3}
    \begin{split}
        \mathbb{E}_k\left[\frac{\abs{(\psi_{GS},\bs x_{k+1})}^2}{\|\bs x_{k+1}\|^2}\right]&\geq \frac{\abs{\mathbb{E}_k[(\psi_{GS},\bs x_{k+1})]}^2}{\mathbb{E}_k[\norm{\bs x_{k+1}}^2]}\\
        &= \frac{\left(1-\frac{m_rm_c\lambda_1}{N^2}a\right)^2\abs{(\psi_{GS},\bs x_k)}^2}{\sum_{j}b_j\abs{(\psi_{j},\bs x_k)}^2 +\frac{m_c(N-m_c)}{N(N-1)}a^2\norm{H\bs x_k}^2+a^2\mathrm{tr}(\Sigma_{r,c})}\\
        &\geq \frac{\left(1-\frac{m_rm_c\lambda_1}{N^2}a\right)^2\abs{(\psi_{GS},\bs x_k)}^2}{\sum_{j}b_j\abs{(\psi_{j},\bs x_k)}^2 +\left[\frac{m_c(N-m_c)}{N(N-1)}\norm{H}^2\norm{\bs x_k}^2+\mathrm{tr}(\Sigma_{r,c})\right]a^2}\\
        &\geq \frac{\left(1-\frac{m_rm_c\lambda_1}{N^2}a\right)^2\abs{(\psi_{GS},\bs x_k)}^2}{b_1\abs{(\psi_{GS},\bs x_k)}^2 + b_2(\norm{\bm x_k}^2-\abs{(\psi_{GS},\bs x_k)}^2) +\left[\frac{m_c(N-m_c)}{N(N-1)}\norm{H}^2\norm{\bs x_k}^2+\mathrm{tr}(\Sigma_{r,c})\right]a^2}\\
        &\geq \frac{A\frac{\abs{(\psi_{GS},\bs x_k)}^2}{\norm{\bs x_k}^2}}{A\frac{\abs{(\psi_{GS},\bs x_k)}^2}{\norm{\bs x_k}^2}+B\left(1-\frac{\abs{(\psi_{GS},\bs x_k)}^2}{\norm{\bs x_k}^2}\right)+\left[\frac{-A+b_1}{a^2}+\frac{m_c(N-m_c)}{N(N-1)}\norm{H}^2+\frac{\mathrm{tr}(\Sigma_{r,c})}{\norm{\bm x_k}^2}\right]a^2}\\
        &\geq \frac{A\frac{\abs{(\psi_{GS},\bs x_k)}^2}{\norm{\bs x_k}^2}}{A\frac{\abs{(\psi_{GS},\bs x_k)}^2}{\norm{\bs x_k}^2}+B\left(1-\frac{\abs{(\psi_{GS},\bs x_k)}^2}{\norm{\bs x_k}^2}\right)+\frac{3m_c}{N}\norm{H}^2a^2}
    \end{split}
\end{equation}where \begin{subequations}\label{eq: coeffs3}
    \begin{eqnarray}  &A = \left(1-\frac{m_rm_c\lambda_1}{N^2}a\right)^2\\
    &B = b_2.
    \end{eqnarray}
\end{subequations}The third inequality holds since $\norm{H\bm x_k}^2\leq \norm{H}^2\norm{\bm x_k}^2$. The second last inequality holds since $\abs{(\psi_{GS},\bs x_k)}^2\leq\norm{\bm x_k}^2$. 
The last inequality follows from the assumption that $N\gg \lambda_1,m_r,m_c,\norm{H}$, which yields
\begin{equation}
\begin{split}
     &\frac{-A+b_1}{a^2}+\frac{m_c(N-m_c)}{N(N-1)}\norm{H}^2+\frac{\mathrm{tr}(\Sigma_{r,c})}{\norm{\bm x_k}^2}\\
     &\leq-\frac{m_r^2m_c^2\lambda_1^2}{N^4}+\frac{m_c(m_c-1)}{N(N-1)}\lambda_1^2+\frac{m_c(N-m_c)}{N(N-1)}\norm{H}^2+\frac{m_c(N-m_r)}{N^2}\norm{H}^2\\
     &\leq \frac{3m_c}{N}\norm{H}^2.
\end{split} 
\end{equation}by \cref{proposition}.
In this case, the difference between these two constants is
\begin{equation}
\begin{split}
    A-B &= \frac{2m_rm_ca}{N^2}(\lambda_2-\lambda_1)+\left(\frac{m_r^2m_c^2\lambda_1^2}{N^4}-\frac{\lambda_2^2m_c(m_c-1)}{N(N-1)}\right)a^2\\
    &\geq \frac{m_rm_ca}{N^2}(\lambda_2-\lambda_1)+\frac{m_r^2m_c^2\lambda_1^2}{N^4}a^2>0,
\end{split}
\end{equation}for any $a\in\left(0,\frac{2m_r(\lambda_2-\lambda_1)}{\lambda_2^2m_c}\right)$. Following after \eqref{eq: original reeq} in \cref{sec: proof of convergence thm}, we complete the proof of \cref{thm: convergence2} similarly  by setting $a<\frac{m_r\epsilon}{6N\norm{H}^2}$.  Lastly, we notice that for any $a<\frac{1}{\norm{H}}$, $\norm{\bm x_k}>0$ for all $k$ as we derived.

Similar to \eqref{eq: worst-ineq}, the worst-case iteration number scales as
\begin{equation}
    \frac{1}{(A-B)\epsilon}=\mathcal{O}\left( \frac{N^2}{m_rm_ca(\lambda_2-\lambda_1)\epsilon}\right)=\mathcal{O}\left(\frac{N^3}{m_r^2m_c\epsilon^2}\right).
\end{equation}
This cubic-scaling complexity with respect to $N$ can be reduced to a linear scaling if we consider the sparsity of $H$ when sampling the indices in \eqref{eq: esimatemat}.  First, we observe that the first property in \cref{proposition} is modified to 
\begin{equation}
     \mathbb{E}[\bs g] =\frac{m_rm_c}{Ns}H\bs x.
\end{equation}Similar to the proof of \cref{proposition}, another modification is made as 
\begin{equation}\label{eq: modified variance}
    \mathbb{E}[\bs g\bs g^*]=\frac{m_c(N-m_c)}{N(N-1)}\mathrm{diag}\left(H\bs x \bs x^*H\right)+\frac{m_c(m_c-1)}{N(N-1)}H\bs x\bs x^*H+\Sigma_{r,c},
\end{equation}where 
\begin{equation}
    \begin{split}
        \mathrm{tr}(\Sigma_{r,c})&= \mathbb{E}_c\left[\mathbb{E}_r\left[\sum_{i_c}\abs{(\bs x_{r,c}-\bs x,H\bs e_{i_c})}^2\right]\right]\\
    &=\mathbb{E}_c\left[\sum_{i_c}\mathbb{E}_r\left[\abs{(\bs x_{r,c}-\bs x,H\bs e_{i_c})}^2\right]\right]\\
    &=\frac{m_c}{N}\sum_{i=1}^N\mathbb{E}_r\left[\abs{(\bs x_{r,i}-\bs x,H\bs e_{i})}^2\right]\\
    &\leq \frac{m_c(s-m_r)}{Ns}\norm{H}^2\norm{\bm x}^2.
    \end{split}
\end{equation}
Compared to \cref{proposition}, we notice the change of the last term in \eqref{eq: modified variance} such that its upper bound has the reduced factor, $\frac{s-m_r}{s}$, instead $\frac{N-m_r}{N}$. This can be similarly checked as \eqref{eq: Sigmarc prop} by considering the sparsity of $H$. With this result in mind,  we can modify the inequality \eqref{eq: recursive ineq 3} as
\begin{equation}\label{eq: recursive ineq 4}
    \begin{split}
        \mathbb{E}_k\left[\frac{\abs{(\psi_{GS},\bs x_{k+1})}^2}{\|\bs x_{k+1}\|^2}\right]\geq \frac{A\frac{\abs{(\psi_{GS},\bs x_k)}^2}{\norm{\bs x_k}^2}}{A\frac{\abs{(\psi_{GS},\bs x_k)}^2}{\norm{\bs x_k}^2}+B\left(1-\frac{\abs{(\psi_{GS},\bs x_k)}^2}{\norm{\bs x_k}^2}\right)+\frac{3m_c}{N}\norm{H}^2a^2},
    \end{split}
\end{equation}where
\begin{subequations}\label{eq: coeffs4}
    \begin{eqnarray}  &A = \left(1-\frac{m_rm_c\lambda_1}{Ns}a\right)^2\\
    &B = b_2,
    \end{eqnarray}
\end{subequations}and for each $j\in[N]$,
\begin{equation}
    b_j=1-\frac{2a\lambda_jm_rm_c}{Ns}+a^2\frac{\lambda_j^2m_c(m_c-1)}{N(N-1)}.
\end{equation}As a result, we have
\begin{equation}
    A-B = \frac{2a(\lambda_2-\lambda_1)m_rm_c}{Ns}+\frac{m_r^2m_c^2\lambda_1^2a^2}{N^2s^2}-\frac{\lambda_2^2m_c(m_c-1)a^2}{N(N-1)}\geq \frac{a(\lambda_2-\lambda_1)m_rm_c}{Ns},
\end{equation}for any $a\in(0, \frac{Nm_r(\lambda_2-\lambda_1)}{s\lambda_2^2}]$. In addition, for any $a\in(0,\frac{(\lambda_2-\lambda_1)m_r\epsilon}{6s\norm{H}^2}]$, it follows that $\frac{3m_c\norm{H}^2a^2}{N}\leq \frac{(A-B)\epsilon}{2}$. As a result, the worst-case iteration number scales as
\begin{equation}
    \frac{1}{(A-B)\epsilon}=\mathcal{O}\left( \frac{Ns}{m_rm_ca(\lambda_2-\lambda_1)\epsilon}\right)=\mathcal{O}\left(\frac{Ns^2}{m_r^2m_c\epsilon^2}\right),
\end{equation}since $a\leq \frac{(\lambda_2-\lambda_1)m_r\epsilon}{6s\norm{H}^2}$.

\subsection{Additional numerical result}\label[appsec]{sec: additional results}

\cref{fig:ex2} plots 9 simulation results among the 100 ones in \cref{fig:ex1}, where the two fidelities of the iteration with two degenerate ground states are shown.
\begin{figure}[htbp]
    \centering
    \begin{subfigure}[b]{0.3\textwidth}
    \includegraphics[width=\textwidth]{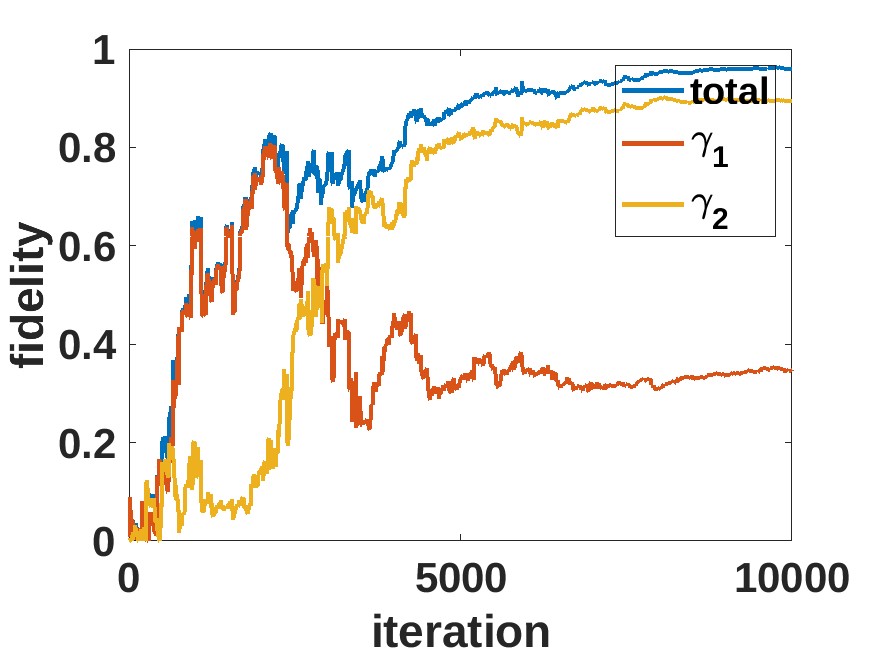}
    \end{subfigure}
    \begin{subfigure}[b]{0.3\textwidth}
    \includegraphics[width=\textwidth]{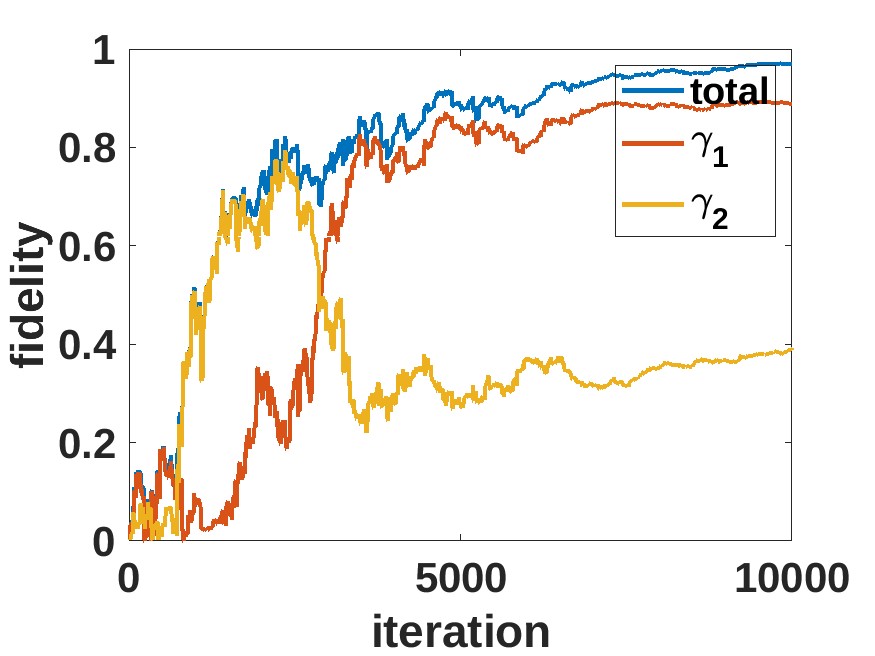}
    \end{subfigure}
    \begin{subfigure}[b]{0.3\textwidth}
    \includegraphics[width=\textwidth]{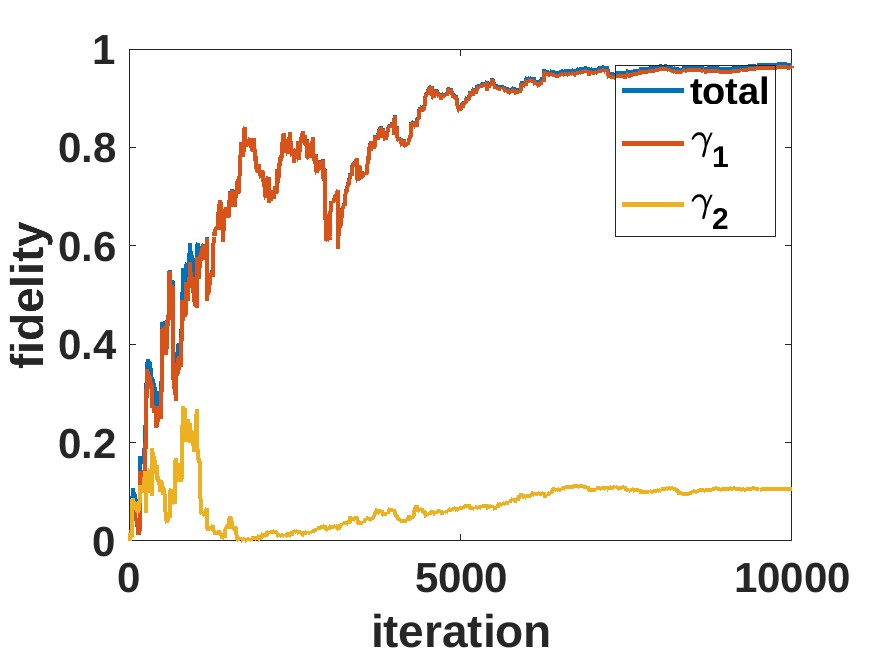}
    \end{subfigure}

    \begin{subfigure}[b]{0.3\textwidth}
    \includegraphics[width=\textwidth]{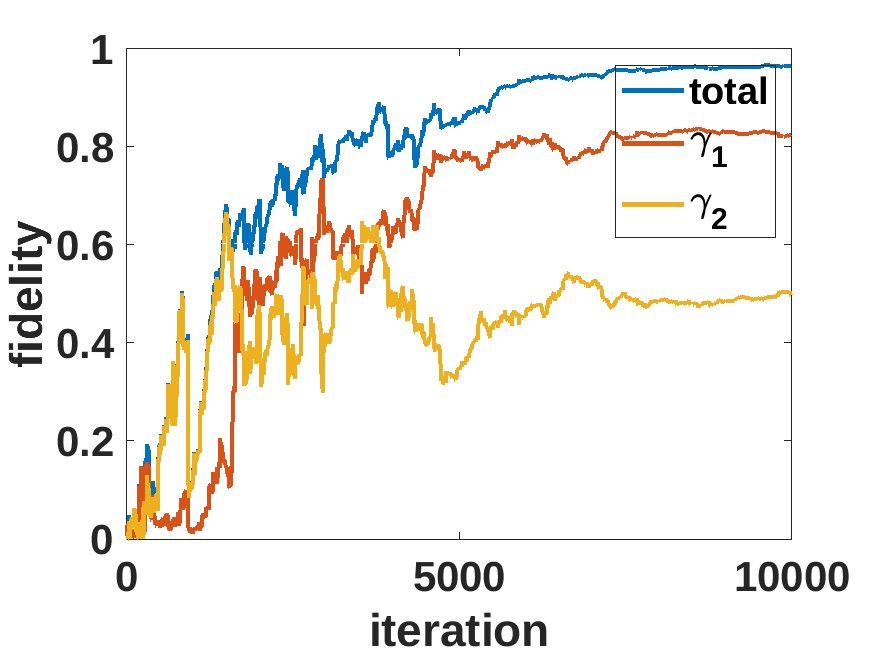}
    \end{subfigure}
    \begin{subfigure}[b]{0.3\textwidth}
    \includegraphics[width=\textwidth]{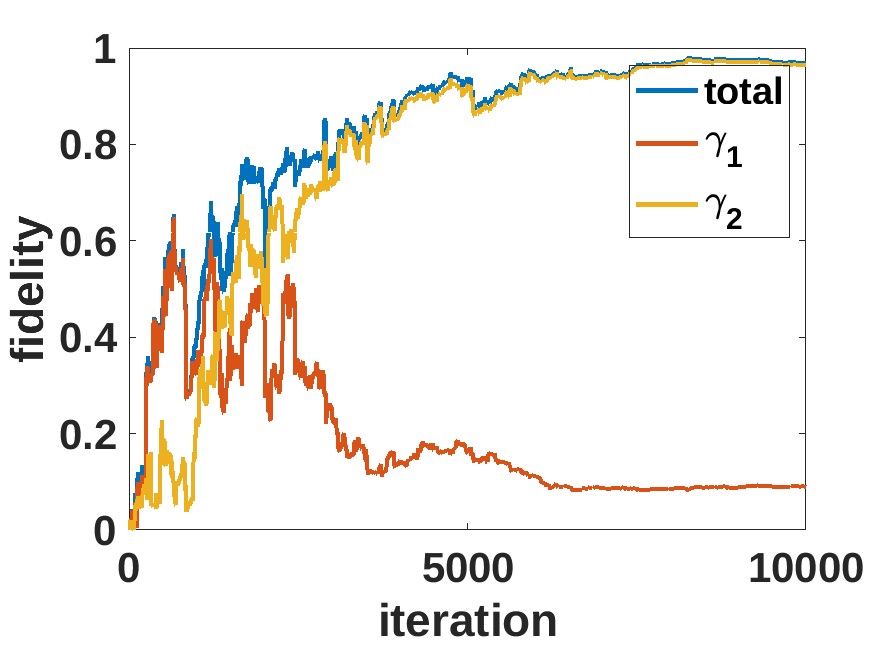}
    \end{subfigure}
    \begin{subfigure}[b]{0.3\textwidth}
    \includegraphics[width=\textwidth]{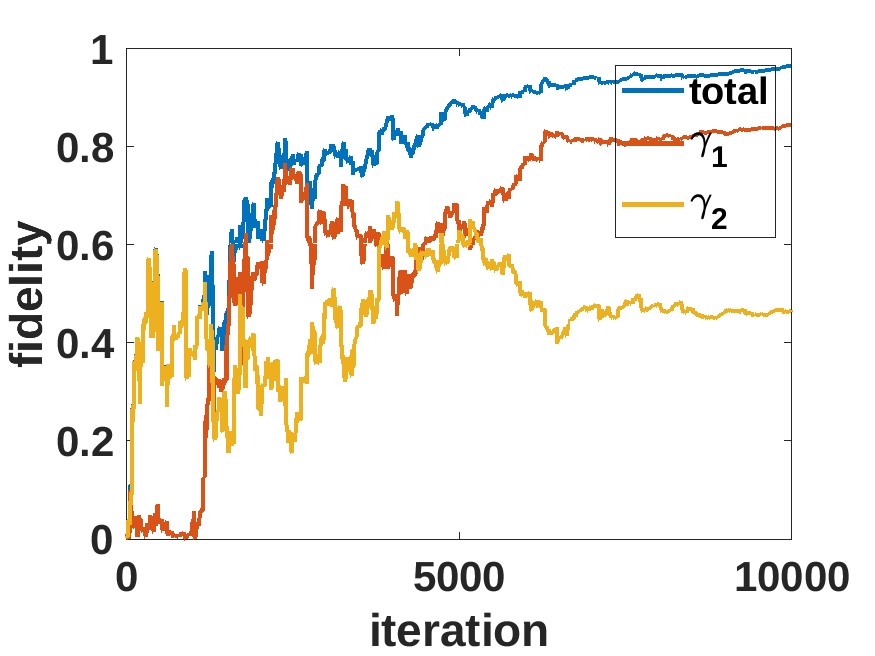}
    \end{subfigure}

    \begin{subfigure}[b]{0.3\textwidth}
    \includegraphics[width=\textwidth]{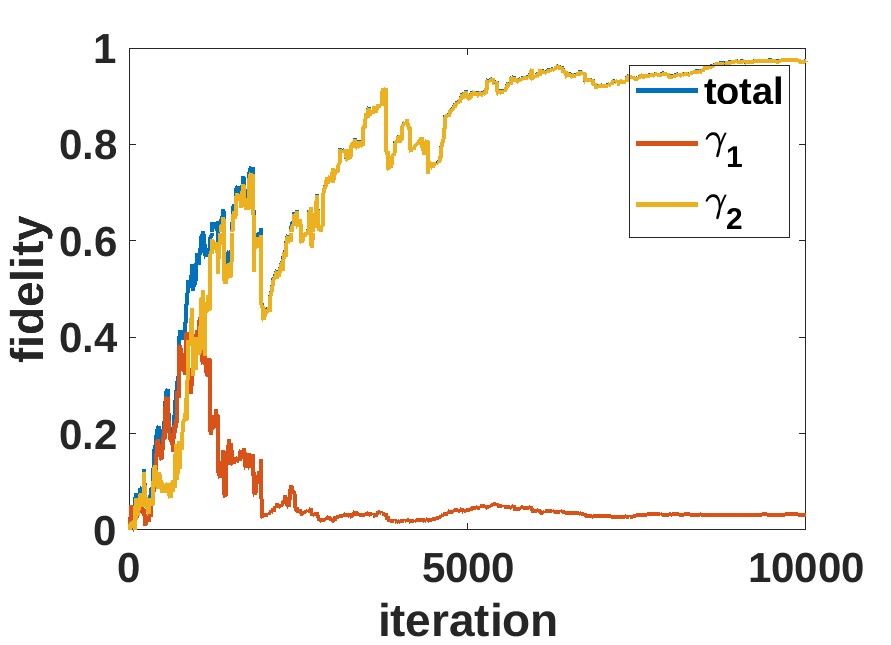}
    \end{subfigure}
    \begin{subfigure}[b]{0.3\textwidth}
    \includegraphics[width=\textwidth]{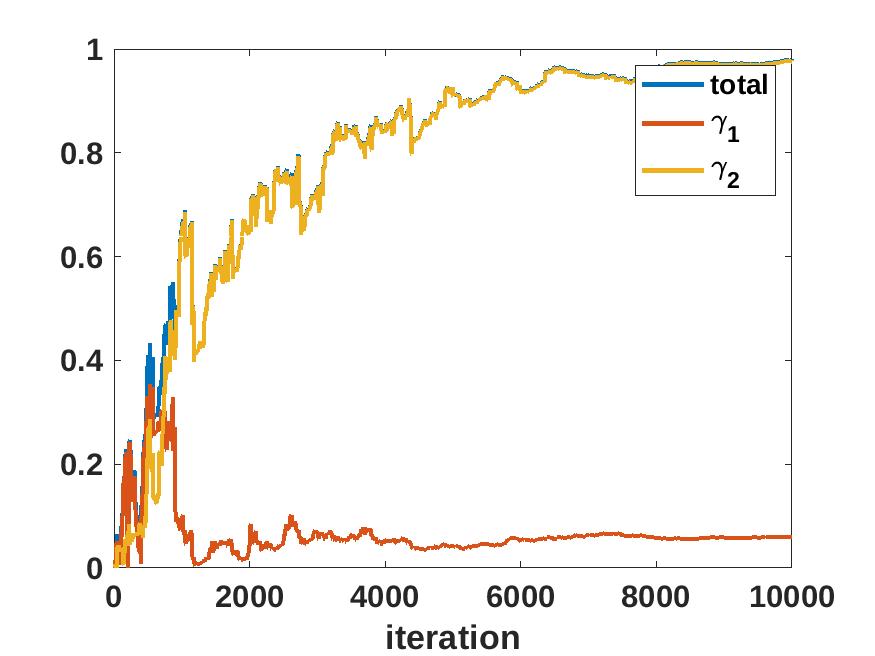}
    \end{subfigure}
    \begin{subfigure}[b]{0.3\textwidth}
    \includegraphics[width=\textwidth]{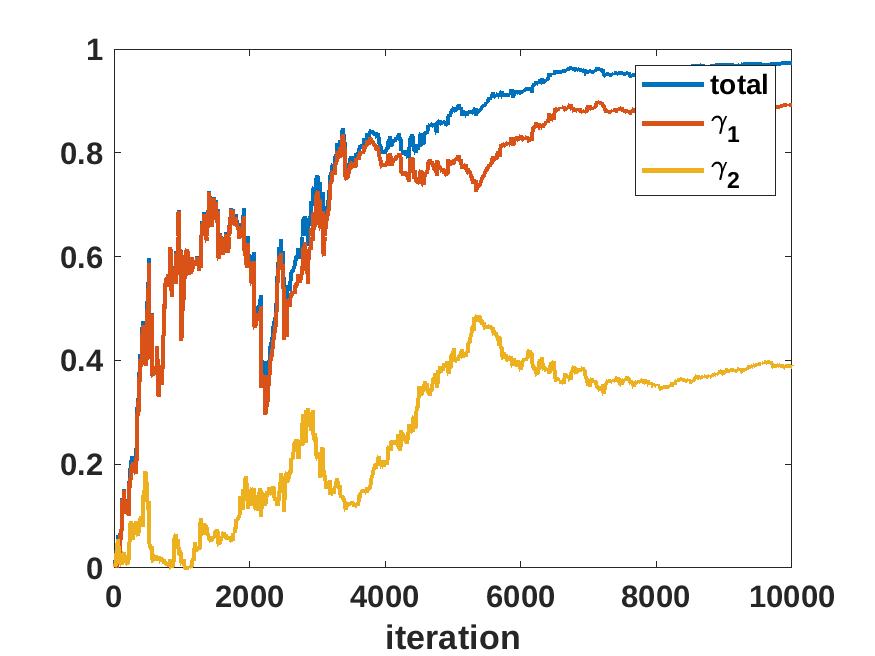}
    \end{subfigure}

    \begin{subfigure}[b]{0.3\textwidth}
    \includegraphics[width=\textwidth]{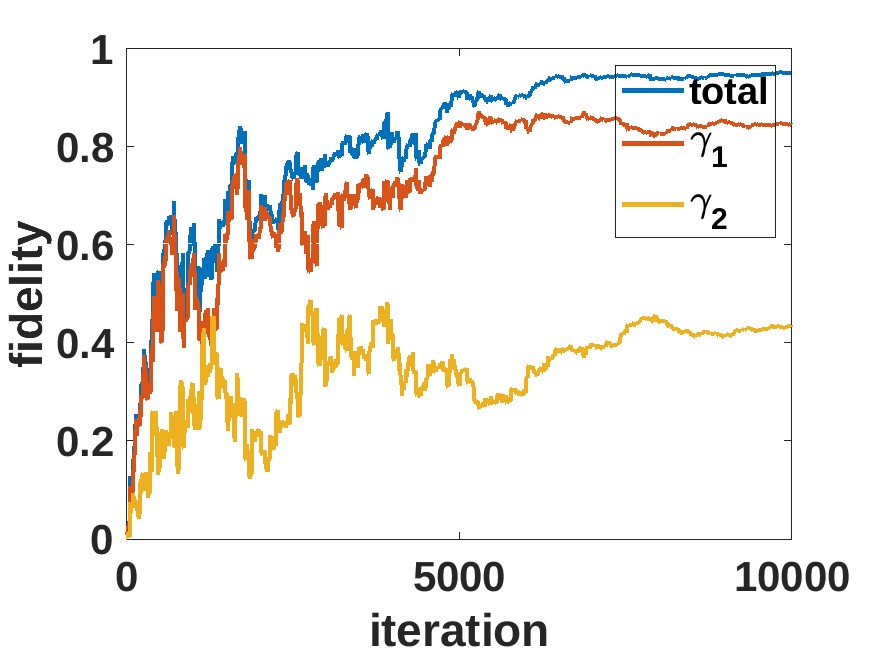}
    \end{subfigure}
    \begin{subfigure}[b]{0.3\textwidth}
    \includegraphics[width=\textwidth]{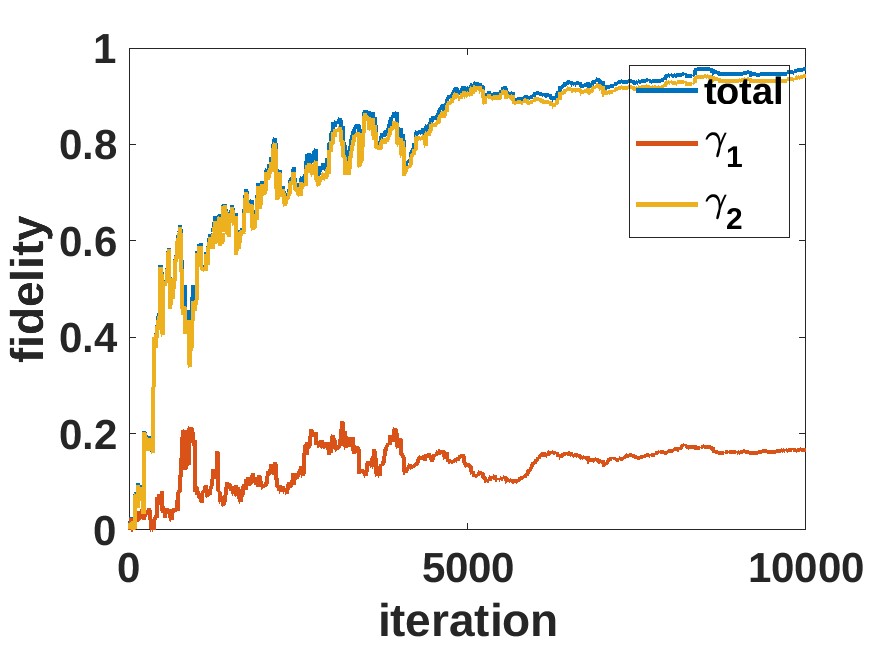}
    \end{subfigure}
    \begin{subfigure}[b]{0.3\textwidth}
    \includegraphics[width=\textwidth]{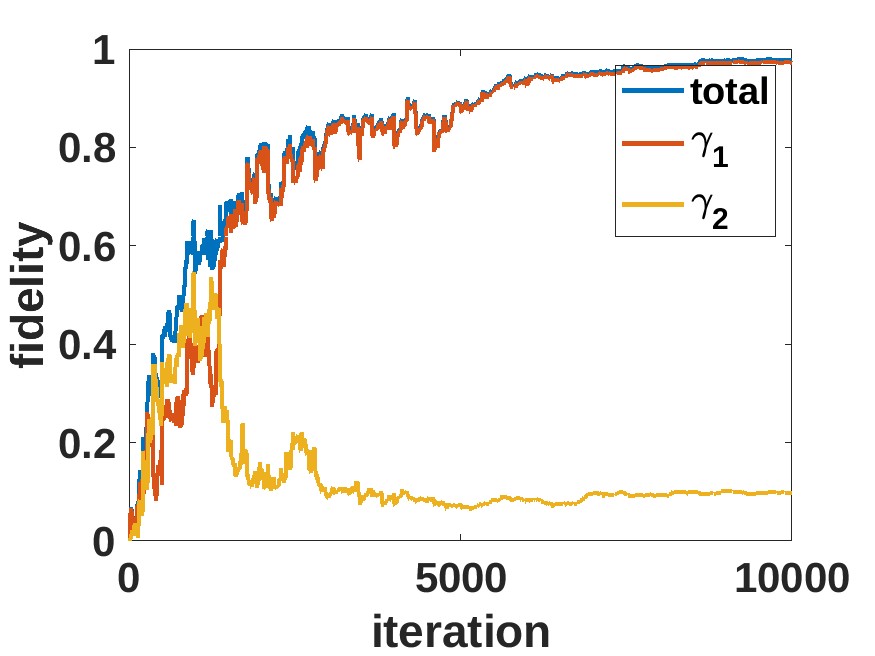}
    \end{subfigure}
    \caption{Convergence of fidelities, $\gamma_1=\abs{\frac{(\psi_{GS,1},\bs x)}{\norm{\bm x}}}$, $\gamma_2=\abs{\frac{(\psi_{GS,2},\bs x)}{\norm{\bm x}}}$, and $\sqrt{\gamma_1^2+\gamma_2^2}$, which is observed from 12 independent simulations among the 100 ones in \cref{fig:ex1}. 
    Results named "total" correspond to the fidelity \eqref{eq: fidelity} of the iteration with the two ground states.}
    \label{fig:ex2}
\end{figure}

\bibliographystyle{plain}
\bibliography{qc}

\end{document}